\mathchardef\shorthyphen="2D
\definecolor{MyCornflowerBlue}{cmyk}{0.58,0.37,0,0.07}
\newcommand{\ExcludeOutInProp}[1]{\textsc{exclude\protect\nobreakdash-out\protect\nobreakdash-in} property#1}
\newcommand{\ExcludeOutIn}[1]{\textsc{exclude\protect\nobreakdash-out\protect\nobreakdash-in}#1}
\newcommand{\SingleLetterProp}[1]{\textsc{single letter} property#1}
\newcommand{\SingleLetter}[1]{\textsc{single letter}#1}
\newcommand{\ConvexityProp}[1]{\textsc{convexity} property#1}
\newcommand{\InflexionFreeProp}[1]{\textsc{no\protect\nobreakdash-inflexion} property#1}
\newcommand{\InflexionFree}[1]{\textsc{no\protect\nobreakdash-inflexion}#1}
\newcommand{\OneInflexionProp}[1]{\textsc{one\protect\nobreakdash-inflexion} property#1}
\newcommand{\OneInflexion}[1]{\textsc{one-inflexion}#1}
\newcommand{\LetterProp}[1]{\textsc{letter} property#1}
\newcommand{\IncompressibleProp}[1]{\textsc{incompressible} property#1}
\newcommand{\Incompressible}[1]{\textsc{incompressible}#1}
\newcommand{\SuffixUnavoidableProp}[1]{\textsc{suffix\protect\nobreakdash-unavoidable}~property#1}
\newcommand{\SuffixUnavoidable}[1]{\textsc{suffix\protect\nobreakdash-unavoidable}#1}
\newcommand{\FactorProp}[1]{\textsc{factor} property#1}
\newcommand{\Factor}[1]{\textsc{factor}#1}
\newcommand{\SumDecompositionProp}[1]{\textsc{sum decomposition} property#1}
\newcommand{\SumDecomposition}[1]{\textsc{sum decomposition}#1}
\newcommand{\SameValueProp}[1]{\textsc{same value} property#1}
\newcommand{\SameValue}[1]{\textsc{same value}#1}
\newcommand{\SinglePositionProp}[1]{\textsc{single position} property#1}
\newcommand{\SinglePosition}[1]{\textsc{single position}#1}
\newcommand{\PositiveProp}[1]{\textsc{positive} property#1}
\newcommand{\Positive}[1]{\textsc{positive}#1}
\newcommand{\SinglePositionInflexionProp}[1]{\textsc{single position inflexion} property#1}
\newcommand{\SinglePositionInflexion}[1]{\textsc{single position inflexion}#1}
\newcommand{\SinglePositionInflexionFreeProp}[1]{\textsc{single position no\protect\nobreakdash-in\-fle\-xion} property#1}
\newcommand{\SinglePositionInflexionFree}[1]{\textsc{single position no\protect\nobreakdash-inflexion}#1}
\newcommand{\pattern}{\sigma}
\newcommand{\after}{a}
\newcommand{\before}{b}
\newcommand{\MathReg}[1]{\textnormal{`} #1\textnormal{'}}
\newcommand{\reg}[1]{\textnormal{`#1'}}
\newcommand{\word}{w}
\newcommand{\len}[1]{|#1|}
\newcommand{\Alphabet}{\Sigma}
\newcommand{\Language}[1]{\ensuremath{\mathcal{L}_{#1}}}
\newcommand{\mir}{\textnormal{mir}}
\newcommand{\shuffle}{\textnormal{shuffle}}
\newcommand{\seqlength}{n}
\newcommand{\windowsize}{m}
\newcommand{\result}{r}
\newcommand{\seq}{x_1 x_2\dots x_\seqlength}
\newcommand{\seqrev}{x_\seqlength x_{\seqlength-1}\dots x_1}
\newcommand{\sequence}{\mathcal{X}}
\newcommand{\signature}{\mathcal{S}}
\newcommand{\low}{\mathit{low}}
\newcommand{\up}{\mathit{up}}
\newcommand{\feature}{f}
\newcommand{\agg}{g}
\newcommand{\One}{\mathtt{one}}
\newcommand{\Width}{\mathtt{width}}
\newcommand{\Surf}{\mathtt{surf}}
\newcommand{\MaxFeature}{\mathtt{max}}
\newcommand{\MinFeature}{\mathtt{min}}
\newcommand{\SumAggr}{\mathtt{Sum}}
\newcommand{\tw}{[i,j]}
\newcommand{\constraint}[1]{\textsc{#1}}
\newcommand{\Tuple}[1]{\left \langle#1 \right \rangle}  
\newcommand{\BumpOnDecreasingSequencePatternName}{\mathtt{BumpOnDecSeq}}
\newcommand{\DipOnIncreasingSequencePatternName}{\mathtt{DipOnIncSeq}}
\newcommand{\DecreasingPatternName}{\mathtt{Dec}}
\newcommand{\IncreasingPatternName}{\mathtt{Inc}}
\newcommand{\DecreasingSequencePatternName}{\mathtt{DecSeq}}
\newcommand{\IncreasingSequencePatternName}{\mathtt{IncSeq}}
\newcommand{\DecreasingTerracePatternName}{\mathtt{DecTerrace}}
\newcommand{\IncreasingTerracePatternName}{\mathtt{IncTerrace}}
\newcommand{\GorgePatternName}{\mathtt{Gorge}}
\newcommand{\InflexionPatternName}{\mathtt{Inflexion}}
\newcommand{\SummitPatternName}{\mathtt{Summit}}
\newcommand{\PeakPatternName}{\mathtt{Peak}}
\newcommand{\ValleyPatternName}{\mathtt{Valley}}
\newcommand{\PlainPatternName}{\mathtt{Plain}}
\newcommand{\PlateauPatternName}{\mathtt{Plateau}}
\newcommand{\ProperPlainPatternName}{\mathtt{ProperPlain}}
\newcommand{\ProperPlateauPatternName}{\mathtt{ProperPlateau}}
\newcommand{\SteadyPatternName}{\mathtt{Steady}}
\newcommand{\SteadySequencePatternName}{\mathtt{SteadySeq}}
\newcommand{\StrictlyDecreasingSequencePatternName}{\mathtt{StrictlyDecSeq}}
\newcommand{\StrictlyIncreasingSequencePatternName}{\mathtt{StrictlyIncSeq}}
\newcommand{\ZigzagPatternName}{\mathtt{Zigzag}}
\newcommand{\BumpOnDecreasingSequencePatternU}{>><>>}
\newcommand{\DipOnIncreasingSequencePatternU}{<<><<}
\newcommand{\DecreasingPatternU}{>}
\newcommand{\DecreasingSequencePatternU}{>(>|=)^*>|>}
\newcommand{\DecreasingTerracePatternU}{>=^+>}
\newcommand{\GorgePatternU}{(>(>|=)^*)^*><((<|=)^*<)^*}
\newcommand{\IncreasingPatternU}{<}
\newcommand{\IncreasingSequencePatternU}{<(<|=)^*<|<}
\newcommand{\IncreasingTerracePatternU}{<=^+<}
\newcommand{\InflexionPatternU}{<(<|=)^*>~|~>(>|=)^*<}
\newcommand{\PeakPatternU}{<(<|=)^*(>|=)^*>}
\newcommand{\PlainPatternU}{>=^*<}
\newcommand{\PlateauPatternU}{<=^*>}
\newcommand{\ProperPlainPatternU}{>=^+<}
\newcommand{\ProperPlateauPatternU}{<=^+>}
\newcommand{\SteadyPatternU}{=}
\newcommand{\SteadySequencePatternU}{=^+}
\newcommand{\StrictlyDecreasingSequencePatternU}{>^+}
\newcommand{\StrictlyIncreasingSequencePatternU}{<^+}
\newcommand{\SummitPatternU}{(<(<|=)^*)^*<>((>|=)^*>)^*}
\newcommand{\ValleyPatternU}{>(>|=)^*(<|=)^*<}
\newcommand{\ZigzagPatternU}{(<>)^+<(>|\epsilon)~|~(><)^+>(<|\epsilon)}
\newcommand{\DecreasingSequencePattern}{\textnormal{`}\DecreasingSequencePatternU\textnormal{'}}
\newcommand{\IncreasingSequencePattern}{\textnormal{`}\IncreasingSequencePatternU\textnormal{'}}
\newcommand{\InflexionPattern}{\textnormal{`}\InflexionPatternU\textnormal{'}}
\newcommand{\PlateauPattern}{\textnormal{`}\PlateauPatternU\textnormal{'}}
\newcommand{\featurePre}{\feature_\pattern(x_{1,j})}
\newcommand{\featureSuf}{\feature_{\pattern^r}(x_{n,i})}
\newcommand{\featureTw}{\feature_\pattern(x_{i,j})}
\newcommand{\featureTotal}{\feature_\pattern(x_{1,n})}
\newcommand{\lPre}{t=\ell+b_\pattern}
\newcommand{\uPre}{\finf-a_\pattern}
\newcommand{\lSuf}{t=\startf+a_{\pattern^r}}
\newcommand{\uSuf}{u-b_{\pattern^r}}
\newcommand{\lTotal}{t=\ell+b_\pattern}
\newcommand{\uTotal}{u-a_\pattern}
\newcommand{\lTw}{t=\startf+b_\pattern}
\newcommand{\uTw}{\finf-a_\pattern}
\newcommand{\ffac}{\mathrm{f}}
\newcommand{\fin}{\mathrm{i}}
\newcommand{\fout}{\mathrm{o}}
\newcommand{\fsuf}{\mathrm{s}}
\newcommand{\fpre}{\mathrm{p}}
\newcommand{\cfac}{\mathrm{fac}}
\newcommand{\cin}{\mathrm{in}}
\newcommand{\cout}{\mathrm{out}}
\newcommand{\csuf}{\mathrm{suf}}
\newcommand{\cpre}{\mathrm{pre}}
\newcommand{\gfac}{\textsc{fac}}
\newcommand{\gin}{\textsc{in}}
\newcommand{\gout}{\textsc{out}}
\newcommand{\gsuf}{\textsc{suf}}
\newcommand{\gpre}{\textsc{pre}}
\newcommand{\gps}{\textsc{ps}}
\newcommand{\afac}{\textsc{f}}
\newcommand{\ain}{\textsc{i}}
\newcommand{\aout}{\textsc{o}}
\newcommand{\asuf}{\textsc{s}}
\newcommand{\apre}{\textsc{p}}
\newcommand{\starto}{\beta}
\newcommand{\fino}{\alpha}
\newcommand{\startf}{\lambda}
\newcommand{\finf}{\psi}
\begin{document}

\author{N. Beldiceanu$^{1}$, M. Carlsson$^{2}$, C.-G. Quimper$^{3}$ \and M. I.~Restrepo$^{1}$}

\title{Classifying Pattern and Feature Properties\\ 
to Get a $\Theta(n)$ Checker and Reformulation\\ 
for Sliding Time-Series Constraints}

\institute{
  $^{1}$TASC (LS2N-CNRS), IMT Atlantique, FR -- 44307 Nantes, France \\
  $^{2}$RISE SICS, Sweden \\
  $^{3}$Laval University, Qu\'ebec, Canada 
}
\maketitle

\begin{abstract}
Given,
a sequence $\sequence$ of $\seqlength$ variables,
a time\nobreakdash-series constraint \constraint{ctr} using the $\SumAggr$ aggregator, and
a sliding time\nobreakdash-series constraint enforcing the constraint \constraint{ctr}
on each sliding window of $\sequence$ of $\windowsize$ consecutive variables,
we describe
a $\Theta(\seqlength)$ time complexity checker, as well as
a $\Theta(\seqlength)$ space complexity reformulation
for such sliding constraint.
\end{abstract}

\section{Introduction}

While sequence constraints on sliding windows were introduced a long time ago for counting and for sum constraints,
e.g. see~\constraint{among\_seq} in~\cite{BeldiceanuContejean94,ReginPuget97,SLIDE} and \constraint{sliding\_sum} in~\cite{Smooth,MaherNarodytskaQuimperWalsh08},
no sliding automaton constraint was yet introduced, even if automaton constraints were known since 2004~\cite{Beldiceanu:automata,Pesant:seqs}.
More recently in the context of planning problems,
constraints on streams were introduced in~\cite{LallouetLawLeeSiu11,LeeLeeZhong18}
for comparing pointwise two stream variables or for stating constraints adapted from
Linear Temporal Logic.
However, in the context of a long sequence or of a data stream~\cite{AlurFismanRaghothaman16},
imposing a constraint on a full sequence does not make much sense,
as we rather want to focus on sliding windows.
Compositional time\nobreakdash-series constraints combining
a regular expression $\pattern$,
a feature $\feature$,
and an aggregator $\agg$ were introduced in~\cite{Beldiceanu:synthesis,Catalog16}.
We first provide an example of sliding time series constraint.

\begin{example}\label{ex:sliding_time_series_ctr_example}
Given a sequence $\sequence=3\hspace*{1pt}1\hspace*{1pt}3\hspace*{1pt}3\hspace*{1pt}2\hspace*{1pt}1\hspace*{1pt}1\hspace*{1pt}2\hspace*{1pt}2\hspace*{1pt}2\hspace*{1pt}4\hspace*{1pt}4\hspace*{1pt}3\hspace*{1pt}1\hspace*{1pt}2\hspace*{1pt}2$, we want to compute the sum of subsequences of $\sequence$ corresponding to increasing sequences, i.e.~to maximal occurrences of the pattern
$\IncreasingSequencePattern$, in every window of size $10$ of $\sequence$.
Such \hspace*{1pt}windows \hspace*{1pt}are \hspace*{1pt}shown \hspace*{1pt}in \hspace*{1pt}the \hspace*{1pt}figure \hspace*{1pt}on \hspace*{1pt}the \hspace*{1pt}right \hspace*{1pt}by \hspace*{1pt}a \hspace*{1pt}dotted \hspace*{1pt}line,
\end{example}
\vspace{-20pt}
\begin{multicols}{2}
\noindent where each solid line-segment indicates an increasing sequence.
The number to the left of each window is the sum of the elements
of the window belonging to an increasing sequence located inside the window.
Beyond this example we want a generic approach to deal with a variety of patterns and features.

\begin{tikzpicture}[scale=0.33]
\node (s1) at ($(1,5)+(0.5,0.3)$) {\tiny{\color{blue}$>$}};
\fill[yellow!10] (2,5) rectangle (3,5.5);
\node (s2) at ($(2,5)+(0.5,0.3)$) {\tiny{\color{red}$<$}};
\node (s3) at ($(3,5)+(0.5,0.3)$) {\tiny{\color{blue}$=$}};
\node (s4) at ($(4,5)+(0.5,0.3)$) {\tiny{\color{blue}$>$}};
\node (s5) at ($(5,5)+(0.5,0.3)$) {\tiny{\color{blue}$>$}};
\node (s6) at ($(6,5)+(0.5,0.3)$) {\tiny{\color{blue}$=$}};
\fill[yellow!10] (7,5) rectangle (8,5.5);
\node (s7) at ($(7,5)+(0.5,0.3)$) {\tiny{\color{red}$<$}};
\fill[yellow!10] (8,5) rectangle (9,5.5);
\node (s8) at ($(8,5)+(0.5,0.3)$) {\tiny{\color{red}$=$}};
\fill[yellow!10] (9,5) rectangle (10,5.5);
\node (s9) at ($(9,5)+(0.5,0.3)$) {\tiny{\color{red}$=$}};
\fill[yellow!10] (10,5) rectangle (11,5.5);
\node (s10) at ($(10,5)+(0.5,0.3)$) {\tiny{\color{red}$<$}};
\node (s11) at ($(11,5)+(0.5,0.3)$) {\tiny{\color{blue}$=$}};
\node (s12) at ($(12,5)+(0.5,0.3)$) {\tiny{\color{blue}$>$}};
\node (s13) at ($(13,5)+(0.5,0.3)$) {\tiny{\color{blue}$>$}};
\fill[yellow!10] (14,5) rectangle (15,5.5);
\node (s14) at ($(14,5)+(0.5,0.3)$) {\tiny{\color{red}$<$}};
\node (s15) at ($(15,5)+(0.5,0.3)$) {\tiny{\color{blue}$=$}};
\draw[color=red,semithick] (2,5) -- (2,5.5);
\draw[color=red,semithick] (3,5) -- (3,5.5);
\draw[color=red,semithick] (7,5) -- (7,5.5);
\draw[color=red,semithick] (11,5) -- (11,5.5);
\draw[color=red,semithick] (14,5) -- (14,5.5);
\draw[color=red,semithick] (15,5) -- (15,5.5);
\fill[fill=yellow!10] (2,1) -- (3,3) -- (3,0) -- (2,0);
\fill[fill=yellow!10] (7,1) -- (8,2) -- (9,2) -- (10,2) -- (11,4) -- (11,0)--(7,0);
\fill[fill=yellow!10] (14,1) -- (15,2) -- (15,0)--(14,0);
\fill[yellow!10] (2,-1) rectangle (3,0);
\fill[yellow!10] (7,-0.5) rectangle (8,0);
\fill[yellow!10] (7,-3.5) rectangle (11,-0.5);
\fill[yellow!10] (14,-3.5) rectangle (15,-2.5);
\draw[step=1cm,gray,very thin] (1,0) grid (16,5);
\draw[draw=blue!30,line width=1.2pt,rounded corners=2pt,densely dotted] (1,3) -- (2,1);
\draw[draw=red,thick,rounded corners=2pt] (2,1) -- (3,3);
\draw[draw=blue!30,line width=1.2pt,rounded corners=2pt,densely dotted] (3,3) -- (4,3) -- (5,2) -- (6,1) -- (7,1);
\draw[draw=red,thick,rounded corners=2pt] (7,1) -- (8,2) -- (9,2) -- (10,2) -- (11,4);
\draw[draw=blue!30,line width=1.2pt,rounded corners=2pt,densely dotted] (11,4) -- (12,4) -- (13,3) -- (14,1);
\draw[draw=red,thick,rounded corners=2pt] (14,1) -- (15,2);
\draw[draw=blue!30,line width=1.2pt,rounded corners=2pt,densely dotted] (15,2) -- (16,2);
\coordinate (c1) at (1,3); \fill[color=blue!80] (c1) circle (0.1);
\node at ($(c1)+(0pt,0)$) [above] {\scriptsize{\color{blue}$\mathbf{3}$}};
\coordinate (c2) at (2,1); \fill[color=red!80] (c2) circle (0.1);
\node at ($(c2)+(0pt,0)$) [below] {\scriptsize{\color{red}$\mathbf{1}$}};
\coordinate (c3) at (3,3); \fill[color=red!80] (c3) circle (0.1);
\node at ($(c3)+(0pt,0)$) [above] {\scriptsize{\color{red}$\mathbf{3}$}};
\coordinate (c4) at (4,3); \fill[color=blue!80] (c4) circle (0.1);
\node at ($(c4)+(0pt,0)$) [above] {\scriptsize{\color{blue}$\mathbf{3}$}};
\coordinate (c5) at (5,2); \fill[color=blue!80] (c5) circle (0.1);
\node at ($(c5)+(3pt,0)$) [above] {\scriptsize{\color{blue}$\mathbf{2}$}};
\coordinate (c6) at (6,1); \fill[color=blue!80] (c6) circle (0.1);
\node at ($(c6)+(0pt,0)$) [below] {\scriptsize{\color{blue}$\mathbf{1}$}};
\coordinate (c7) at (7,1); \fill[color=red!80] (c7) circle (0.1);
\node at ($(c7)+(0pt,0)$) [below] {\scriptsize{\color{red}$\mathbf{1}$}};
\coordinate (c8) at (8,2); \fill[color=red!80] (c8) circle (0.1);
\node at ($(c8)+(0pt,0)$) [above] {\scriptsize{\color{red}$\mathbf{2}$}};
\coordinate (c9) at (9,2); \fill[color=red!80] (c9) circle (0.1);
\node at ($(c9)+(0pt,0)$) [above] {\scriptsize{\color{red}$\mathbf{2}$}};
\coordinate (c10) at (10,2); \fill[color=red!80] (c10) circle (0.1);
\node at ($(c10)+(0pt,0)$) [above] {\scriptsize{\color{red}$\mathbf{2}$}};
\coordinate (c11) at (11,4); \fill[color=red!80] (c11) circle (0.1);
\node at ($(c11)+(0pt,0)$) [above] {\scriptsize{\color{red}$\mathbf{4}$}};
\coordinate (c12) at (12,4); \fill[color=blue!80] (c12) circle (0.1);
\node at ($(c12)+(0pt,0)$) [above] {\scriptsize{\color{blue}$\mathbf{4}$}};
\coordinate (c13) at (13,3); \fill[color=blue!80] (c13) circle (0.1);
\node at ($(c13)+(3pt,0)$) [above] {\scriptsize{\color{blue}$\mathbf{3}$}};
\coordinate (c14) at (14,1); \fill[color=red!80] (c14) circle (0.1);
\node at ($(c14)+(0pt,0)$) [below] {\scriptsize{\color{red}$\mathbf{1}$}};
\coordinate (c15) at (15,2); \fill[color=red!80] (c15) circle (0.1);
\node at ($(c15)+(0pt,0)$) [above] {\scriptsize{\color{red}$\mathbf{2}$}};
\coordinate (c16) at (16,2); \fill[color=blue!80] (c16) circle (0.1);
\node at ($(c16)+(0pt,0)$) [above] {\scriptsize{\color{blue}$\mathbf{2}$}};
\draw[color=blue,line width=0.8pt,|-|,densely dotted] (1,-0.5) -- (10,-0.5) node[pos=-0.05] {\scriptsize{\color{black}$7$}};
\draw[color=red,thick] (2,-0.5) -- (3,-0.5);
\draw[color=red,thick] (7,-0.5) -- (8,-0.5);
\draw[color=blue,line width=0.8pt,|-|,densely dotted] (2,-1.0) -- (11,-1.0) node[pos=-0.055] {\scriptsize{\color{black}$15$}};
\draw[color=red,thick] (2,-1.0) -- (3,-1.0);
\draw[color=red,thick] (7,-1.0) -- (11,-1.0);
\draw[color=blue,line width=0.8pt,|-|,densely dotted] (3,-1.5) -- (12,-1.5) node[pos=-0.055] {\scriptsize{\color{black}$11$}};
\draw[color=red,thick] (7,-1.5) -- (11,-1.5);
\draw[color=blue,line width=0.8pt,|-|,densely dotted] (4,-2.0) -- (13,-2.0) node[pos=-0.055] {\scriptsize{\color{black}$11$}};
\draw[color=red,thick] (7,-2.0) -- (11,-2.0);
\draw[color=blue,line width=0.8pt,|-|,densely dotted] (5,-2.5) -- (14,-2.5) node[pos=-0.055] {\scriptsize{\color{black}$11$}};
\draw[color=red,thick] (7,-2.5) -- (11,-2.5);
\draw[color=blue,line width=0.8pt,|-|,densely dotted] (6,-3.0) -- (15,-3.0) node[pos=-0.055] {\scriptsize{\color{black}$14$}};
\draw[color=red,thick] (7,-3.0) -- (11,-3.0);
\draw[color=red,thick] (14,-3.0) -- (15,-3.0);
\draw[color=blue,line width=0.8pt,|-|,densely dotted] (7,-3.5) -- (16,-3.5) node[pos=-0.055] {\scriptsize{\color{black}$14$}};
\draw[color=red,thick] (7,-3.5) -- (11,-3.5);
\draw[color=red,thick] (14,-3.5) -- (15,-3.5);
\end{tikzpicture}
\end{multicols}

\paragraph{Contributions and methodology}
Our contributions are threefold.
\begin{itemize}
\item
By pursuing the compositional style for defining time\nobreakdash-series constraints~\cite{Beldiceanu:synthesis},
we introduce sliding time\nobreakdash-series constraints, assuming $\agg$ is the $\SumAggr$ aggregator.
This allows one to define a fair variety of sliding constraints in a generic way,
in fact~$99$ constraints in the time\nobreakdash-series catalogue~\cite{Catalog16}.
\item
It provides a $\Theta(\seqlength)$ linear time complexity checker for such constraints,
which is crucial when extracting patterns from long sequences
in the context of model acquisition~\cite{PicardCantinBouchardQuimperSweeney16,Vandrager17}.
\item
It describes a $\Theta(\seqlength)$ linear space complexity reformulation,
which allows a memory efficient reformulation.
\end{itemize}
To obtain our contributions we use the following methodology.
\begin{itemize}
\item
We come up with three simple equations allowing one to compute
the contribution of a window $[i,j]$ (with $i\leq j$) wrt the results
(a)~on the full sequence $\sequence=\seq$,
(b)~on the prefix $x_1 x_2\dots x_j$ which ends at position $j$, and
(c)~on the suffix $x_i x_{i+1}\dots x_\seqlength$ which starts at position $i$.
\item
We study both the properties of regular expressions and features:
\begin{itemize}
\item
We systematically categorise regular expressions by
partitioning their words into a restricted set of classes,
so that each regular expression can be compactly represented
by a finite set of classes.
\item
We identify key pattern and feature properties.
\end{itemize}
For each pair of word classes and feature properties, we prove that a given equation holds
or provide some counterexample.
\item
Finally, we show how equations can be directly turned
into checkers and reformulations. 
\end{itemize}
The categorisation of a regular expression and the identification of the properties
of a pattern are done mechanically by checking that some derived regular languages
are empty or not.

Section~\ref{sec:background} provides the necessary background on words and time-series constraints.
Section~\ref{sec:pattern_prop} introduces a small number of pattern properties, while
Section~\ref{sec:sliding_time_series_constraints}
\emph{(i)}~defines the sliding time-series constraints we consider,
\emph{(ii)}~classifies regular expressions in relation to sliding windows,
\emph{(iii)}~shows how to compute the contribution of a sliding window based on pattern and feature properties, and finally
\emph{(iv)}~presents a $\Theta(\seqlength)$ time complexity checker and a $\Theta(\seqlength)$
space complexity reformulation for such sliding time-series constraints.

\section{Background}\label{sec:background}

\paragraph{Word}
Consider a finite alphabet $\Alphabet$.
A \emph{word} $\word$ over $\Alphabet$
is a sequence of letters $\word_1\word_2\dots\word_\ell$ of the alphabet $\Alphabet$,
and its \emph{length} $\ell$ is denoted by $\len{\word}$.
The \emph{empty word} is denoted by $\epsilon$.
The \emph{reverse} of $\word$ is
the word $\word_\ell\word_{\ell-1}\dots\word_1$ denoted $\word^r$.
The concatenation of two words is denoted by putting them side by side.
A word $v$ is a \emph{factor} of a word $x$ if there exists two words $u$ and $w$ such that $x=uvw$;
when $u=\epsilon$, $v$ is a \emph{prefix} of $x$,
when $w=\epsilon$, $v$ is a \emph{suffix} of $x$.
If $v$ is not empty and different from $x$, then $v$ is a \emph{proper factor} of $x$.

\paragraph{Time-series constraints}
We assume the reader is familiar with regular expressions and automata~\cite{Hopcroft:automata}.
A time-series constraint $\agg\_\feature\_\pattern(\result,\sequence)$ is a constraint which restricts an integer result 
variable $\result$ to be the result of some computations over a sequence of integer variables $\sequence$.
The components of a time-series constraint we reuse from \cite{Beldiceanu:synthesis} are
a \emph{pattern} $\pattern$, a \emph{feature} $\feature$, and an \emph{aggregator} $\agg$.
A pattern $\pattern$ is described by a \emph{regular expression} over the alphabet
$\Alphabet=\{\MathReg{<},\MathReg{=},\MathReg{>}\}$ whose language $\Language{\pattern}$
does not contain the empty word,
and by two non-negative integers $b_\pattern$ and $a_\pattern$, where
$b_\pattern+a_\pattern$ is smaller than or equal to the length of the smallest word of $\Language{\pattern}$.
A feature and an aggregator are functions over integer sequences
as illustrated in~Table~\ref{tab:fgp}. 
Note that all functions $\feature$ and $\agg$ introduced in~Table~\ref{tab:fgp} are commutative. 
Let $\signature = s_1 s_2\dots s_{\seqlength-1}$ be the \emph{signature}
of a time series $\sequence$, which is defined by constraints:
$(x_i<x_{i+1} \Leftrightarrow s_i = \textnormal{`}<\textnormal{'}) \land (x_i=x_{i+1} \Leftrightarrow s_i = \textnormal{`}=\textnormal{'})\land (x_i>x_{i+1} \Leftrightarrow s_i = \textnormal{`}>\textnormal{'})$ for all $i\in [1,\seqlength-1]$.
If a sub\nobreakdash-signature $s_i s_{i+1}\dots s_{j-1}$
is a maximal word matching $\pattern$ in the signature of $\sequence$,
then the subsequence $x_{i+b_\pattern}x_{i+b_\pattern+1}\dots x_{j-a_\pattern}$
is called a \emph{$\pattern$-pattern} wrt~$\sequence$, and
the subsequence $x_{i}x_{i+1}\dots x_{j}$ is called
an \emph{extended $\pattern$-pattern} wrt~$\sequence$.
The non-negative integers $b_\pattern$ and $a_\pattern$ 
trim the left and right borders of an extended
$\pattern$-pattern to obtain a $\pattern$-pattern from which a feature value is computed.

{
\begin{table}[!b]
  \scriptsize
  \setlength{\tabcolsep}{1pt}
  \centering
  \begin{subtable}[b]{0.30\textwidth}
    \centering
    \begin{tabular}{lc}
      $f$                  & $\textnormal{value}$                                              \\ \midrule
      $\One$               & $1$                                                                     \\
      $\Width$             & $j-i-b_\pattern-a_\pattern+1$                                                       \\
      $\Surf$           & $\sum \limits_{k=i+b_\pattern}^{j-a_\pattern} x_k$                                             \\
      $\MaxFeature$        & $\max \limits_{k \in [i+b_\pattern,j-a_\pattern]} x_k$                                  \\
      $\MinFeature$        & $\min \limits_{k \in [i+b_\pattern,j-a_\pattern]} x_k$                                  \\ \\ \\ \\ \\ \\ \\ \\ \\ \\ \\ \\ \vspace{3.5pt}
      $\agg$               & $\textnormal{value}$       \\ \midrule
      $\SumAggr$           & $\sum \limits_{k=1}^c f_k$ \\
    \end{tabular}
  \end{subtable}
  \begin{subtable}[b]{0.69\textwidth}
\setlength{\tabcolsep}{1pt}
    \centering
    \begin{tabular}{lcccccccc}
      $\pattern$ & $\Language{\sigma}$ & $\before_\sigma$ & $\after_\sigma$ & {\sc r} & {\sc n} & {\sc o} & {\sc e} & {\sc s} \\ \midrule
      $\InflexionPatternName$                  & $\InflexionPatternU$                  & $1$ & $1$
                                                                                       & n & n & y & n & n \\
                                                                                \specialrule{0.2pt}{0pt}{0pt}
      $\BumpOnDecreasingSequencePatternName$   & $\BumpOnDecreasingSequencePatternU$   & $2$ & $1$
                                                                                       & n & n & n & n & n \\
      $\DipOnIncreasingSequencePatternName$    & $\DipOnIncreasingSequencePatternU$    & $2$ & $1$
                                                                                       & n & n & n & n & n \\
                                                                                \specialrule{0.2pt}{0pt}{0pt}
      $\DecreasingPatternName$                 & $\DecreasingPatternU$                 & $0$ & $0$
                                                                                       & y & y & n & y & y \\
      $\IncreasingPatternName$                 & $\IncreasingPatternU$                 & $0$ & $0$
                                                                                       & y & y & n & y & y \\
      $\SteadyPatternName$                     & $\SteadyPatternU$                     & $0$ & $0$
                                                                                       & y & y & n & y & y \\
                                                                                \specialrule{0.2pt}{0pt}{0pt}
      $\DecreasingTerracePatternName$          & $\DecreasingTerracePatternU$          & $1$ & $1$
                                                                                       & y & y & n & n & n \\
      $\IncreasingTerracePatternName$          & $\IncreasingTerracePatternU$          & $1$ & $1$
                                                                                       & y & y & n & n & n \\
                                                                                \specialrule{0.2pt}{0pt}{0pt}
      $\PlainPatternName$                      & $\PlainPatternU$                      & $1$ & $1$
                                                                                       & y & n & y & n & n \\
      $\PlateauPatternName$                    & $\PlateauPatternU$                    & $1$ & $1$
                                                                                       & y & n & y & n & n \\
      $\ProperPlainPatternName$                & $\ProperPlainPatternU$                & $1$ & $1$
                                                                                       & y & n & y & n & n \\
      $\ProperPlateauPatternName$              & $\ProperPlateauPatternU$              & $1$ & $1$
                                                                                       & y & n & y & n & n \\
                                                                                \specialrule{0.2pt}{0pt}{0pt}
      $\GorgePatternName$                      & $\GorgePatternU$                      & $1$ & $1$
                                                                                       & y & n & y & n & n \\
      $\SummitPatternName$                     & $\SummitPatternU$                     & $1$ & $1$
                                                                                       & y & n & y & n & n \\
      $\PeakPatternName$                       & $\PeakPatternU$                       & $1$ & $1$
                                                                                       & y & n & y & n & n \\
      $\ValleyPatternName$                     & $\ValleyPatternU$                     & $1$ & $1$
                                                                                       & y & n & y & n & n \\
                                                                                \specialrule{0.2pt}{0pt}{0pt}
      $\DecreasingSequencePatternName$         & $\DecreasingSequencePatternU$         & $0$ & $0$
                                                                                       & y & y & n & y & n \\
      $\IncreasingSequencePatternName$         & $\IncreasingSequencePatternU$         & $0$ & $0$
                                                                                       & y & y & n & y & n \\
      $\SteadySequencePatternName$             & $\SteadySequencePatternU$             & $0$ & $0$
                                                                                       & y & y & n & y & n \\
      $\StrictlyDecreasingSequencePatternName$ & $\StrictlyDecreasingSequencePatternU$ & $0$ & $0$
                                                                                       & y & y & n & y & n \\
      $\StrictlyIncreasingSequencePatternName$ & $\StrictlyIncreasingSequencePatternU$ & $0$ & $0$
                                                                                       & y & y & n & y & n \\
                                                                                \specialrule{0.2pt}{0pt}{0pt}
      $\ZigzagPatternName$                     & $\ZigzagPatternU$                     & $1$ & $1$
                                                                                       & y & n & n & n & n \\
                                                                                \specialrule{0.2pt}{0pt}{0pt}
    \end{tabular}
  \end{subtable}
  \caption{\label{tab:fgp} Consider a sequence $\seq$.
    (Top left)~features $\feature$ with their values computed from an
    extended $\pattern$-pattern $x_i x_{i+1}\dots x_j$;
    (Bottom left)~aggregator $\agg=\SumAggr$, its value
    computed from a sequence of feature values $f_1,f_2,\dots,f_c$;
    (Right)~patterns $\pattern=\langle\Language{\sigma},\before_\sigma,\after_\sigma\rangle$
    grouped by the properties they share, where columns
    {\sc r}, {\sc n}, {\sc o}, {\sc e}, {\sc s} respectively indicate whether a pattern
    has a reverse in the catalogue~\cite{arafailova2016global},
    the \InflexionFree{},
    the \OneInflexion{},
    the \ExcludeOutIn{}, or
    the \SingleLetter{} properties.}
\end{table}

In the following $x_{i,j}$ denotes the integer subsequence $x_i x_{i+1}\dots x_j$ 
when $i\leq j$ and $x_i x_{i-1}\dots x_j$ otherwise. 
The term $\feature_\pattern(x_{i,j})$ denotes the sum of the values of the
feature $\feature$ from every extended $\pattern$\nobreakdash-pattern
in subsequence $x_{i,j}$, i.e.~the contribution of the sliding window $\tw$.

\section{Pattern Properties}\label{sec:pattern_prop}

We introduce a limited number of pattern properties\footnote{Through an abuse of language
and for reasons of brevity we say ``pattern property of $\pattern$'' rather than
``property of the language $\Language{\pattern}$ of the pattern $\pattern$''.}
that will be used to parameterise our proofs:
we will assume that some of these properties hold to prove that a given equation is valid
for calculating the contribution of a sliding window.

\begin{definition}
\label{def:miror}
The \emph{mirror of a regular language} $\mathcal{L}$ over
$\Alphabet =\{\MathReg{<},\MathReg{=},\MathReg{>}\}$, 
denoted by $\mathcal{L}^{\mir}$, consists of the mirrors of all the words in $\mathcal{L}$, 
where the \emph{mirror of a word} $w$, denoted by $w^{\mir}$, has the reverse order of its letters
and has all occurrences of the letter $\reg{<}$ flipped into $\reg{>}$ and vice versa.
\end{definition}

\begin{definition}
\label{def:reversible}
Two patterns $\pattern=\langle \Language{\pattern} , b_\pattern, a_\pattern\rangle$ and $\pattern^r=\langle\Language{\pattern^r} , b_{\pattern^r}, a_{\pattern^r}\rangle$ are the \emph{reverse} of each other iff
$\word\in\Language{\pattern}\Leftrightarrow~ \word^{\mir}\in\Language{\pattern^r}$,
$a_\pattern=b_{\pattern^r}~\text{and}~~b_\pattern=a_{\pattern^r}$.
\end{definition}
As shown by column {\sc r} of the pattern part of Table~\ref{tab:fgp},
$19$ out of the $22$ patterns of the time-series catalogue~\cite{arafailova2016global}
have a reverse pattern defined inside~\cite{arafailova2016global}.

\begin{example}[reverse]
On the one hand,
the $\PlateauPatternName=\langle\PlateauPattern,1,1\rangle$ pattern is the reverse of itself since,
(1)~all letters except the first and last letters of a plateau correspond to the letter $\reg{=}$, 
(2)~the first letter $\reg{<}$ is the mirror of the last letter $\reg{>}$, and 
(3)~$a_\PlateauPatternName=b_{\PlateauPatternName}=1$.
On the other hand,
the $\InflexionPatternName=\langle\InflexionPattern,1,1\rangle$ pattern is not the reverse of itself:
the mirror of the word $\reg{<<>}\in\Language{\InflexionPatternName}$,
i.e.~the word $\reg{<>>}$, is not an inflexion since it ends with two occurrences
of $\reg{>}$ rather than one.
\end{example}

\begin{sloppypar}
\begin{definition}
\label{def:convexity_regexp}
A pattern $\pattern$ has the \ConvexityProp{} if
for any word $\word=s_1 s_2 \dots s_{\seqlength-1}$ in $\Language{\pattern}$ and for any pair of factors
$u=s_c s_{c+1} \dots s_d$ and $v=s_e s_{e+1} \dots s_f$ of $\word$ (with $c,d,e,f\in[1,\seqlength-1]$)
such that,
both $u$ and $v$ are words in $\Language{\pattern}$,
the word $s_{\min(c,e)} s_{\min(c,e)+1} \dots s_{\max(d,f)}$ is also in $\Language{\pattern}$.
\end{definition}
\end{sloppypar}

\begin{sloppypar}
\begin{example}[\ConvexityProp{}]
All patterns of the time series catalogue~\cite{arafailova2016global}
have the \ConvexityProp{},
but the pattern whose language is denoted by $\Language{<=>=|<=|=>}$ has not,
since the word $\reg{<=>=}$ in $\Language{<=>=|<=|=>}$ contains
a factor $\reg{<=>}$ that is not in $\Language{<=>=|<=|=>}$, for which 
both the prefix $\reg{<=}$ and the suffix $\reg{=>}$ belong to $\Language{<=>=|<=|=>}$.
\end{example}
\end{sloppypar}

\begin{definition}
\label{def:inflexion_free}
A pattern $\pattern$ has the \InflexionFreeProp{} if any word in its language $\Language{\pattern}$
does not simultaneously contain the letters \reg{<} and \reg{>}.
\end{definition}

\begin{definition}
\label{def:one_inflexion}
A pattern $\pattern$ has the \OneInflexionProp{} if any word in its language $\Language{\pattern}$ 
contains either one, but not both occurrences of $\reg{<=*>}$ and $\reg{>=*<}$.
\end{definition}

\begin{definition}
\label{def:single_letter}
A pattern $\pattern$ has the \SingleLetterProp{} 
if all words of $\Language{\pattern}$ have a length of one.
\end{definition}

\begin{definition}
\label{def:exclude_out_in}
A pattern $\pattern$ has the \ExcludeOutInProp{} if
for any word $s_1 s_2 \dots s_{\seqlength-1}$ in $\Language{\pattern}$ and for any window $[i,j]$
(with $1\leq i\leq j\leq\seqlength$ and $i>1\lor j<\seqlength$) that does not contain any
word in $\Language{\pattern}$, there is no $s_c s_{c+1} \dots s_d$ in $\Language{\pattern}$
with $c<i\leq d<j \lor i\leq c<j<d$.
\end{definition}

\begin{example}[pattern properties]
\label{ex:pattern_properties}
A `yes' in column {\sc n}, {\sc o}, {\sc e} or {\sc s} of the pattern part of Table~\ref{tab:fgp}
respectively indicates the \InflexionFree{}, the \OneInflexion{},
the \ExcludeOutIn{}, or the \SingleLetter{} property.
\begin{itemize}
\item[$\bullet$]
Ten out of the $19$ reversible patterns of~\cite{arafailova2016global} have the \InflexionFreeProp{}. 
For instance, the pattern $\DecreasingTerracePatternName$ has the \InflexionFreeProp{}
because it does not simultaneously contain the letters \reg{<} and \reg{>}.
\item[$\bullet$]
Nine out of the $19$ reversible patterns of~\cite{arafailova2016global} have the \OneInflexionProp{}.
The pattern $\PlainPatternName$ has the \OneInflexionProp{} because it 
contains an occurrence of $\reg{<=*>}$, but not an occurrence of $\reg{>=*<}$.
\item[$\bullet$]
Eight out of the $19$ reversible patterns of~\cite{arafailova2016global} have the \ExcludeOutInProp{}.
For instance, the pattern $\IncreasingSequencePatternName$ has the \ExcludeOutInProp{}
because any subword $\word$ of an increasing sequence such that $\word\notin\Language{\IncreasingSequencePatternName}$
cannot be the start or the end of an increasing sequence, since $\word$ is of the form $\reg{==*}$,
i.e.~does not start or end with a $\reg{<}$.
\item[$\bullet$]
$\DecreasingPatternName$, $\IncreasingPatternName$ and $\SteadyPatternName$ have the \SingleLetterProp{}.
\end{itemize}
\end{example}

\section{Sliding Time-Series Constraints}\label{sec:sliding_time_series_constraints}
We introduce the sliding time\nobreakdash-series constraint we consider.
\begin{sloppypar}
\begin{definition}\label{def:slide_time_series_ctr}
Given
a feature $\feature$,
a regular expression $\pattern$,
an integer $\windowsize>1$,
two variables $\low$ and $\up$, and
a sequence of variables $\sequence=\seq$ with $\seqlength\geq\windowsize$,
the \constraint{slide\_sum\_}$\feature$\constraint{\_}$\pattern(\windowsize,\low,\up,\sequence)$
constraint holds iff
\vspace{-5pt}
\begin{multicols}{2}
\noindent\begin{align}\label{def:slide_time_series_ctr_min}
\low=\min_{i\in[1,\seqlength-\windowsize+1]}\result_i,
\end{align}
\noindent\begin{align}\label{def:slide_time_series_ctr_max}
\up=\max_{i\in[1,\seqlength-\windowsize+1]}\result_i,
\end{align}
\end{multicols}
\vspace{-5pt}
\noindent with\hspace*{6pt}$\constraint{sum\_}f\constraint{\_}\pattern\left(\result_i,\hspace*{3pt}x_{i,i+\windowsize-1}\right)$, where $r_i$ is called the \emph{contribution} of the time\nobreakdash-series constraint $\constraint{sum\_}f\constraint{\_}\pattern$ in the window $[i,i+\windowsize-1]$.
\end{definition}
\end{sloppypar}
Cond.~(\ref{def:slide_time_series_ctr_min}), (resp.~(\ref{def:slide_time_series_ctr_max})),
of Def.~\ref{def:slide_time_series_ctr} enforces $\low$ (resp.~$\up$) to be the minimum (resp.~maximum) of the sum of the feature values of feature $\feature$ wrt all maximal occurrences of $\pattern$ in each subsequence of $\windowsize$ consecutive variables of sequence $\sequence$.

\begin{example}[Continuation of Example (\ref{ex:sliding_time_series_ctr_example})]
\label{ex:sliding_time_series_ctr_example_continuated}
Given the pattern $\IncreasingSequencePatternName$ and the feature $\Surf$,
\constraint{slide\_sum\_surf\_incseq} $(10,7,15,3\hspace*{1pt}1\hspace*{1pt}3\hspace*{1pt}3\hspace*{1pt}2\hspace*{1pt}1\hspace*{1pt}1\hspace*{1pt}2\hspace*{1pt}2\hspace*{1pt}2\hspace*{1pt}4\hspace*{1pt}4\hspace*{1pt}3\hspace*{1pt}1\hspace*{1pt}2\hspace*{1pt}2)$
is satisfied because the sum of the surfaces of the increasing sequences
in the different sliding windows of size $10$ is between $7$ and $15$ as
shown in Example~\ref{ex:sliding_time_series_ctr_example}.
\end{example}

\subsection{Computing the Contribution in a Window}

\begin{sloppypar}
In this section,
we consider the patterns $\pattern$ and $\pattern^r$ 
which are the reverse of each other,
a feature $\feature$, 
an integer sequence $x_1 x_2 \dots x_\seqlength$,
and all windows $x_i x_{i+1} \dots x_{i+\windowsize-1}$ of size $\windowsize$
(with $i\in[1,\seqlength-\windowsize+1]$).
We investigate how to evaluate directly from an equation the sum of the
feature values of feature $\feature$
of all pattern occurrences located in a window $[i,j=i+\windowsize-1]$,
assuming all the elements of the right\nobreakdash-hand side of an equation have been previously calculated
in time proportional to $\seqlength$.
As we have several features and several patterns,
we use three equations all derived from the same simple idea,
for which we first present the intuition. Then we define 
sufficient properties of features and patterns that ensure the validity
of each of the three equations~(\ref{formula1}), (\ref{formula2}) and (\ref{formula3}).
At the end of this section, Table~\ref{table:summary}
provides an overview of the validity of each of the three equations
according to properties of the patterns and features.
\end{sloppypar}

\paragraph{Intuition}
Assume we want to deal with the following simplified problem:
given an integer sequence $x_1 x_2 \dots x_\seqlength$,
compute for all subsequences of $\windowsize$ consecutive positions the sum $t_{i,j}=\Sigma_{k\in[i,j]}x_k$ 
(with $j=i+\windowsize-1$) of the corresponding elements in time $O(\seqlength)$.
This can be done by first computing the partial sums $\Sigma_{c\in[1,k]}x_c$ (with $k\in[1,\seqlength]$),
and $\Sigma_{c\in[k,\seqlength]}x_c$ (with $k\in[1,\seqlength]$) and by using the identity
\begin{equation}\label{formula0}
t_{i,j}=\Sigma_{k\in[1,j]}x_k+\Sigma_{k\in[i,\seqlength]}x_k-\Sigma_{k\in[1,\seqlength]}x_k~\text{.}
\end{equation}

Equations~(\ref{formula1}), (\ref{formula2}) and (\ref{formula3})
present three alternative ways to compute $f_\sigma(x_{i,j})$
inspired by Equation~(\ref{formula0}).
\begin{equation}\label{formula1}
\feature_\pattern(x_{i,j})=\feature_\pattern(x_{1,j})+\feature_{\pattern^r}(x_{\seqlength,i})-\feature_\pattern(x_{1,\seqlength})
\end{equation}
\begin{equation}\label{formula2}
\feature_\pattern(x_{i,j})=\max\left(0,\feature_\pattern(x_{1,j})+\feature_{\pattern^r}(x_{\seqlength,i})-\feature_\pattern(x_{1,\seqlength})\right)
\end{equation}
\begin{equation}\label{formula3}
\begin{aligned}
\text{if no }\sigma\text{-pattern in}~x_{i,j}~&\text{then}~\feature_\pattern(x_{i,j})=0\\
&\text{else}\hspace*{3pt}~\feature_\pattern(x_{i,j})=\feature_\pattern(x_{1,j})+\feature_{\pattern^r}(x_{\seqlength,i})-\feature_\pattern(x_{1,\seqlength})
\end{aligned}
\end{equation}

Depending on the properties of the pattern $\pattern$ and of the feature $\feature$,
we investigate the cases when Equations~(\ref{formula1}), (\ref{formula2}) and (\ref{formula3}) are valid.
\begin{example}
Consider the $\DecreasingSequencePatternName$ pattern of Table~\ref{tab:fgp},
the sequence $\word=2\hspace*{1pt}1\hspace*{1pt}1\hspace*{1pt}1\hspace*{1pt}0$,
the window size $\windowsize=2$, i.e.~the four sliding windows
$2\hspace*{1pt}1$, $1\hspace*{1pt}1$, $1\hspace*{1pt}1$ and $1\hspace*{1pt}0$.
\begin{itemize}
\item[$\bullet$]
Equation~(\ref{formula1}) provides the incorrect $\Surf$ feature value
for two of the four sliding windows,
namely values $3$, $-1$, $-1$ and $1$ rather than the expected values $3$, $0$, $0$ and $1$.
For the second window, shown in grey on the figure on the right,
this is because there is a non\nobreakdash-empty gap (shown in red) between
the leftmost and rightmost decreasing sequences in $\word$.
Equation~(\ref{formula2})\\ gives the correct value since it cancels
out the contribution of the gap.
\begin{tikzpicture}[remember picture,overlay]
\node[anchor=south west] at ($(current page.north east)-(5.6cm,21.9cm)$) {
\begin{tikzpicture}[scale=1,remember picture,overlay]
\begin{scope}
\fill[gray!50] (0.28,-0.1) rectangle (0.55,0.1);
\node[anchor=west] (seq) at (0,0) {\scriptsize$2 1 {\color{red}1} 1 0$};
\draw[line width=0.5pt] (0.1,0.05) -- (0.1,0.15) -- (0.85,0.15) -- (0.85,0.05);
\draw[line width=0.5pt] (0.1,-0.05) -- (0.1,-0.15) -- (0.4,-0.15) -- (0.4,-0.05);
\draw[line width=0.5pt] (0.55,-0.05) -- (0.55,-0.15) -- (0.85,-0.15) -- (0.85,-0.05);
\node[anchor=west] (pre) at (-0.05,-0.3) {\scriptsize$+3$};
\node[anchor=west] (suf) at (0.4,-0.3) {\scriptsize$+1$};
\node[anchor=west] (word) at (0.1,0.3) {\scriptsize$-5$};
\end{scope}
\end{tikzpicture}
};
\end{tikzpicture}
\item[$\bullet$]
While Equations~(\ref{formula1}) and~(\ref{formula2}) give the incorrect $\MinFeature$ feature value
for two of the four sliding windows,
namely values $1$, $1$, $1$ and $0$ rather than values $1$, $0$, $0$ and $0$,
Equation~(\ref{formula3}) provides the correct values.
\end{itemize}
\end{example}

\paragraph{Case Analysis}

{
\setlength{\tabcolsep}{12pt}
\begin{table}[!h]
\centering
\begin{tabular}{ccc}\toprule
\footnotesize case & \footnotesize condition                                                    & \footnotesize illustration \\ \midrule
\scriptsize(1)  & $u<i$                                                        &
\begin{tikzpicture}[baseline=(current bounding box.center),scale=0.9]
\draw[->] (0,0) -- (3,0);
\draw (1,0) -- (1,0.1);
\draw (2,0) -- (2,0.1);
\node[anchor=south] (1) at (1.05,-0.4) {$i$};
\node[anchor=south] (2) at (1.95,-0.45) {$j$};
\filldraw[fill=gray!50,draw=black] (0.2,0) rectangle (0.8,0.2);
\node[anchor=south] (3) at (0.25,-0.4) {$\ell$};
\node[anchor=south] (4) at (0.75,-0.4) {$u$};
\end{tikzpicture}
\\
\scriptsize(2)  & $\ell>j$                                                     &
\begin{tikzpicture}[baseline=(current bounding box.center),scale=0.9]
\draw[->] (0,0) -- (3,0);
\draw (1,0) -- (1,0.1);
\draw (2,0) -- (2,0.1);
\node[anchor=south] (1) at (1.05,-0.4) {$i$};
\node[anchor=south] (2) at (1.95,-0.45) {$j$};
\filldraw[fill=gray!50,draw=black] (2.2,0) rectangle (2.8,0.2);
\node[anchor=south] (3) at (2.25,-0.4) {$\ell$};
\node[anchor=south] (4) at (2.75,-0.4) {$u$};
\end{tikzpicture}
\\
\scriptsize(3)  & $i\leq\ell\leq u\leq j$                                      &
\begin{tikzpicture}[baseline=(current bounding box.center),scale=0.9]
\draw[->] (0,0) -- (3,0);
\draw (1,0) -- (1,0.1);
\draw (2,0) -- (2,0.1);
\node[anchor=south] (1) at (1.05,-0.4) {$i$};
\node[anchor=south] (2) at (1.95,-0.45) {$j$};
\filldraw[fill=gray!50,draw=black] (1.2,0) rectangle (1.8,0.2);
\node[anchor=south] (3) at (1.25,-0.4) {$\ell$};
\node[anchor=south] (4) at (1.75,-0.4) {$u$};
\end{tikzpicture}
\\
\scriptsize(4)  & $\ell<i\leq u\leq j\land p(s_{i,u-1})\in\Language{\pattern}$    &
\begin{tikzpicture}[baseline=(current bounding box.center),scale=0.9]
\filldraw[fill=gray!50,draw=black] (0.7,0) rectangle (1.8,0.2);
\fill[gray!20] (1,0) rectangle (1.5,0.2);
\draw[draw=black] (0.7,0) rectangle (1.8,0.2);
\draw[->] (0,0) -- (3,0);
\draw (1,0) -- (1,0.1);
\draw (2,0) -- (2,0.1);
\draw (1.5,0) -- (1.5,0.1);
\node[anchor=south] (1) at (1.05,-0.4) {$i$};
\node[anchor=south] (2) at (1.95,-0.45) {$j$};
\node[anchor=south] (3) at (0.65,-0.4) {$\ell$};
\node[anchor=south] (4) at (1.75,-0.4) {$u$};
\node[anchor=south] (5) at (1.5,-0.4) {$\alpha$};
\end{tikzpicture}
\\
\scriptsize(5)  & $\ell<i\leq u< j\land p(s_{i,u-1})\notin\Language{\pattern}$ &
\begin{tikzpicture}[baseline=(current bounding box.center),scale=0.9]
\filldraw[fill=gray!50,draw=black] (0.7,0) rectangle (1.8,0.2);
\draw[->] (0,0) -- (3,0);
\draw (1,0) -- (1,0.1);
\draw (2,0) -- (2,0.1);
\node[anchor=south] (1) at (1.05,-0.4) {$i$};
\node[anchor=south] (2) at (1.95,-0.45) {$j$};
\node[anchor=south] (3) at (0.65,-0.4) {$\ell$};
\node[anchor=south] (4) at (1.75,-0.4) {$u$};
\end{tikzpicture}
\\
\scriptsize(6)  & $i\leq\ell\leq j<u\land s(s_{\ell,j-1})\in\Language{\pattern}$     &
\begin{tikzpicture}[baseline=(current bounding box.center),scale=0.9]
\filldraw[fill=gray!50,draw=black] (1.2,0) rectangle (2.3,0.2);
\fill[gray!20] (1.5,0) rectangle (2,0.2);
\draw[draw=black] (1.2,0) rectangle (2.3,0.2);
\draw[->] (0,0) -- (3,0);
\draw (1,0) -- (1,0.1);
\draw (2,0) -- (2,0.1);
\draw (1.5,0) -- (1.5,0.1);
\node[anchor=south] (1) at (1.05,-0.4) {$i$};
\node[anchor=south] (2) at (1.95,-0.45) {$j$};
\node[anchor=south] (3) at (1.25,-0.4) {$\ell$};
\node[anchor=south] (4) at (2.25,-0.4) {$u$};
\node[anchor=south] (5) at (1.5,-0.45) {$\beta$};
\end{tikzpicture}
\\
\scriptsize(7)  & $i<\ell\leq j<u\land s(s_{\ell,j-1})\notin\Language{\pattern}$  &
\begin{tikzpicture}[baseline=(current bounding box.center),scale=0.9]
\filldraw[fill=gray!50,draw=black] (1.2,0) rectangle (2.3,0.2);
\draw[->] (0,0) -- (3,0);
\draw (1,0) -- (1,0.1);
\draw (2,0) -- (2,0.1);
\node[anchor=south] (1) at (1.05,-0.4) {$i$};
\node[anchor=south] (2) at (1.95,-0.45) {$j$};
\node[anchor=south] (3) at (1.25,-0.4) {$\ell$};
\node[anchor=south] (4) at (2.25,-0.4) {$u$};
\end{tikzpicture}
\\
\scriptsize(8)  & $(\ell\leq i\leq j\leq u) \land (\ell\neq i \lor  j \neq u)$ &
\begin{tikzpicture}[baseline=(current bounding box.center),scale=0.9]
\filldraw[fill=gray!50,draw=black] (0.7,0) rectangle (2.3,0.2);
\draw[->] (0,0) -- (3,0);
\draw (1,0) -- (1,0.1);
\draw (2,0) -- (2,0.1);
\node[anchor=south] (1) at (1.05,-0.4) {$i$};
\node[anchor=south] (2) at (1.95,-0.45) {$j$};
\node[anchor=south] (3) at (0.75,-0.4) {$\ell$};
\node[anchor=south] (4) at (2.25,-0.4) {$u$};
\end{tikzpicture}
\\
\bottomrule
\end{tabular}
\caption{\label{table:cases}{Positioning an occurrence of a pattern wrt a window;
within cases~(4) and~(6) the non-empty words $p(s_{i,u-1})$ and $s(s_{\ell,j-1})$
are shown in light grey.}}
\end{table}
}

\begin{sloppypar}
Consider
a sequence $\seq$,
a window $[i,j]$, and
a maximal occurrence of pattern $o$
whose signature is $s_\ell s_{\ell+1}\dots s_{u-1}$ (with $1\leq\ell\leq u\leq n$).
Table~\ref{table:cases} provides eight cases summarising all the possible positioning
of $x_{\ell,u}$ wrt $[i,j]$, where $p(s_{i,u-1})$ (resp.~$s(s_{\ell,j-1})$) denotes
the longest prefix $s_i s_{i+1}\dots s_{\alpha-1}$ of $s_i s_{i+1}\dots s_{u-1}$
(resp.~the longest suffix $s_\beta s_{\beta+1}\dots s_{j-1}$ of $s_\ell s_{\ell+1}\dots s_{j-1}$)
in $\Language{\pattern}$ if such word
exists, the empty word otherwise.
For cases~(1\nobreakdash--7) of Table~\ref{table:cases},
columns $\feature_\pattern(x_{1,j})$ (resp.~$\feature_{\pattern^r}(x_{\seqlength,i})$)
and $\feature_\pattern(x_{1,n})$ of Table~\ref{table:contribution}
provide the feature value of the $\pattern$-pattern occurrence $o$ 
(resp.~$\pattern^r$-pattern occurrence $o^r$)
wrt~$x_1 x_2\dots x_j$ (resp.~$x_\seqlength x_{\seqlength-1}\dots x_i$) and~$\seq$;
the last three columns give the contribution of $o$ in the right-hand side of Equations (\ref{formula1}), (\ref{formula2}), (\ref{formula3}). These contributions agree with the positioning of $x_{\ell,u}$
wrt $[i,j]$,
except for the three grey cells, which only work for non-negative feature values.
Case~(8) of Table~\ref{table:cases} corresponds to a maximal occurrence of pattern $o$ whose signature starts before $i$ and ends after $j$.
To study Case~(8), the next section classifies a pattern wrt a window.
\end{sloppypar}

{
\setlength{\tabcolsep}{4pt}
\begin{table}[!b]
\centering
\begin{tabular}{ccccccc}\toprule
\footnotesize case  & $\feature_\pattern(x_{1,j})$   & $\feature_{\pattern^r}(x_{n,i})$   & $\feature_\pattern(x_{1,n})$ & \footnotesize Eq.~(\ref{formula1}) & \footnotesize Eq.~(\ref{formula2}) & \footnotesize Eq.~(\ref{formula3}) \\\midrule
\scriptsize (1)  & $\feature_\pattern(x_{\ell,u})$         & $0$            & $\feature_\pattern(x_{\ell,u})$    & $0$              & $0$ & $0$ \\
\scriptsize (2)  & $0$            & $\feature_{\pattern^r}(x_{u,\ell})$         & $\feature_\pattern(x_{\ell,u})$    & $0$              & $0$ & $0$ \\
\scriptsize (3)  & $\feature_\pattern(x_{\ell,u})$         & $\feature_{\pattern^r}(x_{u,\ell})$         & $\feature_\pattern(x_{\ell,u})$    & $\feature_{\pattern^r}(x_{u,\ell})$           & \cellcolor{gray!25}$\max(0,\feature_{\pattern^r}(x_{u,\ell}))$ & $\feature_{\pattern^r}(x_{u,\ell})$ \\
\scriptsize (4)  & $\feature_\pattern(x_{\ell,u})$         & $\feature_{\pattern^r}(x_{\fino,i})$ & $\feature_\pattern(x_{\ell,u})$    & $\feature_{\pattern^r}(x_{\fino,i})$   & \cellcolor{gray!25}$\max(0,\feature_{\pattern^r}(x_{\fino,i}))$ & $\feature_{\pattern^r}(x_{\fino,i})$ \\
\scriptsize (5)  & $\feature_\pattern(x_{\ell,u})$         & $0$            & $\feature_\pattern(x_{\ell,u})$    & $0$              & $0$ & $0$ \\
\scriptsize (6)  & $\feature_\pattern(x_{\starto,j})$ & $\feature_{\pattern^r}(x_{u,\ell})$         & $\feature_\pattern(x_{\ell,u})$    & $\feature_\pattern(x_{\starto,j})$   & \cellcolor{gray!25}$\max(0,\feature_\pattern(x_{\starto,j}))$ & $\feature_\pattern(x_{\starto,j})$ \\
\scriptsize (7)  & $0$            & $\feature_{\pattern^r}(x_{u,\ell})$         & $\feature_\pattern(x_{\ell,u})$    & $0$              & $0$ & $0$ \\
\bottomrule
\end{tabular}
\caption{\label{table:contribution}{[\textsc{columns 2 to 4}]~values of $\feature_\pattern(x_{1,j})$, $\feature_{\pattern^r}(x_{n,i})$ and $\feature_\pattern(x_{1,n})$ wrt cases~(1$-$7) of Table~\ref{table:cases};
[\textsc{columns 5 to 7}]~contribution of an occurrence of $\pattern$ in a window wrt the right-hand side of Equations~(\ref{formula1}), (\ref{formula2}) and (\ref{formula3}).}
}
\end{table}
}

\subsubsection{A Systematic Classification of Patterns wrt Windows}\label{sec:classify_pattern}

\begin{definition}
\label{def:word_status}\emph{[type of a word wrt a pattern]}
Given a pattern $\pattern$,
the \emph{type} of a proper factor $\word=\word_1\word_2\dots\word_k$ of a word in $\Language{\pattern}$
wrt $\pattern$ is defined by five mutually incompatible conditions:

{
\begin{center}
\begin{minipage}{10cm}
\begin{itemize}
\item[$\bullet$]
$\cout~\text{if}\hspace*{7pt}~\nexists c,d:\hspace*{0.1pt}~1\leq c\leq d\leq k\hspace*{0.4pt}~\land~\word_c\word_{c+1}\dots\word_d\in\Language{\pattern}$\hspace*{13pt}
\item[$\bullet$]
$\cfac$ if
$\begin{cases}
\exists c,d:~1\leq c\leq d\leq k~\land~\word_c\word_{c+1}\dots\word_d$\hspace*{3.4pt}$\in\Language{\pattern} \\
\nexists d:\hspace*{9pt}~1\leq d\leq k\hspace*{17.3pt}~\land~\word_1\word_2\dots\word_d$\hspace*{12.5pt}$\in\Language{\pattern}\\
\nexists c:\hspace*{10pt}~1\leq c\leq k\hspace*{18pt}~\land~\word_c\word_{c+1}\dots\word_k$ $\in\Language{\pattern}
\end{cases}$
\item[$\bullet$]
$\cpre$ if
$\begin{cases}
\exists d:\hspace*{8.5pt}~1\leq d\leq k\hspace*{17.3pt}~\land~\word_1\word_2\dots\word_d$\hspace*{13pt}$\in\Language{\pattern} \\
\nexists c:\hspace*{9.5pt}~1\leq c\leq k\hspace*{18pt}~\land~\word_c\word_{c+1}\dots\word_k$ $\in\Language{\pattern}
\end{cases}$
\item[$\bullet$]
$\csuf$ if
$\begin{cases}
\exists c:\hspace*{10pt}~1\leq c\leq k\hspace*{18.3pt}~\land~\word_c\word_{c+1}\dots\word_k$\hspace*{4.2pt}$\in\Language{\pattern} \\
\nexists d:\hspace*{9pt}~1\leq d\leq k\hspace*{17.6pt}~\land~\word_1\word_2\dots\word_d$\hspace*{13.5pt}$\in\Language{\pattern}
\end{cases}$
\item[$\bullet$]
$\cin$\hspace*{4pt} if
$\begin{cases}
\exists d:\hspace*{10pt}~1\leq d\leq k\hspace*{17.6pt}~\land~\word_1\word_2\dots\word_d$\hspace*{12.5pt}$\in\Language{\pattern} \hspace*{11pt} \\
\exists c:\hspace*{10.7pt}~1\leq c\leq k\hspace*{18.3pt}~\land~\word_c\word_{c+1}\dots\word_k$ $\in\Language{\pattern} \hspace*{11pt}
\end{cases}$
\end{itemize}
\end{minipage}
\end{center}
}
\end{definition}

\begin{sloppypar}
In Definition~\ref{def:word_status},
``$\cfac$'', ``$\cpre$'' and ``$\csuf$'' convey the idea of ``factor'', ``prefix'' and ``suffix''.
Note that a word with the ``$\cin$'' type wrt a convex pattern $\pattern$
is in $\Language{\pattern}$.
The languages associated with the five mutually incompatible
conditions of Definition~\ref{def:word_status} are defined as
$\Language{\cout}=\Sigma^+\setminus(\Sigma^*\Language{\pattern}\Sigma^*)$,
$\Language{\cfac}=\Sigma^+\Language{\pattern}\Sigma^+\cap
                  \Sigma^*\setminus(\Language{\pattern}\Sigma^+)\cap
                  \Sigma^*\setminus(\Sigma^+\Language{\pattern})\cap
                  \Sigma^*\setminus\Language{\pattern}$,
$\Language{\cpre}=\Language{\pattern}\Sigma^+\cap
                  \Sigma^*\setminus(\Sigma^+\Language{\pattern})\cap
                  \Sigma^*\setminus\Language{\pattern}$,
$\Language{\csuf}=\Sigma^+\Language{\pattern}\cap
                  \Sigma^*\setminus(\Language{\pattern}\Sigma^+)\cap
                  \Sigma^*\setminus\Language{\pattern}$, and
$\Language{\cin}=\Language{\pattern}\Sigma^*\cap
                 \Sigma^*\Language{\pattern}$.
Note that because of our hypothesis that $\Language{\pattern}$
does not contain the empty word,
the languages $\Language{\cout}$, $\Language{\cfac}$, $\Language{\cpre}$,
$\Language{\csuf}$ and $\Language{\cin}$ do not contain the empty word.

\begin{definition}
\label{def:word_triple}\emph{[type and signature of a word wrt one of its proper factors and wrt a pattern]}
Given a pattern $\pattern$,
consider a word $\word=\word_1\word_2\dots\word_k$ of $\Language{\pattern}$,
and one of its \emph{proper factors} $v=\word_i\word_{i+1}\dots\word_j$.
The \emph{type} of $\word$ wrt $v$ and $\pattern$
is defined by $\langle t_1,t_2,t_3\rangle$ where
$t_1$, $t_2$ and $t_3$ are respectively the type of words
$\word_1\word_2\dots\word_j$,
$\word_i\word_{i+1}\dots\word_j$ and
$\word_i\word_{i+1}\dots\word_k$
wrt pattern $\pattern$ as defined by Definition~\ref{def:word_status}.
The \emph{signature} of $\word$ wrt $v$ and $\pattern$
is defined by $\langle \mathit{sig}_1,\mathit{sig}_2,\mathit{sig}_3\rangle$, where
$\mathit{sig}_c=\textnormal{if }t_c=\cout\textnormal{ then }1\textnormal{ else }0$ (with $c\in[1,3]$).
\end{definition}
\end{sloppypar}

\begin{theorem}
\label{theorem:map}\emph{[map of feasible types wrt any pattern]}
Of the $125$ possible types of Definition~\ref{def:word_triple},
only $61$ triples shown in Figure~\ref{fig:triples_map} are feasible.
\end{theorem}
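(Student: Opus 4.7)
The plan is to derive three necessary implications between the three components of a feasible triple, verify by direct counting that they permit exactly $61$ of the $125$ triples, and then exhibit, for each admissible triple, a pattern and a word realising it.

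Set $A=\word_1\word_2\dots\word_j$, $v=\word_i\dots\word_j$ and $B=\word_i\dots\word_k$, so that by Def.~\ref{def:word_status} the triple $\langle t_1,t_2,t_3\rangle$ records the types of $A$, $v$, $B$ wrt~$\pattern$. The central observation is that $v$ is a suffix of $A$ and a prefix of $B$: hence any prefix of $v$ lying in $\Language{\pattern}$ is also a prefix of $B$, any suffix of $v$ lying in $\Language{\pattern}$ is also a suffix of $A$, and any factor of $v$ lying in $\Language{\pattern}$ is a factor of both $A$ and $B$. These translate directly into the implications (I1) $t_2\in\{\cpre,\cin\}\Rightarrow t_3\in\{\cpre,\cin\}$, (I2) $t_2\in\{\csuf,\cin\}\Rightarrow t_1\in\{\csuf,\cin\}$, and (I3) $t_2\ne\cout\Rightarrow t_1\ne\cout\land t_3\ne\cout$.

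A case split on $t_2$ then counts the surviving triples: $t_2=\cout$ puts no constraint, giving $5\times 5=25$; $t_2=\cfac$ invokes only (I3), giving $4\times 4=16$; $t_2=\cpre$ combines (I1) and (I3) for $4\times 2=8$, and symmetrically $t_2=\csuf$ yields $2\times 4=8$; finally $t_2=\cin$ forces $t_1\in\{\csuf,\cin\}$ and $t_3\in\{\cpre,\cin\}$, giving $2\times 2=4$. The total $25+16+8+8+4=61$ matches the claim and establishes the upper bound.

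It remains to produce, for each of the $61$ admissible triples, a pattern $\pattern$ over $\Alphabet=\{\reg{<},\reg{=},\reg{>}\}$ and a word $\word\in\Language{\pattern}$ with a proper factor $v$ realising that triple; the resulting catalogue of witnesses is what Figure~\ref{fig:triples_map} tabulates. This constructive step is the main obstacle: rather than attacking the $61$ cases in isolation, I would organise the witnesses by the value of $t_2$ and use a small parametric family of regular languages in which long runs of the letter $\reg{=}$ act as separators that prevent spurious shorter matches on either side of $v$, so that each target triple is built by concatenating a left block tuned to $t_1$, a middle block realising $t_2$, and a right block tuned to $t_3$.
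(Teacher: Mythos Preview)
Your proposal is correct and takes essentially the same approach as the paper: your implications (I1)--(I3) are precisely the content of the paper's five exclusion rules \ding{172}--\ding{176} (the paper merely splits (I1) and (I2) into finer sub-cases according to whether the offending component is $\cfac$ versus $\cpre$/$\csuf$), and both arguments defer the feasibility half to an explicit catalogue of witnesses, which the paper supplies in Appendix~\ref{sec:witness_feasible_types} rather than via a parametric family. Your explicit count $25+16+8+8+4=61$ is a tidy verification the paper does not spell out.
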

\begin{proof}
For each triple of Figure~\ref{fig:triples_map}, Appendix~\ref{sec:witness_feasible_types}
provides a witness pattern which generates such triple.
We now prove that the missing triples cannot be obtained from any pattern.
\begin{sloppypar}
\begin{itemize}
\item[\ding{172}]
$\langle\cout,\neq$\hspace*{1pt}$\cout,\shorthyphen\rangle$
(resp. $\langle\shorthyphen,\neq$\hspace*{1pt}$\cout,\cout\rangle$)
is not feasible since an ``$\cout$'' in
$v=\word_1\word_2\dots\word_j$ (resp.~$\word_i\word_{i+1}\dots\word_k$)
would imply an ``$\cout$'' in any subsequence of $v$, namely in $\word_i\word_{i+1}\dots\word_j$,
a contradiction.
\item[\ding{173}]
$\langle\cfac,\csuf,\shorthyphen\rangle$
(resp. $\langle\shorthyphen,\cpre,\cfac\rangle$)
is not feasible since a ``$\csuf$'' (resp.~``$\cpre$'')
in $\word_i\word_{i+1}\dots\word_j$ would imply a ``$\csuf$'' (resp.~``$\cpre$'') or an ``$\cin$''
in $\word_1\word_2\dots\word_j$ (resp.~$\word_i\word_{i+1}\dots\word_k$),
a contradiction.
\item[\ding{174}]
$\langle\cfac,\cin,\shorthyphen\rangle$
(resp.~$\langle\shorthyphen,\cin,\cfac\rangle$)
is not feasible since a ``$\cin$'' in $\word_i\word_{i+1}\dots\word_j$ would imply
a ``$\csuf$'' (resp.~``$\cpre$'') or an ``$\cin$''
in $\word_1\word_2\dots\word_j$ (resp.~$\word_i\word_{i+1}\dots\word_k$), a contradiction.
\item[\ding{175}]
$\langle\cpre,\csuf,\shorthyphen\rangle$
(resp.~$\langle\shorthyphen,\cpre,\csuf\rangle$)
is not feasible since a ``$\csuf$'' (resp.~``$\cpre$'')
in $\word_i\word_{i+1}\dots\word_j$ and
a ``$\cpre$'' (resp.~``$\csuf$'')
in $v=\word_1\word_2\dots\word_j$ (resp.~$v=\word_i\word_{i+1}\dots\word_k$)
would imply an ``$\cin$'' in $v$, a contradiction.
\item[\ding{176}]
$\langle\cpre,\cin,\shorthyphen\rangle$
(resp.~$\langle\shorthyphen,\cin,\csuf\rangle$)
is not feasible since an ``$\cin$'' in $\word_i\word_{i+1}\dots\word_j$ and
a ``$\cpre$'' (resp.~``$\csuf$'')
in $v=\word_1\word_2\dots\word_j$ (resp.~$v=\word_i\word_{i+1}\dots\word_k$)
would imply an ``$\cin$'' in $v$, a contradiction.\qed
\end{itemize}
\end{sloppypar}
\end{proof}

\begin{figure}[!h]
\centering
\begin{tikzpicture}

\fill[gray!7,rounded corners=2pt] (0.6,2.3) rectangle (1.82,2.9);
\fill[gray!7,rounded corners=2pt] (0.6,-1.15) rectangle (2.6,2.03);
\fill[gray!7,rounded corners=2pt] (4.2,-1.15) rectangle (11.8,2.9);
\fill[gray!7,rounded corners=2pt] (-0.2,-5.7) rectangle (11.8,-1.4);
\node (a) at ($(1.82,2.9)-(0.25,0.2)$) {\scriptsize(A)};
\node (b) at ($(2.6,2.03)-(0.25,0.2)$) {\scriptsize(B)};
\node (c) at ($(11.8,2.9)-(0.25,0.2)$) {\scriptsize(C)};
\node (d) at ($(11.8,-1.4)-(0.25,0.2)$) {\scriptsize(D)};

\begin{scope}[every node/.style={minimum width=.65cm,minimum height=0.65cm,font=\scriptsize\strut},xshift=1cm,yshift=0cm]
\fill[white!15,rounded corners=2pt] (-0.25,-1) rectangle (1.1,0.15);
\node (ooi) at (0,0)             {$\fout\fout\fin$};
\node (oos) [right=0.6em of ooi] {$\fout\fout\fsuf$};
\node (oop) [below=0.6em of ooi] {$\fout\fout\fpre$};
\node (oof) [right=0.6em of oop] {$\fout\fout\ffac$};
\draw[->] (ooi) -- (oos);
\draw[->] (oop) -- (oof);
\draw[->] (ooi) -- (oop);
\draw[->] (oos) -- (oof);
\end{scope}

\begin{scope}[every node/.style={minimum width=.65cm,minimum height=0.65cm,font=\scriptsize\strut},xshift=1cm,yshift=1.73cm]
\fill[white!15,rounded corners=2pt] (-0.25,-1) rectangle (1.1,0.15);
\node (ioo) at (0,0)             {$\fin\fout\fout$};
\node (soo) [right=0.6em of ioo] {$\fsuf\fout\fout$};
\node (poo) [below=0.6em of ioo] {$\fpre\fout\fout$};
\node (foo) [right=0.6em of poo] {$\ffac\fout\fout$};
\draw[->] (ioo) -- (soo);
\draw[->] (poo) -- (foo);
\draw[->] (ioo) -- (poo);
\draw[->] (soo) -- (foo);
\end{scope}

\begin{scope}[every node/.style={minimum width=.65cm,minimum height=0.65cm,font=\scriptsize\strut},xshift=1cm,yshift=2.6cm]
\fill[white!15,rounded corners=2pt] (-0.25,-0.15) rectangle (0.27,0.15);
\node (ooo) at (0,0)             {$\fout\fout\fout$};
\end{scope}

\begin{scope}[every node/.style={minimum width=.65cm,minimum height=0.65cm,font=\scriptsize\strut},xshift=6.3cm,yshift=0.87cm]
\fill[white!15,rounded corners=2pt] (-1.12,-1) rectangle (1.1,1);
\node (ioi) at (0,0)             {$\fin\fout\fin$};
\node (ios) [right=0.6em of ioi] {$\fin\fout\fsuf$};
\node (iop) [left =0.6em of ioi] {$\fin\fout\fpre$};
\node (soi) [above=0.6em of ioi] {$\fsuf\fout\fin$};
\node (poi) [below=0.6em of ioi] {$\fpre\fout\fin$};
\node (sop) [above=0.6em of iop] {$\fsuf\fout\fpre$};
\node (pop) [below=0.6em of iop] {$\fpre\fout\fpre$};
\node (sos) [above=0.6em of ios] {$\fsuf\fout\fsuf$};
\node (pos0)[below=0.6em of ios] {$\fpre\fout\fsuf$};
\draw[fill=gray!30] (pos0) ellipse (1em and 0.5em);
\node (pos) [below=0.6em of ios] {$\fpre\fout\fsuf$};
\node (iof1)[left =0.6em of iop] {$\fin\fout\ffac$};
\node (iof2)[right=0.6em of ios] {$\fin\fout\ffac$};
\node (foi1)[above=0.6em of soi] {$\ffac\fout\fin$};
\node (foi2)[below=0.6em of poi] {$\ffac\fout\fin$};
\node (sof1)[above=0.6em of sop] {$\fsuf\fout\ffac$};
\node (sof2)[above=0.6em of sos] {$\fsuf\fout\ffac$};
\node (pof1)[below=0.6em of pop] {$\fpre\fout\ffac$};
\node (pof2)[below=0.6em of pos] {$\fpre\fout\ffac$};
\node (fop1)[left =0.6em of sop] {$\ffac\fout\fpre$};
\node (fop2)[left =0.6em of pop] {$\ffac\fout\fpre$};
\node (fos1)[right=0.6em of sos] {$\ffac\fout\fsuf$};
\node (fos2)[right=0.6em of pos] {$\ffac\fout\fsuf$};
\draw[->] (ioi) -- (ios);
\draw[->] (ioi) -- (iop);
\draw[->] (ioi) -- (soi);
\draw[->] (ioi) -- (poi);
\draw[->] (iop) -- (sop);
\draw[->] (iop) -- (pop);
\draw[->] (ios) -- (sos);
\draw[->] (ios) -- (pos);
\draw[->] (soi) -- (sop);
\draw[->] (soi) -- (sos);
\draw[->] (poi) -- (pop);
\draw[->] (poi) -- (pos);
\draw[->,densely dotted] (iop) -- (iof1);
\draw[->,densely dotted] (ios) -- (iof2);
\draw[->,densely dotted] (soi) -- (foi1);
\draw[->,densely dotted] (poi) -- (foi2);
\draw[->,densely dotted] (sop) -- (sof1);
\draw[->,densely dotted] (sos) -- (sof2);
\draw[->,densely dotted] (pop) -- (pof1);
\draw[->,densely dotted] (pos) -- (pof2);
\draw[->,densely dotted] (sop) -- (fop1);
\draw[->,densely dotted] (pop) -- (fop2);
\draw[->,densely dotted] (sos) -- (fos1);
\draw[->,densely dotted] (pos) -- (fos2);
\end{scope}

\begin{scope}[every node/.style={minimum width=.65cm,minimum height=0.65cm,font=\scriptsize\strut},xshift=10.5cm,yshift=0.87cm]
\fill[white!15,rounded corners=2pt] (-1.1,-1) -- (-1.1,0.18) -- (-0.18,0.18) -- (-0.18,1.03) -- (1.1,1.03) -- (1.1,-0.18) -- (0.18,-0.18) -- (0.18,-1) -- cycle;
\node (fof) at (0,0)             {$\ffac\fout\ffac$};
\node (pof) [right=0.6em of fof] {$\fpre\fout\ffac$};
\node (fop) [left =0.6em of fof] {$\ffac\fout\fpre$};
\node (sof) [above=0.6em of fof] {$\fsuf\fout\ffac$};
\node (fos) [below=0.6em of fof] {$\ffac\fout\fsuf$};
\node (foi) [below=0.6em of fop] {$\ffac\fout\fin$};
\node (iof) [above=0.6em of pof] {$\fin\fout\ffac$};
\node (foi1)[below=0.6em of foi] {};
\node (fos1)[below=0.6em of fos] {};
\node (pof1)[below=0.6em of pof] {};
\node (iof1)[above=0.6em of iof] {};
\node (sof1)[above=0.6em of sof] {};
\node (fop1)[above=0.6em of fop] {};
\draw[->] (pof) -- (fof);
\draw[->] (fop) -- (fof);
\draw[->] (sof) -- (fof);
\draw[->] (fos) -- (fof);
\draw[->] (foi) -- (fop);
\draw[->] (foi) -- (fos);
\draw[->] (iof) -- (sof);
\draw[->] (iof) -- (pof);
\draw[->,densely dotted] (foi1)-- (foi);
\draw[->,densely dotted] (fos1)-- (fos);
\draw[->,densely dotted] (pof1)-- (pof);
\draw[->,densely dotted] (iof1)-- (iof);
\draw[->,densely dotted] (sof1)-- (sof);
\draw[->,densely dotted] (fop1)-- (fop);
\end{scope}

\begin{scope}[every node/.style={minimum width=.65cm,minimum height=0.65cm,font=\scriptsize\strut},xshift=1.8cm,yshift=-3.5cm]
\fill[white!15,rounded corners=2pt] (-1.1,-1) -- (-1.1,1.88) -- (0.18,1.88) -- (0.18,1.0) -- (1.08,1.0) -- (1.08,-1.88) -- (-0.2,-1.88) -- (-0.2,-1) -- cycle;
\node (iii0) at (0,0)            {$\fin\fin\fin$};
\draw[fill=gray!30] (iii0) ellipse (1em and 0.5em);
\node (iii) at (0,0)             {$\fin\fin\fin$};
\node (sii) [right=0.6em of iii] {$\fsuf\fin\fin$};
\node (iip) [left =0.6em of iii] {$\fin\fin\fpre$};
\node (isi) [above=0.6em of iii] {$\fin\fsuf\fin$};
\node (ipi) [below=0.6em of iii] {$\fin\fpre\fin$};
\node (isp) [above=0.6em of iip] {$\fin\fsuf\fpre$};
\node (ipp) [below=0.6em of iip] {$\fin\fpre\fpre$};
\node (ssi) [above=0.6em of sii] {$\fsuf\fsuf\fin$};
\node (spi) [below=0.6em of sii] {$\fsuf\fpre\fin$};
\node (sss) [above=0.6em of ssi] {$\fsuf\fsuf\fsuf$};
\node (iss0)[above=0.6em of isi] {$\fin\fsuf\fsuf$};
\draw[fill=gray!30] (iss0) ellipse (1em and 0.5em);
\node (iss) [above=0.6em of isi] {$\fin\fsuf\fsuf$};
\node (isf) [above=0.6em of isp] {$\fin\fsuf\ffac$};
\node (ppp) [below=0.6em of ipp] {$\fpre\fpre\fpre$};
\node (ppi0)[below=0.6em of ipi] {$\fpre\fpre\fin$};
\draw[fill=gray!30] (ppi0) ellipse (1em and 0.5em);
\node (ppi) [below=0.6em of ipi] {$\fpre\fpre\fin$};
\node (fpi) [below=0.6em of spi] {$\ffac\fpre\fin$};
\node (ssf) [left =0.6em of isf] {$\fsuf\fsuf\ffac$};
\node (spp1)[left =0.6em of isp] {$\fsuf\fpre\fpre$};
\node (sip1)[left =0.6em of iip] {$\fsuf\fin\fpre$};
\node (spp2)[left =0.6em of ipp] {$\fsuf\fpre\fpre$};
\node (ssp1)[right=0.6em of ssi] {$\fsuf\fsuf\fpre$};
\node (sip2)[right=0.6em of sii] {$\fsuf\fin\fpre$};
\node (ssp2)[right=0.6em of spi] {$\fsuf\fsuf\fpre$};
\node (fpp) [right=0.6em of fpi] {$\ffac\fpre\fpre$};
\node (iff) [right=0.6em of sss] {$\fin\ffac\ffac$};
\node (ffi) [left =0.6em of ppp] {$\ffac\ffac\fin$};
\node (ifs0)at (ssf.south) {$\fin\ffac\fsuf$};
\draw[draw=gray!20,fill=gray!20] (ifs0) ellipse (0.8em and 0.4em);
\node (ifs) at (ssf.south) {$\fin\ffac\fsuf$};
\node (pfi0)at (ffi.north) {$\fpre\ffac\fin$};
\draw[draw=gray!20,fill=gray!20] (pfi0) ellipse (0.8em and 0.4em);
\node (pfi) at (ffi.north) {$\fpre\ffac\fin$};
\node (sfi1) at (ssp1.south) {$\fsuf\ffac\fin$};
\node (sfi2) at (ssp2.north) {$\fsuf\ffac\fin$};
\node (ifi1) at (ssp1.north) {$\fin\ffac\fin$};
\node (ifi2) at (ssp2.south) {$\fin\ffac\fin$};
\node (ifp1) at (spp1.south) {$\fin\ffac\fpre$};
\node (ifp2) at (spp2.north) {$\fin\ffac\fpre$};
\draw[->,densely dotted] (ssi) -- (sfi1);
\draw[->,densely dotted] (spi) -- (sfi2);
\path (isi) edge [->,densely dotted,bend angle=5,bend left]  node {} (ifi1);
\path (ipi) edge [->,densely dotted,bend angle=5,bend right] node {} (ifi2);
\draw[->,densely dotted] (isp) -- (ifp1);
\draw[->,densely dotted] (ipp) -- (ifp2);
\draw[line width=4pt,gray!30] (iii) -- (ipi);
\draw[line width=4pt,gray!30] (ipi) -- (ppi);
\draw[line width=4pt,gray!30] (iii) -- (isi);
\draw[line width=4pt,gray!30] (isi) -- (iss);
\draw[->] (iii) -- (isi);
\draw[->] (iii) -- (sii);
\draw[->] (iii) -- (ipi);
\draw[->] (iii) -- (iip);
\draw[->] (iip) -- (isp);
\draw[->] (iip) -- (ipp);
\draw[->] (sii) -- (ssi);
\draw[->] (sii) -- (spi);
\draw[->] (isi) -- (isp);
\draw[->] (isi) -- (ssi);
\draw[->] (ipi) -- (ipp);
\draw[->] (ipi) -- (spi);
\draw[->] (isp) -- (isf);
\draw[->] (iss) -- (isf);
\draw[->] (isi) -- (iss);
\draw[->] (spi) -- (fpi);
\draw[->] (ppi) -- (fpi);
\draw[->] (ipi) -- (ppi);
\draw[->,densely dotted] (iss) -- (sss);
\draw[->,densely dotted] (ssi) -- (sss);
\draw[->,densely dotted] (ipp) -- (ppp);
\draw[->,densely dotted] (ppi) -- (ppp);
\draw[->,densely dotted] (isf) -- (ssf);
\draw[->,densely dotted] (isp) -- (spp1);
\draw[->,densely dotted] (iip) -- (sip1);
\draw[->,densely dotted] (ipp) -- (spp2);
\draw[->,densely dotted] (ssi) -- (ssp1);
\draw[->,densely dotted] (sii) -- (sip2);
\draw[->,densely dotted] (spi) -- (ssp2);
\draw[->,densely dotted] (fpi) -- (fpp);
\path (isf) edge [->,densely dotted,bend angle=15,bend left]  node {} (iff);
\path (fpi) edge [->,densely dotted,bend angle=15,bend left]  node {} (ffi);
\path (iss) edge [line width=4pt,gray!30,bend angle=5,bend left] node {} (ifs);
\path (iss) edge [->,densely dotted,bend angle=5,bend left] node {} (ifs);
\path (ppi) edge [line width=4pt,gray!30,bend angle=5,bend right] node {} (pfi);
\path (ppi) edge [->,densely dotted,bend angle=5,bend right] node {} (pfi);
\end{scope}

\begin{scope}[every node/.style={minimum width=.65cm,minimum height=0.65cm,font=\scriptsize\strut},xshift=6.3cm,yshift=-3.5cm]
\fill[white!15,rounded corners=2pt] (-1.95,-2.02) rectangle (1.1,1.1);
\node (ifi) at (0,0)              {$\fin\ffac\fin$};
\node (sfi)  [right=0.6em of ifi] {$\fsuf\ffac\fin$};
\node (pfi)  [left =0.6em of ifi] {$\fpre\ffac\fin$};
\node (ifs)  [above=0.6em of ifi] {$\fin\ffac\fsuf$};
\node (ifp)  [below=0.6em of ifi] {$\fin\ffac\fpre$};
\node (sfs)  [right=0.6em of ifs] {$\fsuf\ffac\fsuf$};
\node (pfs0)[left =0.6em of ifs] {$\fpre\ffac\fsuf$};
\draw[fill=gray!30] (pfs0) ellipse (1em and 0.5em);
\node (pfs) [left =0.6em of ifs] {$\fpre\ffac\fsuf$};
\node (sfp)  [right=0.6em of ifp] {$\fsuf\ffac\fpre$};
\node (pfp)  [left =0.6em of ifp] {$\fpre\ffac\fpre$};
\node (ffs)  [left =0.6em of pfs] {$\ffac\ffac\fsuf$};
\node (ffi)  [left =0.6em of pfi] {$\ffac\ffac\fin$};
\node (ffp)  [left =0.6em of pfp] {$\ffac\ffac\fpre$};
\node (sff)  [below=0.6em of sfp] {$\fsuf\ffac\ffac$};
\node (iff)  [below=0.6em of ifp] {$\fin\ffac\ffac$};
\node (pff)  [below=0.6em of pfp] {$\fpre\ffac\ffac$};
\node (fff)  [left =0.6em of pff] {$\ffac\ffac\ffac$};
\node (pff1) [above=0.6em of pfs] {$\fpre\ffac\ffac$};
\node (iff1) [above=0.6em of ifs] {$\fin\ffac\ffac$};
\node (sff1) [above=0.6em of sfs] {$\fsuf\ffac\ffac$};
\node (ffp1) [right=0.6em of sfp] {$\ffac\ffac\fpre$};
\node (ffi1) [right=0.6em of sfi] {$\ffac\ffac\fin$};
\node (ffs1) [right=0.6em of sfs] {$\ffac\ffac\fsuf$};
\draw[line width=4pt,gray!30] (ifs) -- (pfs);
\draw[line width=4pt,gray!30] (pfi) -- (pfs);
\draw[->] (ifi) -- (sfi);
\draw[->] (ifi) -- (pfi);
\draw[->] (ifi) -- (ifs);
\draw[->] (ifi) -- (ifp);
\draw[->] (ifs) -- (sfs);
\draw[->] (ifs) -- (pfs);
\draw[->] (ifp) -- (sfp);
\draw[->] (ifp) -- (pfp);
\draw[->] (pfi) -- (pfs);
\draw[->] (pfi) -- (pfp);
\draw[->] (sfi) -- (sfs);
\draw[->] (sfi) -- (sfp);
\draw[->] (pfs) -- (ffs);
\draw[->] (pfi) -- (ffi);
\draw[->] (pfp) -- (ffp);
\draw[->] (sfp) -- (sff);
\draw[->] (ifp) -- (iff);
\draw[->] (pfp) -- (pff);
\draw[->] (ffi) -- (ffs);
\draw[->] (ffi) -- (ffp);
\draw[->] (iff) -- (sff);
\draw[->] (iff) -- (pff);
\draw[->] (ffp) -- (fff);
\draw[->] (pff) -- (fff);
\draw[->,densely dotted] (pfs) -- (pff1);
\draw[->,densely dotted] (ifs) -- (iff1);
\draw[->,densely dotted] (sfs) -- (sff1);
\draw[->,densely dotted] (sfs) -- (ffs1);
\draw[->,densely dotted] (sfi) -- (ffi1);
\draw[->,densely dotted] (sfp) -- (ffp1);
\path (iff) edge [->,bend angle=15,bend right] node {} (fff);
\path (sff) edge [->,bend angle=15,bend left ] node {} (fff);
\path (ffi) edge [->,bend angle=15,bend right] node {} (fff);
\path (ffs) edge [->,bend angle=15,bend left ] node {} (fff);
\draw[->,densely dotted] ($(ifi.center)-(0.15,0.3)$) -- ($(ifi.center)-(0.1,0.1)$);
\draw[->,densely dotted] ($(sfs.center)-(0.15,0.3)$) -- ($(sfs.center)-(0.1,0.1)$);
\draw[->,densely dotted] ($(pfp.center)-(0.15,0.3)$) -- ($(pfp.center)-(0.1,0.1)$);
\draw[->,densely dotted] ($(iff.center)-(0.3,0.15)$) -- ($(iff.center)-(0.1,0.1)$);
\draw[->,densely dotted] ($(ffi.center)-(0.15,0.3)$) -- ($(ffi.center)-(0.1,0.1)$);
\draw[->,densely dotted] ($(ffp.center)-(0.1,0.3)$) -- ($(ffp.center)-(0.1,0.1)$);
\draw[line width=4pt,gray!30] ($(ifs.center)-(0.15,0.3)$) -- ($(ifs.center)-(0.1,0.1)$);
\draw[->,densely dotted] ($(ifs.center)-(0.15,0.3)$) -- ($(ifs.center)-(0.1,0.1)$);
\draw[line width=4pt,gray!30] ($(pfi.center)-(0.15,0.3)$) -- ($(pfi.center)-(0.1,0.1)$);
\draw[->,densely dotted] ($(pfi.center)-(0.15,0.3)$) -- ($(pfi.center)-(0.1,0.1)$);
\draw[->,densely dotted] ($(sff.center)-(0.3,0.15)$) -- ($(sff.center)-(0.1,0.1)$);
\draw[->,densely dotted] ($(ifp.center)-(0.15,0.3)$) -- ($(ifp.center)-(0.1,0.1)$);
\draw[->,densely dotted] ($(sfi.center)-(0.15,0.3)$) -- ($(sfi.center)-(0.1,0.1)$);
\end{scope}

\begin{scope}[every node/.style={minimum width=.65cm,minimum height=0.65cm,font=\scriptsize\strut},xshift=10.5cm,yshift=-3.5cm]
\fill[white!15,rounded corners=2pt] (-0.22,-1.02) -- (-0.22,-0.18) -- (-1.1,-0.18) -- (-1.1,1.03) -- (0.2,1.03) -- (0.2,0.13) -- (1.08,0.13) -- (1.08,-1.02) -- cycle;
\node (sip) at (0,0)              {$\fsuf\fin\fpre$};
\node (ssp)  [above=0.6em of sip] {$\fsuf\fsuf\fpre$};
\node (spp)  [right=0.6em of sip] {$\fsuf\fpre\fpre$};
\node (sss)  [left =0.6em of sip] {$\fsuf\fsuf\fsuf$};
\node (ppp)  [below=0.6em of sip] {$\fpre\fpre\fpre$};
\node (ssf)  [left =0.6em of ssp] {$\fsuf\fsuf\ffac$};
\node (fpp)  [below=0.6em of spp] {$\ffac\fpre\fpre$};
\node (sfp)  [above=0.6em of spp] {$\fsuf\ffac\fpre$};
\node (sff)  [left =0.6em of ssf] {$\fsuf\ffac\ffac$};
\node (sfs)  [left =0.6em of sss] {$\fsuf\ffac\fsuf$};
\node (ffp)  [below=0.6em of fpp] {$\ffac\ffac\fpre$};
\node (pfp)  [below=0.6em of ppp] {$\fpre\ffac\fpre$};
\draw[->] (sip) -- (ssp);
\draw[->] (sip) -- (spp);
\draw[->] (ssp) -- (ssf);
\draw[->] (spp) -- (fpp);
\draw[->] (sss) -- (ssf);
\draw[->] (ppp) -- (fpp);
\draw[->,densely dotted] (ssp) -- (sfp);
\draw[->,densely dotted] (spp) -- (sfp);
\draw[->,densely dotted] (ssf) -- (sff);
\draw[->,densely dotted] (sss) -- (sfs);
\draw[->,densely dotted] (fpp) -- (ffp);
\draw[->,densely dotted] (ppp) -- (pfp);
\draw[->,densely dotted] ($(sip.center)-(0.25,0.25)$) -- ($(sip.center)-(0.1,0.1)$);
\draw[->,densely dotted] ($(sss.center)-(0.25,0.25)$) -- ($(sss.center)-(0.1,0.1)$);
\draw[->,densely dotted] ($(ppp.center)-(0.25,0.25)$) -- ($(ppp.center)-(0.1,0.1)$);
\end{scope}

\end{tikzpicture}
\caption{\label{fig:triples_map}
Map of the $61$ feasible types where Parts~(A), (B), (C) and~(D)
resp. correspond to triples
with three ``$\cout$'', two ``$\cout$'', one ``$\cout$'' and no ``$\cout$'',
where $\fout$, $\fin$, $\fpre$, $\fsuf$, $\ffac$ resp. are abbreviations for
``$\cout$'', ``$\cin$'', ``$\cpre$'', ``$\csuf$'', ``$\cfac$'';
there is an arc from a triple $t_1$ to a triple $t_2$ iff
(1)~$t_1$ and $t_2$ have the same signature,
(2)~$t_1$ and $t_2$ differ exactly from one position,
(3)~$t_1$ is lexicographically less than $t_2$ assuming
$\fin\prec\fpre$, $\fin\prec\fsuf$, $\fpre\prec\ffac$ and $\fsuf\prec\ffac$;
ellipses denote the types of the $\DecreasingSequencePatternName$ pattern
as described in Example~\ref{ex:word_types}.
}
\end{figure}

\noindent\textbf{Notation}~\emph{In the context of Definition~\ref{def:word_triple},
when $w_1 w_2\dots w_j$ is of type $\cpre$ or $\cin$,
i.e.~it contains a maximal occurrence of a word $x$ in $\Language{\pattern}$ starting at index $1$,
$\finf$ denotes the index of the last letter of $x$.
Similarly, when $w_i w_{i+1}\dots w_k$ is of type $\csuf$ or $\cin$,
i.e.~it contains a maximal occurrence of a word $y$ in $\Language{\pattern}$ ending at index $k$,
$\startf$ denotes the index of the first letter of $y$.}\\

The number of triples being important, $61$ in our case,
we reduce the number of cases to be considered in our proofs,
by introducing Definition~\ref{def:gen_triple} which groups a certain number of triples
in the same class representing the weakest hypothesis associated with the different triples
in this class.
Consider the (finite) set $\mathcal{S}$ of triples
associated with the words of $\Language{\pattern}$ wrt their proper factors.
We partition the set $\mathcal{S}$ into subsets where
all triples of the same subset have the same signature. 
Then, we generalise all the triples that belong to the same subset to a unique representative
using the following definition.

\begin{definition}
\label{def:gen_triple}\emph{[generalising a set of triples]}
Given a pattern $\pattern$, consider the set of triples $\mathcal{S}$
consisting of all types of the words of
$\Language{\pattern}$ wrt their proper factors and wrt $\pattern$
that have the same signature. 
Let $\mathcal{S}_c$ (with $c\in[1,3]$) denote
the set of all the $c^{\rm th}$ components of the triples of $\mathcal{S}$.
The set $\mathcal{S}$ is represented
by a single \emph{representative triple} $\mathcal{R}_\mathcal{S}=\langle r_1,r_2,r_3\rangle$
where $r_c$, with $c\in[1,3]$, is defined by
\begin{itemize}[leftmargin=1.5cm]
\item[$\bullet$]
{\makebox[6cm][l]{$\mathcal{S}_c=\{\cout\}$} $~\Rightarrow r_c=\gout$}
\item[$\bullet$]
{\makebox[6cm][l]{$\cfac\hspace*{1pt}\in\mathcal{S}_c$} $~\Rightarrow r_c=\gfac$}
\item[$\bullet$]
{\makebox[6cm][l]{$\cpre\in\mathcal{S}_c\hspace*{1pt}\land\csuf\notin\mathcal{S}_c\land\cfac\notin\mathcal{S}_c$} $~\Rightarrow r_c=\gpre$}
\item[$\bullet$]
{\makebox[6cm][l]{$\csuf\hspace*{1.2pt}\in\mathcal{S}_c\land\cpre\notin\mathcal{S}_c\land\cfac\notin\mathcal{S}_c$} $~\Rightarrow r_c=\gsuf$}
\item[$\bullet$]
{\makebox[6cm][l]{$\cpre\in\mathcal{S}_c\hspace*{1pt}\land\csuf\in\mathcal{S}_c\land\cfac\notin\mathcal{S}_c$} $~\Rightarrow r_c=\gps$}
\item[$\bullet$]
{\makebox[6cm][l]{$\cin\hspace*{6pt}\in\mathcal{S}_c\land\cpre\notin\mathcal{S}_c\land\csuf\notin\mathcal{S}_c\land\cfac\notin\mathcal{S}_c$} $~\Rightarrow r_c=\gin$}
\end{itemize}
\end{definition}

\begin{definition}
\label{def:class}\emph{[pattern class]}
Given a pattern $\pattern$, the set of representative triples of $\pattern$ is called
the \emph{class} of $\pattern$.
\end{definition}

\begin{sloppypar}
\begin{example}\label{ex:word_types}
The set $\mathcal{S}$ of possible types associated with the $\DecreasingSequencePatternName$ pattern
$\langle\DecreasingSequencePattern,0,0\rangle$ is equal to the union of two subsets
$\mathcal{S}_1=\{
   \langle \cpre,\cfac,\csuf\rangle,$
  $\langle \cpre,\cpre,\cin\rangle,$
  $\langle \cin,\csuf,\csuf\rangle,$
  $\langle \cin,\cin,\cin\rangle\}$
and
$\mathcal{S}_2=\{
   \langle \cpre,\cout,\csuf\rangle\}$,
where each subset corresponds to triples for which
all ``$\cout$'' are located in the same positions, see the five ellipses in Parts~(D) and~(C) of Figure~\ref{fig:triples_map}.
Part~(A) of Figure~\ref{fig:example_triples} gives for each
element of $\mathcal{S}_1$ and $\mathcal{S}_2$
a corresponding example of a word and a proper factor.
The sets $\mathcal{S}_1$ and $\mathcal{S}_2$
are respectively represented by the triples
$\langle\gpre,\gfac,\gsuf\rangle$ and
$\langle\gpre,\gout,\gsuf\rangle$ as
shown in Part~(B) of Figure~\ref{fig:example_triples}.
Finally, Figure~\ref{fig:factor_cartography} provides the representative triples for all
reversible and convex patterns of Table~\ref{tab:fgp},
which do not have the single letter property.
\end{example}
\end{sloppypar}
\begin{figure}[!h]
\centering
\begin{tikzpicture}
\begin{scope}
\node (a) at (0,0.4) {(A)};
\node (b) at (0,1.7) {(B)};
\node (c) at (0,-1.82) {(C)};
\end{scope}
\begin{scope}[xshift=0.6cm]
\fill[gray!20] (1.08, -0.25) rectangle (2.12,1);
\draw[line width=0.7pt,densely dotted] (1.08,-0.3) -- (1.08,1.05);
\draw[line width=0.7pt,densely dotted] (2.12,-0.3) -- (2.12,1.05);
\node (1a) at (0.25,0.4) {$\left\langle\begin{array}{c}\cpre,\\ \cfac,\\ \csuf\end{array}\right\rangle$};
\node[anchor=west] (1aa) at (0,-0.3) {\footnotesize\ding{172}};
\node (start) at (1.45,-0.6) {$\startf$};
\node (end)   at (1.72,1.35) {$\finf$};
\draw[line width=0.4pt,->] (1.45,-0.48) -- (1.45,-0.28);
\draw[line width=0.4pt,->] (1.72,1.2) -- (1.72,1);
\node (1b) at (1.6,0.8) {$>=>>=>$};
\node (1c) at (1.6,0.4) {$>=>>=>$};
\node (1d) at (1.6,0.0) {$>=>>=>$};
\draw[line width=0.7pt,->] (0.85,0.65) -- (2.12,0.65);
\draw[line width=0.7pt,->] (1.08,0.2) -- (2.12,0.2);
\draw[line width=0.7pt,->] (1.08,-0.25) -- (2.35,-0.25);
\end{scope}
\begin{scope}[xshift=4.4cm,yshift=0.2cm]
\node (0) at (0.4,1.5) {$\overbrace{\text{\hspace*{248pt}}}^{\langle\gpre,\gfac,\gsuf\rangle}$};
\end{scope}
\begin{scope}[xshift=3.1cm]
\fill[gray!20] (1.2,-0.25) rectangle (1.73,1);
\draw[line width=0.7pt,densely dotted] (1.20,-0.3) -- (1.20,1.05);
\draw[line width=0.7pt,densely dotted] (1.75,-0.3) -- (1.75,1.05);
\node (2a) at (0.55,0.4) {$\left\langle\begin{array}{c}\cpre,\\ \cpre,\\ \cin\end{array}\right\rangle$};
\node (end1) at (1.35,1.35) {$\finf$};
\node (start1) at (1.35,-0.6) {$\startf$};
\node[anchor=west] (2aa) at (0.3,-0.3) {\footnotesize\ding{173}};
\draw[line width=0.4pt,->] (1.35,1.2) -- (1.35,1);
\draw[line width=0.4pt,->] (1.35,-0.48) -- (1.35,-0.28);
\node (2b) at (1.6,0.8) {$>=>$};
\node (2c) at (1.6,0.4) {$>=>$};
\node (2d) at (1.6,0.0) {$>=>$};
\draw[line width=0.7pt,->] (1.2,0.65) -- (1.75,0.65);
\draw[line width=0.7pt,->] (1.2,0.2) -- (1.75,0.2);
\draw[line width=0.7pt,->] (1.2,-0.25) -- (2,-0.25);
\end{scope}
\begin{scope}[xshift=5.3cm]
\fill[gray!20] (1.47,-0.25) rectangle (1.99,1);
\draw[line width=0.7pt,densely dotted] (1.47,-0.3) -- (1.47,1.05);
\draw[line width=0.7pt,densely dotted] (1.99,-0.3) -- (1.99,1.05);
\node (3a) at (0.55,0.4) {$\left\langle\begin{array}{c}\cin,\\ \csuf,\\ \csuf\end{array}\right\rangle$};
\node[anchor=west] (3aa) at (0.3,-0.3) {\footnotesize\ding{174}};
\node (end2) at (1.83,1.35) {$\finf$};
\node (start2) at (1.83,-0.6) {$\startf$};
\draw[line width=0.4pt,->] (1.83,1.2) -- (1.83,1);
\draw[line width=0.4pt,->] (1.83,-0.48) -- (1.83,-0.28);
\node (3b) at (1.6,0.8) {$>=>$};
\node (3c) at (1.6,0.4) {$>=>$};
\node (3d) at (1.6,0.0) {$>=>$};
\draw[line width=0.7pt,->] (1.2,0.65) -- (1.99,0.65);
\draw[line width=0.7pt,->] (1.47,0.2) -- (1.99,0.2);
\draw[line width=0.7pt,->] (1.47,-0.25) -- (2,-0.25);
\end{scope}
\begin{scope}[xshift=7.3cm]
\fill[gray!20] (1.3,-0.25) rectangle (1.6,1);
\draw[line width=0.7pt,densely dotted] (1.3,-0.3) -- (1.3,1.05);
\draw[line width=0.7pt,densely dotted] (1.6,-0.3) -- (1.6,1.05);
\node (4a) at (0.75,0.4) {$\left\langle\begin{array}{c}\cin,\\ \cin,\\ \cin\end{array}\right\rangle$};
\node[anchor=west] (4aa) at (0.5,-0.3) {\footnotesize\ding{175}};
\node (end4) at (1.45,1.35) {$\finf$};
\node (start4) at (1.45,-0.6) {$\startf$};
\draw[line width=0.4pt,->] (1.45,1.2) -- (1.45,1);
\draw[line width=0.4pt,->] (1.45,-0.48) -- (1.45,-0.28);
\node (4b) at (1.6,0.8) {$>>$};
\node (4c) at (1.6,0.4) {$>>$};
\node (4d) at (1.6,0.0) {$>>$};
\draw[line width=0.7pt,->] (1.3,0.65) -- (1.6,0.65);
\draw[line width=0.7pt,->] (1.3,0.2) -- (1.6,0.2);
\draw[line width=0.7pt,->] (1.3,-0.25) -- (1.85,-0.25);
\end{scope}
\begin{scope}[xshift=10.2cm,yshift=0.2cm]
\node (0) at (0.5,1.5) {$\overbrace{\text{\hspace*{53pt}}}^{\langle\gpre,\gout,\gsuf\rangle}$};
\end{scope}
\begin{scope}[xshift=9.7cm]
\fill[gray!20] (1.46,-0.25) rectangle (1.72,1);
\draw[line width=0.7pt,densely dotted] (1.46,-0.3) -- (1.46,1.05);
\draw[line width=0.7pt,densely dotted] (1.72,-0.3) -- (1.72,1.05);
\node (5a) at (0.55,0.4) {$\left\langle\begin{array}{c}\cpre,\\ \cout,\\ \csuf\end{array}\right\rangle$};
\node[anchor=west] (5aa) at (0.3,-0.3) {\footnotesize\ding{176}};
\node (5b) at (1.6,0.8) {$>=>$};
\node (5c) at (1.6,0.4) {$>=>$};
\node (5d) at (1.6,0.0) {$>=>$};
\draw[line width=0.7pt,->] (1.2,0.65) -- (1.72,0.65);
\draw[line width=0.7pt,->] (1.46,0.2) -- (1.72,0.2);
\draw[line width=0.7pt,->] (1.46,-0.25) -- (1.97,-0.25);
\end{scope}
\begin{scope}[xshift=0.26cm,yshift=-2cm]
\fill[gray!20] (2.6, 0.7) rectangle (4.65,1.08);
\fill[gray!20] (2.97, 0.25) rectangle (4.36,0.62);
\fill[gray!20] (2.57, -0.2) rectangle (4.0,0.17);
\fill[gray!20] (0.96, -0.65) rectangle (2.95,-0.29);
\fill[gray!20] (7.5, -0.65) rectangle (9.5,-0.29);
\fill[gray!20] (2.57, -1.1) rectangle (3.25,-0.75);
\setlength{\extrarowheight}{2pt}
\node[anchor=west] (c1) at (0,0) {
$\left\{\begin{array}{l}
 \text{\ding{172}}~>(=|>)^*s(=^+>^+)^+=^+s=^*>^+(=^+>^+)^*\\
 \text{\ding{173}}~(>(=|>)^*|\epsilon)s(>^+=^+)^+s=^*>^+(=^+>^+)^+\\
 \text{\ding{174}}~>(=|>)^*s(=^+>^+)^+s>^*(=^+>^+)^*\\
 \text{\ding{175}}~s>^+(=^+>^+)^*s=^*>^+(=^+>^+)^*\hspace*{2pt}|\hspace*{2pt}>(=|>)^*s>^+(=^+>^+)^*s>^*(=^+>^+)^*\\
 \text{\ding{176}}~>(=|>)^*s=^+s=^*>^+(=^+>^+)^*
 \end{array}\right.$
};
\end{scope}
\end{tikzpicture}
\caption{\label{fig:example_triples}
(A)~Set of possible types of words wrt their proper factors
of the $\DecreasingSequencePatternName$ pattern
with the corresponding examples of word $\word$ and proper factor (in grey)
where $\finf$ (resp.~$\startf$) denotes the end (resp.~start) of a maximal word
in $\Language{\DecreasingSequencePatternName}$
starting at the first position (resp. ending at the last position) of $\word$,
(B)~corresponding set of representative triples, and
(C)~languages of the types of 
words \ding{172}, \ding{173}, \ding{174}, \ding{175}, \ding{176} as computed from Equation~(\ref{eq:representative_language}) of Theorem~\ref{theorem:representative_language}.}
\end{figure}
\vspace{-0.6cm}
\subsubsection{Finding all the representatives of a pattern}\label{sec:finding_representatives}
To generate all the representatives of a pattern $\pattern$, 
\emph{(i)}~we first generate all potential word types wrt $\pattern$, and
\emph{(ii)}~we then use Definition~\ref{def:gen_triple}.
For each of the $61$ potential word type $\Tuple{t_1,t_2,t_3}$ with $t_i\in\{\cout,\cfac,\cpre,\csuf,\cin\}$
depicted in Figure~\ref{fig:triples_map},
we describe a systematic method to check whether there exists or not a word $w=w_1 w_2\dots w_k$ of
$\Language{\pattern}$ whose type is $\Tuple{t_1,t_2,t_3}$.
For this purpose we define the language of $\Tuple{t_1,t_2,t_3}$ wrt to $\pattern$ and check
whether it is empty or not.
Since we need the prefix of $w$ associated with $t_1$ to overlap the suffix of $w$ associated with $t_3$,
we first introduce the notion of \emph{shuffle language}.

\begin{definition}
\label{def:insertion_language}\emph{[shuffle language]}
Given a regular language $\Language{}$ over an input alphabet $\Sigma$,
and a possibly new input letter $s$, i.e.~a letter that does not necessarily belong to $\Sigma$,
the \emph{shuffle language of $\Language{}$ wrt $s$},
denoted $\shuffle(\Language{},s)$,
is defined by
all words $w$ over the alphabet $\Sigma\cup\{s\}$ such that
\begin{enumerate}[label=\roman*)]
\item $w$ contains at least one occurrence of the letter $s$,
\item if we remove one single occurrence of the letter $s$ from $w$
      then the resulting word belongs to $\Language{}$.
\end{enumerate}
\end{definition}

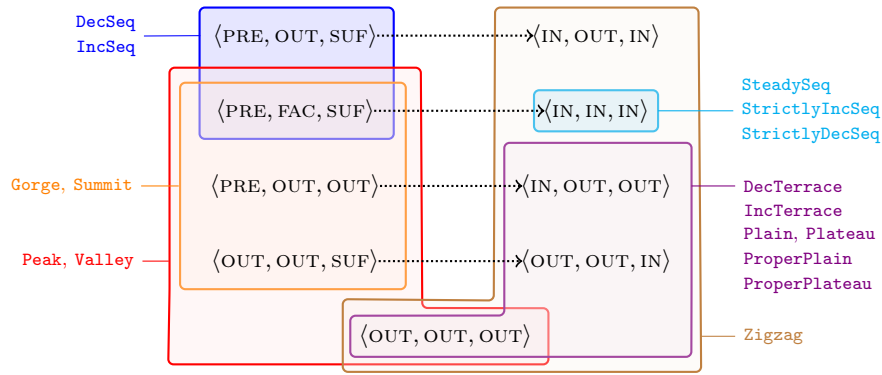
\begin{figure}[!b]
\centering
\begin{tikzpicture}
\begin{scope}
\node (pop) at (0,0)  {$\langle \gpre,\gout,\gsuf\rangle$};
\node (pfp) at (0,-1) {$\langle \gpre,\gfac,\gsuf\rangle$};
\node (poo) at (0,-2) {$\langle \gpre,\gout,\gout\rangle$};
\node (oop) at (0,-3) {$\langle \gout,\gout,\gsuf\rangle$};
\node (ooo) at (2,-4) {$\langle \gout,\gout,\gout\rangle$};
\node (ioi) at (4,0)  {$\langle \gin,\gout,\gin\rangle$};
\node (iii) at (4,-1) {$\langle \gin,\gin,\gin\rangle$};
\node (ioo) at (4,-2) {$\langle \gin,\gout,\gout\rangle$};
\node (ooi) at (4,-3) {$\langle \gout,\gout,\gin\rangle$};
\scoped[on background layer]
  \draw[rounded corners=2pt,thick,blue,fill=blue!10]
	($(pfp.south west)-(0.1,0.1)$) rectangle ($(pop.north east)+(0.1,0.1)$);
\scoped[on background layer]
  \draw[rounded corners=2pt,thick,orange,fill opacity=0.3,fill=orange!10]
    ($(oop.south west)-(0.3,0.1)$) rectangle ($(pfp.north east)+(0.3,0.1)$);
\scoped[on background layer]
  \draw[rounded corners=2pt,thick,red,fill opacity=0.3,fill=red!10]
	($(pfp.north west)+(-0.5,0.3)$) --
	($(oop.south west)+(-0.44,-0.1)$) --
	($(ooo.south west)+(-2.42,-0.1)$) --
	($(ooo.south east)+(0.1,-0.1)$) --
	($(ooo.north east)+(0.1,0.1)$) --
	($(oop.south east)+(0.46,-0.35)$) --
	($(pfp.north east)+(0.5,0.3)$) -- cycle;
\scoped[on background layer]
  \draw[rounded corners=2pt,thick,cyan,fill=cyan!10]
    (iii.south west) rectangle (iii.north east);
\scoped[on background layer]
  \draw[rounded corners=2pt,thick,violet,fill opacity=0.3,fill=violet!10]
	($(ioo.north east)+(0.15,0.3)$) --
	($(ooi.south east)+(0.15,-0.1)$) --
	($(ooo.south east)+(2,0)$) --
	($(ooo.south west)+(0,0)$) --
	($(ooo.north west)+(0,0)$) --
	($(ooi.south west)+(-0.1,-0.45)$) --
	($(ioo.north west)+(-0.1,0.3)$) -- cycle;
\scoped[on background layer]
  \draw[rounded corners=2pt,thick,brown,fill opacity=0.3,fill=brown!10]
	($(ioi.north east)+(0.39,0.1)$) --
	($(ooi.south east)+(0.27,-0.1)$) --
	($(ooo.south east)+(2.13,-0.2)$) --
	($(ooo.south west)+(-0.1,-0.2)$) --
	($(ooo.north west)+(-0.1,0.22)$) --
	($(ooi.south west)+(-0.24,-0.25)$) --
	($(ioi.north west)+(-0.37,0.1)$) -- cycle;
\draw[thick,densely dotted,->] ($(pop.east)-(0.12,0)$) -- ($(ioi.west)+(0.14,0)$);
\draw[thick,densely dotted,->] ($(pfp.east)-(0.12,0)$) -- ($(iii.west)+(0.14,0)$);
\draw[thick,densely dotted,->] ($(poo.east)-(0.12,0)$) -- ($(ioo.west)+(0.14,0)$);
\draw[thick,densely dotted,->] ($(oop.east)-(0.12,0)$) -- ($(ooi.west)+(0.14,0)$);
\node[anchor=east] (ctr1a) at ($(pop.west)-(0.8,0.165)$) {\scriptsize\color{blue}
                                                     $\IncreasingSequencePatternName$};
\draw[blue] ($(ctr1a.east)+(0,0.16)$) -- ($(pop.west)-(0.05,0)$);
\node[anchor=east] (ctr1b) at ($(pop.west)-(0.8,-0.165)$) {\scriptsize\color{blue}
                                                     $\DecreasingSequencePatternName$};
\node[anchor=east] (ctr2) at ($(poo.west)-(0.8,0)$) {\scriptsize\color{orange}
                                                     $\GorgePatternName$,
                                                     $\SummitPatternName$};
\draw[orange] ($(ctr2.east)$) -- ($(poo.west)-(0.3,0)$);
\node[anchor=east] (ctr3) at ($(oop.west)-(0.8,0)$) {\scriptsize\color{red}
                                                     $\PeakPatternName$,
                                                     $\ValleyPatternName$};
\draw[red] ($(ctr3.east)$) -- ($(oop.west)-(0.45,0)$);
                                                 
\node[anchor=west] (ctr4a) at ($(iii.east)+(1,0.33)$) {\scriptsize\color{cyan}
                                                     $\SteadySequencePatternName$};
\node[anchor=west] (ctr4b) at ($(iii.east)+(1,0)$) {\scriptsize\color{cyan}
                                                     $\StrictlyIncreasingSequencePatternName$};
\node[anchor=west] (ctr4c) at ($(iii.east)+(1,-0.33)$) {\scriptsize\color{cyan}
                                                     $\StrictlyDecreasingSequencePatternName$};
\draw[cyan] ($(ctr4b.west)+(0,0)$) -- ($(iii.east)+(0,0)$);

\node[anchor=west] (ctr5a) at ($(ioo.east)+(0.73,0)$) {\scriptsize\color{violet}
                                                     $\DecreasingTerracePatternName$};
\draw[violet] ($(ctr5a.west)+(0,0)$) -- ($(ioo.east)+(0.15,0)$);

\node[anchor=west] (ctr5a) at ($(ioo.east)+(0.73,-0.33)$) {\scriptsize\color{violet}
                                                     $\IncreasingTerracePatternName$};
\node[anchor=west] (ctr5b) at ($(ioo.east)+(0.73,-0.66)$) {\scriptsize\color{violet}
                                                     $\PlainPatternName$,
                                                     $\PlateauPatternName$};
\node[anchor=west] (ctr5c) at ($(ioo.east)+(0.73,-0.99)$) {\scriptsize\color{violet}
                                                     $\ProperPlainPatternName$};
\node[anchor=west] (ctr5d) at ($(ioo.east)+(0.73,-1.32)$) {\scriptsize\color{violet}
                                                     $\ProperPlateauPatternName$};

\node[anchor=west] (ctr6) at ($(ooi.east)+(0.73,-1)$) {\scriptsize\color{brown}
                                                     $\ZigzagPatternName$};
\draw[brown] ($(ctr6.west)+(0,0)$) -- ($(ooi.east)+(0.28,-1)$);
\end{scope}
\end{tikzpicture}
\caption{\label{fig:factor_cartography} Pattern classes, where each class corresponds to a set of representative triples (an arrow from a triple \ding{172} to a triple \ding{173} means that \ding{172} generalises \ding{173})}
\end{figure}

\begin{theorem}
\label{theorem:representative_language}\emph{[language of a word type]}
Given a pattern $\pattern$ and one of its potential word types $\Tuple{t_1,t_2,t_3}$,
the language associated with $\Tuple{t_1,t_2,t_3}$ is defined by
\begin{equation}
\bigcap\left(\begin{array}{c}
\shuffle(\shuffle(\Language{\pattern},s),s)\\
\shuffle(\Language{t_1},s)s\Sigma^*\\
\Sigma^*s\Language{t_2}s\Sigma^+|\Sigma^+s\Language{t_2}s\Sigma^*\\
\Sigma^*s~\shuffle(\Language{t_3},s)
\end{array}\right)\label{eq:representative_language}
\end{equation}
\end{theorem}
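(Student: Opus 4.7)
}
The plan is to interpret the ``language associated with $\langle t_1,t_2,t_3\rangle$'' as the set of marked words $w' = u_1\, s\, v\, s\, u_2$, where $w = u_1 v u_2 \in \Language{\pattern}$, where $v$ is a proper factor of $w$, and where the prefix $u_1 v$, middle $v$, and suffix $v u_2$ lie in $\Language{t_1}$, $\Language{t_2}$, $\Language{t_3}$ respectively. Because the five type languages $\Language{\cout},\Language{\cfac},\Language{\cpre},\Language{\csuf},\Language{\cin}$ are pairwise disjoint (they come from mutually incompatible conditions of Definition~\ref{def:word_status}), membership of a word in $\Language{t_c}$ is equivalent to that word having type $t_c$. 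I will prove set equality with the intersection in Equation~(\ref{eq:representative_language}) by double inclusion, after remarking that $s$ is a fresh letter not in $\Sigma$, so an occurrence of $s$ in $w'$ can only arise from the insertion performed by a shuffle.

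For the forward direction, I start from a valid pair $(w,v)$ and form $w' = u_1 s v s u_2$; then I verify each of the four conjuncts in turn. Since $w\in\Language{\pattern}$, inserting two $s$'s yields $w'\in\shuffle(\shuffle(\Language{\pattern},s),s)$. From $u_1 v\in\Language{t_1}$, inserting one $s$ between $u_1$ and $v$ gives $u_1 s v\in\shuffle(\Language{t_1},s)$, and concatenating with $s u_2$ shows that $w'$ belongs to $\shuffle(\Language{t_1},s)\,s\,\Sigma^*$. The third conjunct follows from $v\in\Language{t_2}$ together with the proper-factor hypothesis $u_1\neq\epsilon\lor u_2\neq\epsilon$, which is precisely what the disjunction $\Sigma^*s\Language{t_2}s\Sigma^+ \mid \Sigma^+ s\Language{t_2}s\Sigma^*$ expresses. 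The fourth conjunct follows symmetrically from $v u_2\in\Language{t_3}$.

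For the backward direction, I take any $w'$ in the intersection and reconstruct the pair. The third conjunct forces $w'$ to match $\Sigma^* s \Language{t_2} s \Sigma^*$ with at least one $\Sigma^+$, and since $\Sigma$ does not contain $s$, this pins $w'$ to the shape $u_1 s v' s u_2$ with exactly two $s$'s, $v'\in\Language{t_2}$ and $u_1 u_2\neq\epsilon$. The first conjunct allows us to remove both $s$'s and conclude $u_1 v' u_2\in\Language{\pattern}$. Applying the second conjunct to the decomposition at the second $s$ gives $u_1 s v'\in\shuffle(\Language{t_1},s)$, hence $u_1 v'\in\Language{t_1}$; the fourth conjunct symmetrically gives $v' u_2\in\Language{t_3}$. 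Thus $(u_1 v' u_2, v')$ is the desired witness pair, and $w'$ is its marked form.

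The main obstacle I anticipate is the precise bookkeeping of $s$-occurrences. The shuffle operator in principle permits a word with three or more $s$'s to satisfy conjuncts~1, 2, or 4 in isolation (by removing only one extra $s$), so the step that really does the work is showing that conjunct~3 forces exactly two $s$-markers and in particular \emph{fixes} the split points that the other three conjuncts must refer to. I will spell this out by relying on the fact that the $\Language{t_c}$ and the factors $\Sigma^*,\Sigma^+$ in conjunct~3 are all over $\Sigma$ only, so the two explicit $s$'s in the regular expression are the only ones available, after which conjuncts~2 and~4 are read relative to the same two marker positions. Once this uniqueness is established, the rest of the argument is a direct translation between the shuffle view and the factorisation view.
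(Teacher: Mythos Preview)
Your proposal is correct and follows the same approach as the paper---matching each of the four conjuncts to a constraint on the factorisation $w=u_1vu_2$, with the fresh letter $s$ synchronising the split points across conjuncts---though your double-inclusion argument is considerably more rigorous than the paper's brief sketch. One simplification: your anticipated obstacle about three or more $s$'s never arises, since $s\notin\Sigma$ and $\Language{\pattern}\subseteq\Sigma^*$ imply that the first conjunct $\shuffle(\shuffle(\Language{\pattern},s),s)$ by itself already forces exactly two occurrences of $s$, so the marker positions are unique from the outset rather than something conjunct~3 must pin down.
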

\begin{proof}
The four sub-expressions on the right-hand side of~(\ref{eq:representative_language})
respectively correspond to
a word of $\Language{\pattern}$ to which two occurrences of $s$ are inserted, and in three ways of decomposing
it wrt its prefix, to its window, and to its suffix.
The letter $s$ is used to ``synchronise''
these decompositions, i.e.~to enforce a non\nobreakdash-empty intersection between the prefix and the suffix.
Since $\Language{t_2}$ does not contain the empty word,
the two occurrences of $s$ delimit a non\nobreakdash-empty window.\qed
\end{proof}

\begin{example}
[Continuation of Example (\ref{ex:word_types})]
Part~(C) of Figure~\ref{fig:example_triples} gives the languages of the types of words
$\langle \cpre,\cfac,\csuf\rangle$, $\langle \cpre,\cpre,\cin\rangle$,
$\langle \cin,\csuf,\csuf\rangle$, $\langle \cin,\cin,\cin\rangle$, and
$\langle \cpre,\cout,\csuf\rangle$ for the $\DecreasingSequencePatternName$ pattern,
as defined by Theorem~\ref{theorem:representative_language}.
Note that all other triples lead to the empty language.
\end{example}

Evaluating whether the language associated with a regular expression is empty or not
(e.g.~Expression~(\ref{eq:representative_language})) is done by
\emph{(i)}~converting all its operator instances (e.g.~union, intersection, concatenation, Kleene star, shuffle, \dots) to deterministic finite automata, by
\emph{(ii)}~evaluating the corresponding sequence of operations
on finite automata, and by
\emph{(iii)}~checking whether the resulting minimised automaton has at least one accepting state or not.
Following this methodology, Appendix~\ref{sec:evaluating_pattern_properties} gives the corresponding programs which
compute the representatives and the properties of a pattern.
We now show how to generate a finite automaton for the $\shuffle$ operator
that we previously introduce.
\begin{sloppypar}
\begin{itemize}
\item\hspace*{0.0001pt}[$\shuffle$]
From the deterministic and minimised automaton $A_\mathcal{L}$
associated with $\Language{}$, one can build the automaton $A_{\shuffle(\mathcal{L},s)}$
associated with the language $\shuffle(\Language{},s)$ by
(\emph{i})~duplicating all states of $A_\mathcal{L}$ and make them non\nobreakdash-initial,
(\emph{ii})~make all states of $A_\mathcal{L}$ non\nobreakdash-accepting,
(\emph{iii})~add a transition labelled by $s$ from each state to its duplicated state.
\end{itemize}
\end{sloppypar}

\subsubsection{Establishing the properties of a pattern}\label{sec:finding_pattern_properties}
We now describe how to systematically find the properties of a pattern $\pattern$.
We use $\Language{\pattern}^\mathit{ssss}$ (resp.~$\Language{\pattern}^\mathit{ss}$)
as a shortcut for
$\shuffle(\shuffle(\shuffle(\shuffle(\Language{\pattern},s),s),s),s)$
(resp.~$\shuffle(\shuffle(\Language{\pattern},s),s)$).
\begin{sloppypar}
\begin{itemize}[label=\textbullet]
\item
A pattern $\pattern$ has the \ConvexityProp{} iff
\begin{equation}
\scriptstyle
       \bigcap\left(
       \begin{array}{c}
	   \Language{\pattern}^\mathit{ss}\\
	   \Sigma^*s\,\Language{\pattern}\Sigma^*\Language{\pattern}\,s\,\Sigma^*\\
	   \Sigma^*s\,(\Sigma^+\setminus\Language{\pattern})\,s\,\Sigma^*
       \end{array}\right)
       \hspace*{2pt}\bigcup\hspace*{4pt}
       \bigcap\left(
       \begin{array}{c}
       \Language{\pattern}^\mathit{ssss}\\
       \Sigma^*s\,\shuffle(\Language{\pattern},s)\,s\,\Sigma^+ s\,\Sigma^* \\
       \Sigma^*s\,\Sigma^+ s\,\shuffle(\Language{\pattern},s)\,s\,\Sigma^* \\
       \Sigma^*s\,\shuffle(\shuffle(\Sigma^+\setminus\Language{\pattern},s),s)\,s\,\Sigma^*
       \end{array}\right)
       =\emptyset
\label{eq_convexity_prop}
\end{equation}
\item A pattern $\pattern$ has the \InflexionFreeProp{} iff
\begin{equation}(\Language{\pattern}\cap\Sigma^*<\Sigma^*>\Sigma^*)\cup(\Language{\pattern}\cap\Sigma^*>\Sigma^*<\Sigma^*)=\emptyset\label{eq_inflexion_free_prop}
\end{equation}
\item A pattern $\pattern$ has the \OneInflexionProp{} iff
\begin{equation}\Language{\pattern}\setminus\Language{(<|=)^*<=^*>(>|=)^*|(>|=)^*>=^*<(<|=)^*}=\emptyset\label{eq_one_inflexion_prop}
\end{equation}
\item A pattern $\pattern$ has the \ExcludeOutInProp{} iff
\begin{equation}
 \scriptstyle
 \bigcap\left(
 \begin{array}{c}
 \Language{\pattern}^\mathit{ssss} \\
 \Sigma^* s \Sigma^+ s\,\shuffle(\Sigma^+ \setminus (\Sigma^* \Language{\pattern} \Sigma^*),s)\,s\,\Sigma^* \\
 \Sigma^* s\,\shuffle(\Language{\pattern},s)\,s\,\Sigma^*\,s\,\Sigma^* \\
 \Sigma^* s\,\Sigma^+ s\,\Sigma^+ s\,\Sigma^* s\,\Sigma^*
 \end{array}\right)
 \hspace*{2pt}\bigcup\hspace*{4pt}
 \bigcap\left(
 \begin{array}{c}
 \Language{\pattern}^\mathit{ssss} \\
 \Sigma^* s\,\shuffle(\Sigma^+ \setminus (\Sigma^* \Language{\pattern} \Sigma^*),s)\,s\,\Sigma^+\,s\,\Sigma^*\\
 \Sigma^* s\,\Sigma^* s\,\shuffle(\Language{\pattern},s)\,s\,\Sigma^* \\
 \Sigma^* s\,\Sigma^* s\,\Sigma^+ s\,\Sigma^+ s\,\Sigma^*
 \end{array}\right) =\emptyset
\label{eq_exclude_out_in_prop}
\end{equation}
\item A pattern $\pattern$ has the \SingleLetterProp{} iff
\begin{equation}\Language{\pattern}\setminus\Language{<|=|>}=\emptyset\label{eq_single_letter_prop}
\end{equation}
\end{itemize} 
\end{sloppypar}
As the constructions used
in~(\ref{eq_convexity_prop}),
(\ref{eq_inflexion_free_prop}),
(\ref{eq_one_inflexion_prop}),
(\ref{eq_exclude_out_in_prop}) and
(\ref{eq_single_letter_prop})
are similar to the one used in Theorem~\ref{theorem:representative_language},
they are not detailed.

\subsubsection{Proof of Equations Based on Pattern and Feature Properties.}\label{subsec:proofs}

In this section we study the properties of patterns and features
that ensure the validity of Equations~\eqref{formula1}, \eqref{formula2}
and \eqref{formula3}.
While pattern properties were already introduced in Sections~\ref{sec:pattern_prop} and~\ref{sec:classify_pattern},
we first present some feature properties.
Second, we focus on the validity domain of Equation~\eqref{formula1},
and finally, based on these results, we derive the properties of the patterns
and features for Equations~\eqref{formula2} and~\eqref{formula3}.
From now on we focus on commutative features, as well as reversible and
convex patterns, which do not have the \SingleLetterProp{}.

\subsubsection{Feature Properties}
All definitions of this section,
i.e.~Definitions~\ref{pf-sum_decomp} to~\ref{pf-single_pos_inflexion}, as well as all theorems of this section,
i.e.~Theorems~\ref{theorem:E1_P1} to~\ref{theorem:E3}, consider
\emph{(i)}~a reversible and convex pattern
$\pattern=\langle\Language{\pattern},b_\pattern,a_\pattern\rangle$,
\emph{(ii)}~a sequence of variables $\mathcal{X}=x_1 x_2\dots x_\seqlength$,
\emph{(iii)}~an extended $\pattern$-pattern occurrence in $[1,\seqlength]$ given by 
$o=\Tuple{x_\ell x_{\ell+1}\dots x_u}$
with $1\leq \ell \leq u\leq \seqlength$, and
\emph{(iv)}~a commutative feature $f$ applied to $o$.

We first present four feature properties that only depend on the feature $\feature$.
We then introduce two additional feature properties that depend on both
the feature $\feature$ and the pattern $\pattern$.
Finally, Part~(A) of Figure~\ref{fig:feature_properties_theorems_cartography}
summarises the feature properties of each of the features defined
in the time-series catalogue~\cite{arafailova2016global}.

\begin{definition}
\label{pf-sum_decomp}
A feature $\feature$ has the \SumDecompositionProp{} if
$\feature_\pattern(x_{\ell,u})$ can be expressed as $\sum\limits_{\lTotal}^{\uTotal}{h(x_t)}$, 
where $h(x_t)$ is a function. 
\end{definition}
E.g., when $\feature=\Width$, $h(x_t)=1$ 
and the value returned by the application of $\feature$ to the 
extended $\pattern$-pattern occurrence $o$ is $u-\ell-b_\pattern-a_\pattern+1$.

\begin{definition}
\label{pf-same_value} 
A feature $\feature$ has the \SameValueProp{} if 
$\feature_\pattern(x_{\ell,u})=\feature_\pattern(x_{i,j})$ for all $i,j$ 
($\ell\leq i \leq j\leq u$) such that the sequence $x_{i,j}$ alone is an extended $\pattern$\nobreakdash-pattern occurrence.
\end{definition}
E.g., when $\feature=\One$ and $x_{i,j}$ is an extended $\pattern$\nobreakdash-pattern,  
$\feature_\pattern(x_{\ell,u})=\feature_\pattern(x_{i,j})=1$.

\begin{definition}
\label{pf-one_value} 
A feature $\feature$ has the \SinglePositionProp{} if
$\feature_\pattern(x_{\ell,u})$ can be expressed as $h(x_t)$ 
with $x_t\in\{x_{\ell+b_\pattern}, x_{\ell+b_\pattern+1}, ..., x_{u-a_\pattern}\}$. 
\end{definition}
E.g., when $\feature=\MaxFeature$, $h(x_t)=x_t$ 
and $\feature_\pattern(x_{\ell,u})$ is the maximum of the variables 
in~$x_{\ell+b_\pattern,u-a_\pattern}$.

\begin{definition}
\label{pf-positive} 
A feature $\feature$ has the \PositiveProp{} if 
$\feature_\pattern(x_{i,j})\geq 0~\forall~i,j$, 
such that $1\leq i\leq j\leq \seqlength$.
\end{definition}

\begin{sloppypar}
\begin{definition}
\label{pf-single_pos_inflexion_free} 
A feature $\feature$ and a pattern $\pattern$
have the \SinglePositionInflexionFreeProp{} if
(i)~$\feature$ has the \SinglePositionProp{}, 
(ii)~$\pattern$ has the \InflexionFreeProp{} and either
(iii.a)~for all extended $\pattern$\nobreakdash-pattern occurrences
$x_{p,q}$ wrt $x_{p,q}$ (with $\ell\leq p\leq q\leq u$)
$\feature_\pattern(x_{p,q})=h(x_{p+b_\pattern})$, or
(iii.b)~for all extended $\pattern$\nobreakdash-pattern occurrences
$x_{p,q}$ wrt $x_{p,q}$ (with $\ell\leq p\leq q\leq u$)
$\feature_\pattern(x_{p,q})=h(x_{q-a_\pattern})$.
\end{definition}
E.g., the pair $\pattern= \DecreasingSequencePatternName$, $\feature=\MinFeature$, 
has the \SinglePositionInflexionFreeProp{} 
since $\feature_\pattern(x_{\ell,u})=x_{q-a_\pattern}$,
where $q$ is the end of the extended $\pattern$-pattern occurrence in $x_{\ell,u}$.
\end{sloppypar}

\begin{definition}
\label{pf-single_pos_inflexion}
A feature $\feature$ and a pattern $\pattern$ have the 
\SinglePositionInflexionProp{} if
(i)~$\feature$ has the \SinglePositionProp{}, 
(ii)~$\pattern$ has the \OneInflexionProp{} and 
(iii)~$\feature_\pattern(x_{\ell,u})$ is computed from the position of the 
only inflexion of $\pattern$. 
\end{definition}
E.g., the pair $\pattern= \GorgePatternName$, $\feature=\MinFeature$ has the \SinglePositionInflexionProp{} 
since the value of $\feature_\pattern(x_{\ell,u})$ corresponds to the only inflexion of the extended $\pattern$\nobreakdash-pattern occurrence. 
But the pair $\pattern= \GorgePatternName$, $\feature=\MaxFeature$, 
does not have the \SinglePositionInflexionProp{} since $\feature_\pattern(x_{\ell,u})$
corresponds to one of the two extremities of the gorge. 

The next two sections define sufficient conditions where
\emph{(i)}~Equation \eqref{formula1} and
\emph{(ii)}~Equations \eqref{formula2} and \eqref{formula3}
can be used to compute the value of $\featureTw$ wrt a pattern $\pattern$,
depending on the representatives of a pattern.
Part~(B) of Figure~\ref{fig:feature_properties_theorems_cartography} summarises all the theorems introduced in these two sections
wrt the representatives of Figure~\ref{fig:factor_cartography}.

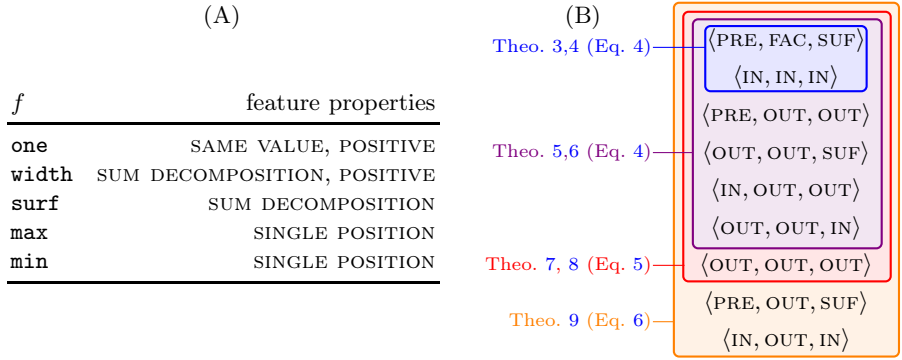
\begin{figure}[!t]
\centering
\begin{tikzpicture}[information text/.style={rounded corners,inner sep=1ex}]
\begin{scope}
\draw node[right,text width=9.6cm,information text]{
   	\begin{tabular}{lr}
     	 $\feature$~~      & $\textnormal{feature properties}$ \\ \midrule
      	 $\One$~~          & \SameValue{},~\Positive{}         \\
      	 $\Width$~~        & \SumDecomposition{},~\Positive{}  \\
	     $\Surf$~~         & \SumDecomposition{}               \\
     	 $\MaxFeature$~~   & \SinglePosition{}                 \\
     	 $\MinFeature$~~   & \SinglePosition{}                 \\ \midrule                        
    \end{tabular}
};
\end{scope}
\begin{scope}[xshift=10.5cm,yshift=2cm]
\node (pfp) at (0,0)    {$\langle \gpre,\gfac,\gsuf\rangle$};
\node (iii) at (0,-0.5) {$\langle \gin,\gin,\gin\rangle$};
\node (poo) at (0,-1)   {$\langle \gpre,\gout,\gout\rangle$};
\node (oop) at (0,-1.5) {$\langle \gout,\gout,\gsuf\rangle$};
\node (ioo) at (0,-2)   {$\langle \gin,\gout,\gout\rangle$};
\node (ooi) at (0,-2.5) {$\langle \gout,\gout,\gin\rangle$};
\node (ooo) at (0,-3)   {$\langle \gout,\gout,\gout\rangle$};
\node (pop) at (0,-3.5) {$\langle \gpre,\gout,\gsuf\rangle$};
\node (ioi) at (0,-4)   {$\langle \gin,\gout,\gin\rangle$};
\scoped[on background layer]
  \draw[rounded corners=2pt,thick,orange,fill=orange!10]
	($(ioi.south west)-(0.52,-0.06)$) rectangle ($(pfp.north east)+(0.33,0.22)$);
\scoped[on background layer]
  \draw[rounded corners=2pt,thick,red,fill=red!10]
	($(ooo.south west)-(0.1,-0.06)$) rectangle ($(pfp.north east)+(0.22,0.1)$);
\scoped[on background layer]
  \draw[rounded corners=2pt,thick,violet,fill=violet!10]
	($(ooi.south west)-(0.12,0)$) rectangle ($(pfp.north east)+(0.1,0)$);
\scoped[on background layer]
  \draw[rounded corners=2pt,thick,blue,fill=blue!10]
	($(iii.south west)-(0.25,-0.09)$) rectangle ($(pfp.north east)+(-0.1,-0.09)$);
\node[anchor=east] (theo_E1_P1) at ($(pfp.west)-(0.5,0.1)$) {\scriptsize\color{blue}
           Theo.~\ref{theorem:E1_P1},\ref{theorem:E1_P3_IF} (Eq.~\ref{formula1})};
\node[anchor=east] (theo_E1_P2) at ($(pfp.west)-(0.5,1.5)$) {\scriptsize\color{violet}
                 Theo.~\ref{theorem:E1_P2},\ref{theorem:E1_P3} (Eq.~\ref{formula1})};
\node[anchor=east] (theo_E2_P1_P) at ($(pfp.west)-(0.5,3)$) {\scriptsize\color{red}
                 Theo.~\ref{theorem:E2_P1_P}, \ref{theorem:E2_P2_P} (Eq.~\ref{formula2})};
\node[anchor=east] (theo_E3) at ($(pfp.west)-(0.5,3.75)$) {\scriptsize\color{orange}
                 Theo.~\ref{theorem:E3} (Eq.~\ref{formula3})};
\draw[blue] ($(pfp.west)-(0.6,0.1)$) -- ($(pfp.west)-(-0.1,0.1)$);
\draw[violet] ($(pfp.west)-(0.6,1.5)$) -- ($(pfp.west)-(0.08,1.5)$);
\draw[red] ($(pfp.west)-(0.6,3)$) -- ($(pfp.west)-(0.18,3)$);
\draw[orange] ($(pfp.west)-(0.6,3.75)$) -- ($(pfp.west)-(0.32,3.75)$);
\end{scope}
\begin{scope}[xshift=3cm,yshift=-2cm]
\node (A) at (0,4.3) {(A)};
\node (B) at (4.8,4.3) {(B)};
\end{scope}
\end{tikzpicture}
\caption{\label{fig:feature_properties_theorems_cartography}
(A)~Properties of the features defined in Table~\ref{tab:fgp} and used in~\cite{arafailova2016global},
(B)~theorems coverage for the different representatives of Figure~\ref{fig:factor_cartography}.}
\end{figure}

\begin{sloppypar}
\begin{remark}\label{rem:symmetries}
Wlog, while doing the proof of such conditions we proceed as follows:
\begin{itemize}
\item
When the representatives $\Tuple{\gpre,\gfac,\gsuf}$ and $\Tuple{\gin,\gin,\gin}$
are both present, only $\Tuple{\gpre,\gfac,\gsuf}$ is considered,
since for $\Tuple{\gin,\gin,\gin}$ $\startf=i$ and $\finf=j$
is a special case of $\Tuple{\gpre,\gfac,\gsuf}$.
\item
Similarly, when the representatives $\Tuple{\gpre,\gout,\gout}$ and $\Tuple{\gin,\gout,\gout}$ 
(resp.~$\Tuple{\gout,\gout,\gsuf}$ and $\Tuple{\gout,\gout,\gin}$) both intervene in a proof, 
only $\Tuple{\gpre,\gout,\gout}$ (resp.~$\Tuple{\gout,\gout,\gsuf}$) is considered, 
as  $\finf=j$ (resp.~$\startf=i$).
\item
When $\Tuple{\gin,\gout,\gout}$ and $\Tuple{\gout,\gout,\gin}$ 
(resp.~$\Tuple{\gpre,\gout,\gout}$ and $\Tuple{\gout,\gout,\gsuf}$) both intervene in a proof,
only $\Tuple{\gin,\gout,\gout}$ (resp.~$\Tuple{\gpre,\gout,\gout}$) is considered, 
as the representative $\Tuple{\gout,\gout,\gin}$ (resp.~$\Tuple{\gout,\gout,\gsuf}$) is symmetric.
\end{itemize}
\end{remark}
\end{sloppypar}

\subsubsection{Sufficient Conditions for the Validity of Equation \eqref{formula1}}

\begin{theorem}
\label{theorem:E1_P1}
Consider a pattern $\pattern$ whose class has a non-empty intersection with the set of representatives  
$\mathcal{S}=\{\Tuple{\gpre,\gfac,\gsuf}$, $\Tuple{\gin,\gin,\gin}\}$.
Equation~\eqref{formula1} can be used to obtain $\featureTw$  
for a sequence $x_{1,n}$ wrt a window $\tw$ whose type is in $\mathcal{S}$, 
assuming that feature $\feature$ has the \SumDecompositionProp{}.
\end{theorem}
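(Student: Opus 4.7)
The plan is to establish Equation~\eqref{formula1} by decomposing both sides into per\nobreakdash-occurrence contributions and verifying equality occurrence by occurrence. I would fix any maximal extended $\pattern$\nobreakdash-pattern occurrence $x_{\ell,u}$ with $1\leq\ell\leq u\leq\seqlength$ in the full sequence, and position it relative to the window $\tw$ according to the eight cases of Table~\ref{table:cases}. For cases~(1)--(7) the contribution of this occurrence to the right\nobreakdash-hand side of~\eqref{formula1} already agrees with its contribution to $\featureTw$ by column~5 of Table~\ref{table:contribution}, so nothing further is required; those rows hold for every feature.

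The non\nobreakdash-trivial case is~(8), where $x_{\ell,u}$ strictly encompasses the window ($\ell\leq i\leq j\leq u$ with $(\ell,u)\neq(i,j)$), and it is here that the hypothesis on the representative class enters. By assumption the triple associated with $(x_{\ell,u},\tw)$ generalises to one of $\Tuple{\gpre,\gfac,\gsuf}$ or $\Tuple{\gin,\gin,\gin}$, so its three components lie in $\{\cpre,\cin\}\times\{\cfac,\cin\}\times\{\csuf,\cin\}$. In particular the window's signature contains a $\pattern$\nobreakdash-pattern, the prefix $x_{\ell,j}$ admits a maximal pattern $x_{\ell,\finf}$ starting at $\ell$, and the suffix $x_{i,u}$ admits a maximal pattern $x_{\startf,u}$ ending at $u$. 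Using the \ConvexityProp{} together with the boundary restrictions forced by $\cpre$ (no suffix of $x_{\ell,j}$ lies in $\Language{\pattern}$) and $\csuf$ (no prefix of $x_{i,u}$ lies in $\Language{\pattern}$), as well as the special cases $\finf=j$ and $\startf=i$ that $\cin$ imposes, I would show that $x_{\startf,\finf}$ is precisely the unique maximal $\pattern$\nobreakdash-pattern located inside the window. Remark~\ref{rem:symmetries} cuts the enumeration down to essentially the single representative $\Tuple{\gpre,\gfac,\gsuf}$.

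Finally, I would invoke the \SumDecompositionProp{} of $\feature$ to rewrite each of the four relevant feature values as a sum of $h(x_t)$ over the appropriate range. Using commutativity of $\feature$ together with the identities $b_{\pattern^r}=a_\pattern$ and $a_{\pattern^r}=b_\pattern$ from reversibility, the contribution of the occurrence to the right\nobreakdash-hand side of~\eqref{formula1} becomes
\[
\sum_{t=\ell+b_\pattern}^{\finf-a_\pattern} h(x_t) \;+\; \sum_{t=\startf+b_\pattern}^{u-a_\pattern} h(x_t) \;-\; \sum_{t=\ell+b_\pattern}^{u-a_\pattern} h(x_t),
\]
which, using $\ell\leq\startf$, $\finf\leq u$ and $\startf+b_\pattern\leq\finf-a_\pattern$ from the previous step, collapses by inclusion\nobreakdash-exclusion to $\sum_{t=\startf+b_\pattern}^{\finf-a_\pattern} h(x_t)$, and this equals the contribution to $\featureTw$ by the same decomposition applied to $x_{\startf,\finf}$. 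The main obstacle is the convexity argument in the middle step: establishing, purely from the representatives and the \ConvexityProp{}, that the maximal pattern inside the window coincides with $[\startf,\finf]$ rather than merely being contained in it. All remaining bookkeeping on nested intervals is routine given the case analysis of Tables~\ref{table:cases} and~\ref{table:contribution}.
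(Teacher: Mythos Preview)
Your proposal is correct and follows essentially the same route as the paper's proof: reduce to the representative $\Tuple{\gpre,\gfac,\gsuf}$ via Remark~\ref{rem:symmetries}, use the \ConvexityProp{} to pin down that the unique maximal pattern in the window is exactly $x_{\startf,\finf}$ with $w_\gpre$ ending at $\finf$ and $w_\gsuf$ starting at $\startf$, then rewrite each feature value via the \SumDecompositionProp{} and cancel using reversibility. Your inclusion\nobreakdash-exclusion phrasing of the final telescoping is equivalent to the paper's explicit expansion of the second and third sums; and your explicit mention of cases~(1)--(7) via Table~\ref{table:contribution} is a framing the paper leaves implicit, since the theorem as stated concerns only the case~(8) contribution.
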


\begin{proof}
From Remark~\ref{rem:symmetries}, we just consider $\Tuple{\gpre,\gfac,\gsuf}$.
We first establish properties between the maximum words associated with $\gpre$, $\gfac$ and $\gsuf$.
\begin{itemize}
\item
From the \ConvexityProp{}, the signature of the words $x_{\ell,j}$, $x_{i,j}$ and $x_{i,u}$
respectively contain at most one maximum word in $\Language{\pattern}$.
Because of $\gpre$, $\gfac$ and $\gsuf$, the signature of the words $x_{\ell,j}$, $x_{i,j}$ and $x_{i,u}$
respectively contain at least one word in $\Language{\pattern}$.
Consequently, the signatures of the words $x_{\ell,j}$, $x_{i,j}$ and $x_{i,u}$ contain
one single maximum word in $\Language{\pattern}$, respectively
denoted by $w_\gpre$, $w_\gfac$ and $w_\gsuf$.
\item
Because the words $w_\gpre$ and $w_\gfac$ must not end after position $j$,
and from the \ConvexityProp{}, $w_\gpre$ and $w_\gfac$ end in the same position $\finf$.
\item
Because the words $w_\gsuf$ and $w_\gfac$ must not start before position $i$,
and from the \ConvexityProp{}, $w_\gsuf$ and $w_\gfac$ start at the same position $\startf$.
\item
Because the word $w_\gfac$ starts at position $\startf$ and ends at position $\finf$ we have that $\startf\leq\finf$.
\end{itemize}

Since $\feature$ has the \SumDecompositionProp{}, 
by using the function $h$ of Definition~\ref{pf-sum_decomp},
Equation~\eqref{formula1} can be rewritten as:

\noindent\begin{tikzpicture}[information text/.style={rounded corners,inner sep=1ex}]
\begin{scope}[font=\scriptsize]
\filldraw[fill=gray!50,draw=black] (0.36,0) rectangle (2.16,0.2);
\fill[gray!20] (1.08,0) rectangle (1.44,0.2);
\draw[draw=black] (0.36,0) rectangle (2.16,0.2);
\draw[-] (0.00,0) -- (2.52,0.0);
\draw    (0.00,0) -- (0.00,0.1);
\draw    (0.36,0) -- (0.36,0.1);
\draw    (0.72,0) -- (0.72,0.1);
\draw    (1.08,0) -- (1.08,0.1);
\draw    (1.44,0) -- (1.44,0.1);
\draw    (1.80,0) -- (1.80,0.1);
\draw    (2.16,0) -- (2.16,0.1);
\draw    (2.52,0) -- (2.52,0.1);
\node[anchor=south] (1) at (0.00,-0.4) {$1$};
\node[anchor=south] (2) at (0.36,-0.4) {$\ell$};
\node[anchor=south] (3) at (0.72,-0.4) {$i$};
\node[anchor=south] (4) at (1.08,-0.4) {$\startf$};
\node[anchor=south] (5) at (1.44,-0.45) {$\finf$};
\node[anchor=south] (6) at (1.80,-0.45) {$j$};
\node[anchor=south] (7) at (2.16,-0.4) {$u$};
\node[anchor=south] (8) at (2.52,-0.4) {$n$};
\node               (9) at (0.88,-0.6) {$\underbrace{\hspace*{52pt}}_{\featurePre=f_\sigma(x_{\ell,j})}$};
\node              (10) at (1.60, 0.6) {$\overbrace{\hspace*{52pt}}^{\featureSuf=f_{\sigma^r}(x_{u,i})}$};

\end{scope}
\begin{scope}[xshift=2.2cm,yshift=0.145cm]
\draw node[right,text width=9.6cm,information text]{
\begin{align}
\featureTw=\underbrace{\sum\limits_{\lPre}^{\uPre}{h(x_t)}}_{\featurePre}+
\underbrace{\sum\limits_{\lSuf}^{\uSuf}{h(x_t)}}_{\featureSuf}-\underbrace{\sum\limits_{\lTotal}^{\uTotal}{h(x_t)}}_{\featureTotal}\label{T1_C1}
\end{align}
};
\end{scope}
\end{tikzpicture}

By using the fact that the pattern $\pattern$ is reversible
(i.e.~$a_{\pattern^r}=b_\pattern$ and $b_{\pattern^r}=a_\pattern$)
in the second term of Equation~\eqref{T1_C1},
by expanding the terms $\featureSuf$ and $\featureTotal$
we obtain: 
\begin{align*}
\underbrace{\sum\limits_{\lPre}^{\uPre}{h(x_t)}}_{\featurePre}+
\underbrace{ \sum\limits_{\lTw}^{\uTw}{h(x_t)}+ \sum\limits_{t=\finf-a_\pattern+1}^{u-a_\pattern}{h(x_t)}}_{\featureSuf}-
\underbrace{\sum\limits_{\lPre}^{\uPre}{h(x_t)} - \sum\limits_{t=\finf-a_\pattern+1}^{u-a_\pattern}{h(x_t)}}_{\featureTotal}=\\
\sum\limits_{\lTw}^{\uTw}{h(x_t)}=\featureTw.
\end{align*}
\vspace{-0.5cm}

\noindent Hence, Equation \eqref{formula1} holds.
\qed
\end{proof}

\begin{theorem}
\label{theorem:E1_P3_IF}
Consider a pattern $\pattern$ whose class has a non-empty intersection with the set of representatives 
$\mathcal{S}=\{\Tuple{\gpre,\gfac,\gsuf}$, $\Tuple{\gin,\gin,\gin}\}$. 
Equation~\eqref{formula1} can be used to obtain $\featureTw$ 
for a sequence $x_{1,n}$ wrt window $\tw$ whose type is in $\mathcal{S}$, 
assuming that $\feature,\pattern$ has the \SinglePositionInflexionFreeProp{}.
\end{theorem}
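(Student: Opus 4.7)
The plan is to mirror the structure of the proof of Theorem~\ref{theorem:E1_P1}, replacing the algebraic use of \SumDecompositionProp{} by a purely positional argument based on \SinglePositionInflexionFreeProp{}. By Remark~\ref{rem:symmetries} it suffices to treat the representative $\Tuple{\gpre,\gfac,\gsuf}$, since for $\Tuple{\gin,\gin,\gin}$ we have the degenerate special case $\startf=i$ and $\finf=j$, which slots into the same argument.

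First, I would replay the geometric set-up established in the proof of Theorem~\ref{theorem:E1_P1}: from \ConvexityProp{} and the window type in $\mathcal{S}$, each of the signatures of $x_{\ell,j}$, $x_{i,j}$ and $x_{i,u}$ contains a unique maximum word in $\Language{\pattern}$, and there exist positions $\startf$ and $\finf$ with $\ell\leq i\leq\startf\leq\finf\leq j\leq u$ such that the extended pattern occurrences $w_\gpre$, $w_\gfac$, $w_\gsuf$ span $[\ell,\finf]$, $[\startf,\finf]$ and $[\startf,u]$ respectively, while the full maximal occurrence witnessing Case~(8) of Table~\ref{table:cases} spans $[\ell,u]$. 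I will also use the commutativity of $\feature$ together with the reversibility of $\pattern$ to identify $\feature_{\pattern^r}(x_{n,i})$ with $\feature_\pattern(x_{i,n})$, so that Equation~\eqref{formula1} becomes an identity purely in terms of $\feature_\pattern$ applied to four nested intervals.

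Then I would apply \SinglePositionInflexionFreeProp{} in each of its two sub-cases. Under sub-case (iii.a), every extended $\pattern$-pattern $x_{p,q}$ with $\ell\leq p\leq q\leq u$ satisfies $\feature_\pattern(x_{p,q})=h(x_{p+b_\pattern})$; hence $w_\gpre$ and the full occurrence $x_{\ell,u}$ both contribute $h(x_{\ell+b_\pattern})$, while $w_\gfac$ and $w_\gsuf$ both contribute $h(x_{\startf+b_\pattern})$. Substituting into the right-hand side of Equation~\eqref{formula1},
\[
h(x_{\ell+b_\pattern})+h(x_{\startf+b_\pattern})-h(x_{\ell+b_\pattern})=h(x_{\startf+b_\pattern})=\feature_\pattern(x_{i,j}),
\]
exactly the left-hand side. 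Sub-case (iii.b) is handled symmetrically via $h(x_{q-a_\pattern})$: this time $w_\gpre$ and $w_\gfac$ share the value $h(x_{\finf-a_\pattern})$, while $w_\gsuf$ and $x_{\ell,u}$ share the value $h(x_{u-a_\pattern})$, and the same telescoping delivers the identity.

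The main obstacle I anticipate is not the computation itself but keeping track of which two of the four occurrences share a common feature value; this is precisely where \InflexionFreeProp{} is used, since it guarantees that the single significant position of an extended pattern is anchored either at its left endpoint (case iii.a) or at its right endpoint (case iii.b), independently of how far the occurrence has been extended beyond the window. Without this independence, truncating $x_{\ell,u}$ to $w_\gpre$ or to $w_\gsuf$ could shift the single relevant position and break the cancellation. Verifying this clean pairing in both sub-cases is the only delicate bookkeeping step.
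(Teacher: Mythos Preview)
Your proposal is correct and follows essentially the same approach as the paper: reduce to $\Tuple{\gpre,\gfac,\gsuf}$ via Remark~\ref{rem:symmetries}, reuse the geometric set-up of Theorem~\ref{theorem:E1_P1}, and then in each sub-case of Definition~\ref{pf-single_pos_inflexion_free} observe that two of the four occurrences share the same anchored position so that the telescoping in Equation~\eqref{formula1} goes through. The paper handles the reversal by writing the suffix term as $h(x_{\startf+a_{\pattern^r}})$ and invoking $a_{\pattern^r}=b_\pattern$, whereas you first identify $\feature_{\pattern^r}(x_{n,i})$ with $\feature_\pattern(x_{i,n})$ and then apply~(iii.a) directly; this is only a cosmetic difference, and your explicit treatment of sub-case~(iii.b) is more detailed than the paper's, which simply declares it symmetric.
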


\begin{proof}
Because of Remark~\ref{rem:symmetries} we only consider the representative $\Tuple{\gpre,\gfac,\gsuf}$.
In this context, for the reason quoted in the first part of the proof of Theorem~\ref{theorem:E1_P1}, the signature of $x_{\ell,j}$ contains a maximum word in $\Language{\pattern}$ ending at position $\finf$, the signature of $x_{i,j}$ contains a maximum word in $\Language{\pattern}$ starting at $\startf$ and ending at $\finf$, the signature of $x_{i,u}$ contains a maximum word in $\Language{\pattern}$ starting at $\startf$.
\begin{itemize}
\item[\ding{172}]
When Condition~\emph{(iii.a)} of Definition~\ref{pf-single_pos_inflexion_free} holds,
by using the function $h$ of Definition~\ref{pf-single_pos_inflexion_free},
Equation~\eqref{formula1} can be rewritten as:

\noindent\begin{tikzpicture}[information text/.style={rounded corners,inner sep=1ex}]
\begin{scope}[font=\scriptsize]
\filldraw[fill=gray!50,draw=black] (0.36,0) rectangle (2.16,0.2);
\fill[gray!20] (1.08,0) rectangle (1.44,0.2);
\draw[draw=black] (0.36,0) rectangle (2.16,0.2);
\draw[-] (0.00,0) -- (2.52,0.0);
\draw    (0.00,0) -- (0.00,0.1);
\draw    (0.36,0) -- (0.36,0.1);
\draw    (0.72,0) -- (0.72,0.1);
\draw    (1.08,0) -- (1.08,0.1);
\draw    (1.44,0) -- (1.44,0.1);
\draw    (1.80,0) -- (1.80,0.1);
\draw    (2.16,0) -- (2.16,0.1);
\draw    (2.52,0) -- (2.52,0.1);
\node[anchor=south] (1) at (0.00,-0.4) {$1$};
\node[anchor=south] (2) at (0.36,-0.4) {$\ell$};
\node[anchor=south] (3) at (0.72,-0.4) {$i$};
\node[anchor=south] (4) at (1.08,-0.4) {$\lambda$};
\node[anchor=south] (5) at (1.44,-0.45) {$\psi$};
\node[anchor=south] (6) at (1.80,-0.45) {$j$};
\node[anchor=south] (7) at (2.16,-0.4) {$u$};
\node[anchor=south] (8) at (2.52,-0.4) {$n$};
\node               (9) at (0.88,-0.6) {$\underbrace{\hspace*{52pt}}_{\featurePre=f_\sigma(x_{\ell,j})}$};
\node              (10) at (1.60, 0.6) {$\overbrace{\hspace*{52pt}}^{\featureSuf=f_{\sigma^r}(x_{u,i})}$};

\end{scope}
\begin{scope}[xshift=1.6cm,yshift=0.145cm]
\draw node[right,text width=9.6cm,information text]{
\begin{align}
\featureTw=\underbrace{h(x_{\ell+b_\pattern})}_{\featurePre}+
\underbrace{h(x_{\startf+a_{\pattern^r}})}_{\featureSuf}-\underbrace{h(x_{\ell+b_\pattern})}_{\featureTotal}\label{T3_C1}
\end{align}
};
\end{scope}
\end{tikzpicture}
From \eqref{T3_C1}, and by using the fact that $a_{\pattern^r}=b_{\pattern}$, 
we obtain $\featureTw = h(x_{\startf+b_{\pattern}})$, 
which is true by Condition~\emph{(iii.a)} of Definition~\ref{pf-single_pos_inflexion_free}.
\item[\ding{173}]
When Condition~\emph{(iii.b)} of Definition~\ref{pf-single_pos_inflexion_free} holds,
Equation~\eqref{formula1} can be proven in a similar way as in case~\ding{172}.
\end{itemize}
Hence, Equation \eqref{formula1} holds.
\qed
\end{proof}

\begin{theorem}
\label{theorem:E1_P2}
Consider a pattern $\pattern$ whose class has a non-empty intersection with the set of representatives
$\mathcal{S}=\{\Tuple{\gpre,\gfac,\gsuf}$, $\Tuple{\gin,\gin,\gin}$,   $\Tuple{\gpre,\gout,\gout}$,
              $\Tuple{\gout,\gout,\gsuf}$, $\Tuple{\gin,\gout,\gout}$, $\Tuple{\gout,\gout,\gin}\}$.  
Equation~\eqref{formula1} can be used to obtain $\featureTw$ 
for a sequence $x_{1,n}$ wrt a window $\tw$ whose type is in $\mathcal{S}$, 
assuming that feature $\feature$ has the \SameValueProp{}.
\end{theorem}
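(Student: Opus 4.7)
The plan is to mirror the approach of Theorem~\ref{theorem:E1_P1}, but to replace its index-wise cancellation argument (which used the \SumDecompositionProp{}) by a value-wise cancellation based on the \SameValueProp{}. By Remark~\ref{rem:symmetries}, the six representatives in $\mathcal{S}$ reduce to two non-symmetric cases to be treated explicitly: $\Tuple{\gpre,\gfac,\gsuf}$ (absorbing $\Tuple{\gin,\gin,\gin}$) and $\Tuple{\gpre,\gout,\gout}$ (absorbing the three remaining triples). Since Table~\ref{table:contribution} already shows that every maximum pattern occurrence whose position relative to the window falls in cases~(1)--(7) contributes correctly on both sides of Equation~\eqref{formula1}, I focus on a single occurrence $o=x_{\ell,u}$ with $\ell\leq i\leq j\leq u$ (i.e.~case~(8)) and verify, for each of the two representatives, that Equation~\eqref{formula1} captures its contribution correctly.

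For $\Tuple{\gpre,\gfac,\gsuf}$, I reuse the opening of the proof of Theorem~\ref{theorem:E1_P1}: from the \ConvexityProp{} and the maximality of $o$, each of the sub-signatures $s_\ell\dots s_{j-1}$, $s_i\dots s_{j-1}$ and $s_i\dots s_{u-1}$ contains exactly one maximum word of $\Language{\pattern}$, namely $s_\ell\dots s_\finf$, $s_\startf\dots s_\finf$ and $s_\startf\dots s_{u-1}$. The associated extended $\pattern$-patterns $x_{\ell,\finf+1}$, $x_{\startf,\finf+1}$ and $x_{\startf,u}$ all lie inside $x_{\ell,u}$ and, taken alone, are themselves extended $\pattern$-patterns; hence by the \SameValueProp{} each has feature value $v:=\feature_\pattern(x_{\ell,u})$. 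Combining this with the commutativity of $\feature$ (which gives $\feature_{\pattern^r}(x_{u,\startf})=\feature_\pattern(x_{\startf,u})=v$), the right-hand side of Equation~\eqref{formula1} evaluates to $v+v-v=v$, matching the actual contribution $\feature_\pattern(x_{\startf,\finf+1})=v$ of $o$ to $\featureTw$.

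For $\Tuple{\gpre,\gout,\gout}$, the components $t_2=\cout$ and $t_3=\cout$ say that neither $s_i\dots s_{j-1}$ nor $s_i\dots s_{u-1}$ contains any factor of $\Language{\pattern}$, so $o$ contributes $0$ to both $\featureTw$ and $\feature_{\pattern^r}(x_{n,i})$. The component $t_1=\cpre$, combined with the \ConvexityProp{}, yields a unique maximum prefix $s_\ell\dots s_\finf\in\Language{\pattern}$ with $\finf+1\leq j\leq u$, so the extended $\pattern$-pattern $x_{\ell,\finf+1}$ lies inside $x_{\ell,u}$ and, alone, is an extended $\pattern$-pattern; by the \SameValueProp{}, $o$ thus contributes $\feature_\pattern(x_{\ell,\finf+1})=v$ to $\feature_\pattern(x_{1,j})$, while it contributes $v$ to $\feature_\pattern(x_{1,n})$ as the whole pattern. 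Equation~\eqref{formula1} then gives $v+0-v=0=\featureTw$. The main obstacle will be the careful bookkeeping required to show, from the definitions of $\cpre$, $\cin$, $\cout$, $\cfac$, $\csuf$ and the \ConvexityProp{}, that exactly the claimed maximum words exist in each truncated signature and that no spurious one from $o$ survives; once this is established, the remaining representatives collapse to the two scenarios above through Remark~\ref{rem:symmetries}.
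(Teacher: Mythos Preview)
Your proposal is correct and follows essentially the same approach as the paper: reduce via Remark~\ref{rem:symmetries} to the two cases $\Tuple{\gpre,\gfac,\gsuf}$ and $\Tuple{\gpre,\gout,\gout}$, then use the \SameValueProp{} to show the three terms on the right-hand side of Equation~\eqref{formula1} collapse to $v+v-v=v$ and $v+0-v=0$ respectively. The paper's proof is terser---it invokes the \SameValueProp{} directly without your intermediate step of re-establishing (via convexity) the existence and location of the unique maximum words $w_\gpre$, $w_\gfac$, $w_\gsuf$---but the logical skeleton is identical.
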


\begin{proof} 
\begin{sloppypar}
Because of Remark~\ref{rem:symmetries}, we only consider the representatives   
$\Tuple{\gpre,\gfac,\gsuf}$ and $\Tuple{\gpre,\gout,\gout}$.
\end{sloppypar}

\begin{itemize}
\item[$\bullet$]\hspace{0,000001pt}[$\Tuple{\gpre,\gfac,\gsuf}$]~Since $\feature$ has the 
\SameValueProp{}, Equation \eqref{formula1} can be rewritten as: 
\begin{align}
\featureTw=\underbrace{\featureTotal}_{\featurePre}+
\underbrace{\featureTotal}_{\featureSuf}-\underbrace{\featureTotal}_{\featureTotal}\label{T2_C1}
\end{align}
From \eqref{T2_C1} we obtain $\featureTw=\featureTotal$,
which is true by definition of the \SameValueProp{}. 
Hence, Equation \eqref{formula1} holds.\\

\item[$\bullet$]\hspace{0,000001pt}[$\Tuple{\gpre,\gout,\gout}$]~Since $\feature$ has the 
\SameValueProp{}, Equation \eqref{formula1} can be rewritten as: 
\begin{align}
\featureTw=\underbrace{\featureTotal}_{\featurePre}+
  {\underbrace{\vphantom{{\featureTotal}}0}_{\featureSuf}}
  -\underbrace{\featureTotal}_{\featureTotal}\label{T2_C2}  
\end{align}
From \eqref{T2_C2} we obtain $\featureTw=0$,
which is true by definition of the representative $\Tuple{\gpre,\gout,\gout}$. 
Hence, Equation \eqref{formula1} holds.
\qed
\end{itemize}
\end{proof}

\begin{theorem}
\label{theorem:E1_P3}
Consider a pattern $\pattern$ whose class has a non-empty intersection with the set of representatives 
$\mathcal{S}=\{\Tuple{\gpre,\gfac,\gsuf}$, $\Tuple{\gin,\gin,\gin}$, $\Tuple{\gpre,\gout,\gout}$, $\Tuple{\gout,\gout,\gsuf}$, $\Tuple{\gin,\gout,\gout}$, $\Tuple{\gout,\gout,\gin}\}$.
Equation~\eqref{formula1} can be used to obtain $\featureTw$ 
for a sequence $x_{1,n}$ wrt window $\tw$ whose type is in $\mathcal{S}$, 
assuming that the pair $\feature,\pattern$ has the \SinglePositionInflexionProp{}.
\end{theorem}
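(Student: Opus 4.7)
The plan is to follow the template of Theorems~\ref{theorem:E1_P1}, \ref{theorem:E1_P3_IF}, and~\ref{theorem:E1_P2}. By Remark~\ref{rem:symmetries}, the six representatives in $\mathcal{S}$ reduce to two cases to verify, namely $\Tuple{\gpre,\gfac,\gsuf}$ and $\Tuple{\gpre,\gout,\gout}$.

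For $\Tuple{\gpre,\gfac,\gsuf}$, I would first recover the skeleton of the proof of Theorem~\ref{theorem:E1_P1}: the \ConvexityProp{} guarantees that the signatures of $x_{\ell,j}$, $x_{i,j}$, $x_{i,u}$, and $x_{\ell,u}$ each contain exactly one maximum word in $\Language{\pattern}$, all four sharing the common factor spanning $[\startf,\finf]$. Since the \OneInflexionProp{} forces every word of $\Language{\pattern}$ to contain exactly one inflexion, all four pattern occurrences share a common inflexion position $p$, located in the shared factor $[\startf,\finf]$. The \SinglePositionInflexionProp{} then makes the feature value of each of these occurrences equal to $h(x_p)$, so Equation~\eqref{formula1} collapses to $\featureTw = h(x_p)+h(x_p)-h(x_p) = h(x_p)$, which matches the actual feature value of the window.

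For $\Tuple{\gpre,\gout,\gout}$, the heart of the argument is to locate the unique inflexion $p$ of $x_{\ell,u}$ strictly before position $i$. I would prove this in two substeps: first, because $s_\ell\dots s_{j-1}$ has type $\cpre$, its prefix $s_\ell\dots s_{\finf-1}$ in $\Language{\pattern}$ (with $\finf<j$) already contains the unique inflexion of $x_{\ell,u}$ by the \OneInflexionProp{}, giving $p\leq\finf-1<j$; second, for any \OneInflexion{} pattern the minimal factor realising the inflexion (of the form $\reg{<=^*>}$ or $\reg{>=^*<}$) lies itself in $\Language{\pattern}$, so placing $p$ inside $[i,j-1]$ would produce a factor of $\Language{\pattern}$ inside the window signature, violating its $\cout$ type. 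Combining both substeps gives $p<i$. The \SinglePositionInflexionProp{} then yields $\feature_\pattern(x_{1,j})=h(x_p)$ via the prefix pattern occurrence $x_{\ell,\finf}$ and $\feature_\pattern(x_{1,n})=h(x_p)$ via $x_{\ell,u}$, whereas $\feature_{\pattern^r}(x_{n,i})=0$, because by reversibility (Definition~\ref{def:reversible}) the mirror of the $\cout$-type word $s_i\dots s_{u-1}$ contains no factor in $\Language{\pattern^r}$. Equation~\eqref{formula1} then yields $\featureTw = h(x_p)+0-h(x_p) = 0$, which matches the absence of any $\pattern$-occurrence in the window.

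The main obstacle will be the second substep of the localisation argument: establishing uniformly, rather than by per-pattern inspection, that every \OneInflexion{} pattern admitting the representative $\Tuple{\gpre,\gout,\gout}$ (the relevant ones in Figure~\ref{fig:factor_cartography} being $\GorgePatternName$ and $\SummitPatternName$) has a minimal inflexion-realising factor inside $\Language{\pattern}$. A clean way to close this gap is to introduce an auxiliary lemma whose verification reduces to a regular-language emptiness check in the spirit of Equation~\eqref{eq_one_inflexion_prop}, which keeps the proof mechanical.
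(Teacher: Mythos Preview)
Your argument is correct, but for the representative $\Tuple{\gpre,\gout,\gout}$ you over-engineer it. Your second substep---pinning the inflexion strictly before position~$i$ via an auxiliary lemma that minimal inflexion factors lie in $\Language{\pattern}$---is never invoked in your own final computation: you obtain $\featurePre=h(x_p)$ from the first substep alone (the prefix occurrence $x_{\ell,\finf}$ and the full occurrence $x_{\ell,u}$ share their unique inflexion), $\featureTotal=h(x_p)$ trivially, and $\featureSuf=0$ directly from the third $\gout$ component. The obstacle you flag is therefore illusory; dropping that substep, your proof for this case collapses to the paper's.

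The paper packages the shared-inflexion observation as a single identity~\eqref{e_pf3}: whenever a subinterval $[k,k']$ contains an extended $\pattern$-pattern occurrence, that sub-occurrence has the same feature value as $o$ (both being computed at the unique common inflexion). Applied to $[1,j]$ this gives $\featurePre=\featureTotal$; the two $\gout$ components give $\featureSuf=\featureTw=0$; and Equation~\eqref{formula1} reduces to $0=\featureTotal+0-\featureTotal$. For $\Tuple{\gpre,\gfac,\gsuf}$ your argument already coincides with this: what you phrase as ``all four occurrences share a common inflexion position~$p$'' is precisely identity~\eqref{e_pf3}. The paper's formulation is slightly cleaner in that it never names the inflexion position explicitly and hence needs no per-pattern localisation argument at all.
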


\begin{proof}
When the pair $\feature, \pattern$ has the \SinglePositionInflexionProp{} 
the following identity holds:
\begin{align*}
\feature_\pattern(x_{k,k'})=\feature_\pattern({x}_{1,n}), \forall~k,k'\mid 1\leq k\leq k'\leq n,~and
\end{align*}
\vspace{-0.9cm}
\begin{align}
\exists\text{ an extended }\pattern\text{-pattern occurrence in }[k,k'] \text{ wrt }x_{k,k'}\label{e_pf3}
\end{align}
Because of Remark~\ref{rem:symmetries} we only consider the representatives   
$\Tuple{\gpre,\gfac,\gsuf}$ and $\Tuple{\gpre,\gout,\gout}$.

\begin{itemize}
\item[$\bullet$]\hspace{0,000001pt}[$\Tuple{\gpre,\gfac,\gsuf}$]~By Identity~\eqref{e_pf3} 
and because all three intervals $[1,j]$, $[i,j]$ and $[i,n]$ contain an extended $\pattern$-pattern
occurrence, $\featurePre =\featureSuf=\featureTw=\featureTotal$;
thus  Equation~\eqref{formula1} can be rewritten as: 
\begin{align}
\featureTw=\underbrace{\featureTotal}_{\featurePre}+
\underbrace{\featureTotal}_{\featureSuf}-\underbrace{\featureTotal}_{\featureTotal}\label{T4_C1}
\end{align}
From \eqref{T4_C1} we obtain 
$\featureTw = \featureTotal$, 
which is true by definition of the \SinglePositionProp{}. 
Hence, Equation \eqref{formula1} holds.
\vspace{5pt}
\item[$\bullet$]\hspace{0,000001pt}[$\Tuple{\gpre,\gout,\gout}$]~By
Identity~\eqref{e_pf3} $\featurePre =\featureTotal$;
since there is no extended $\pattern$-pattern occurrence neither in $[i,j]$ nor in $[i,n]$,
$\featureSuf=\featureTw=0$, and Equation~\eqref{formula1} can be rewritten as:
\begin{align}
\featureTw=\underbrace{\featureTotal}_{\featurePre}+
{\underbrace{\vphantom{{\featureTotal}}0}_{\featureSuf}}-\underbrace{\featureTotal}_{\featureTotal}\label{T4_C2}
\end{align}
From \eqref{T4_C2} we obtain 
$\featureTw =0$, 
which is true by definition of the second component of the $\Tuple{\gpre,\gout,\gout}$ representative. 
Hence, Equation \eqref{formula1} holds. 
\qed
\end{itemize}
\end{proof}

\subsubsection{Sufficient Conditions for the Validity of Equations \eqref{formula2} and \eqref{formula3}}

\begin{theorem}
\label{theorem:E2_P1_P}
\begin{sloppypar}
Consider a pattern $\pattern$ whose class has a non-empty intersection with the set of representatives 
$\mathcal{S}=\{\Tuple{\gpre,\gfac,\gsuf}$, 
$\Tuple{\gin,\gin,\gin}$, 
$\Tuple{\gpre,\gout,\gout}$, 
$\Tuple{\gout,\gout,\gsuf}$, 
$\Tuple{\gin,\gout,\gout}$, 
$\Tuple{\gout,\gout,\gin}$, 
$\Tuple{\gout,\gout,\gout}\}$.  
Equation~\eqref{formula2} can be used to obtain $\featureTw$ 
for a sequence $x_{1,n}$ wrt window $\tw$ whose type is in $\mathcal{S}$, 
assuming that
feature $\feature$ has the \SumDecomposition{} and the \Positive{} properties.
If, in addition to the set $\mathcal{S}$,
we also have the representative $\Tuple{\gpre,\gout,\gsuf}$
then Equation~\eqref{formula2} can still be used,
provided that pattern $\pattern$ has the \ExcludeOutInProp.
\end{sloppypar}
\end{theorem}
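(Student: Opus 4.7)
The proof follows the template of Theorems~\ref{theorem:E1_P1} to~\ref{theorem:E1_P3}, treating each representative of $\mathcal{S}$ separately. By Remark~\ref{rem:symmetries}, it is enough to verify the statement for the three representatives $\Tuple{\gpre,\gfac,\gsuf}$, $\Tuple{\gpre,\gout,\gout}$, $\Tuple{\gout,\gout,\gout}$, and (for the second part of the theorem) the extra representative $\Tuple{\gpre,\gout,\gsuf}$. Throughout I use the fact that, since maximal $\pattern$-pattern occurrences are pairwise disjoint in the signature, at most one occurrence $o=x_{\ell,u}$ falls into Case~(8) of Table~\ref{table:cases}, and all other occurrences lie in Cases~(1)--(7); their contributions to the right-hand side of Equation~\eqref{formula1} are already computed in the non\nobreakdash-grey rows of Table~\ref{table:contribution} and each equals the corresponding share of $\featureTw$, which is $\geq 0$ by the \PositiveProp{}.

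For the representative $\Tuple{\gpre,\gfac,\gsuf}$, Theorem~\ref{theorem:E1_P1} (which uses only \SumDecomposition{}) already shows that the right-hand side of Equation~\eqref{formula1} equals $\featureTw$. Combined with the \PositiveProp{}, this gives $\featureTw\geq 0$, so the outer $\max(0,\cdot)$ of Equation~\eqref{formula2} is inert and the equation holds verbatim.

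For $\Tuple{\gpre,\gout,\gout}$ and $\Tuple{\gout,\gout,\gout}$, the condition $t_2=\gout$ means that the window's signature $s_i s_{i+1}\dots s_{j-1}$ contains no factor of $\Language{\pattern}$, hence $\featureTw=0$. It therefore suffices to prove that the right-hand side of Equation~\eqref{formula1} is non\nobreakdash-positive, so that $\max(0,\cdot)$ returns $0$. For $\Tuple{\gout,\gout,\gout}$, the occurrence $o$ contributes $0+0-\feature_\pattern(x_{\ell,u})=-\feature_\pattern(x_{\ell,u})\leq 0$ directly by the \PositiveProp{}. For $\Tuple{\gpre,\gout,\gout}$, the sub-word $s_\ell s_{\ell+1}\dots s_\finf$ belongs to $\Language{\pattern}$ by the very definition of $\finf$, so $x_{\ell,\finf+1}$ is itself an extended $\pattern$\nobreakdash-pattern and both $\feature_\pattern(x_{\ell,u})$ and $\feature_\pattern(x_{\ell,\finf+1})$ admit a \SumDecomposition{}. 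The contribution of $o$ then collapses, by cancellation of a common prefix of $\sum h$, to $-(\feature_\pattern(x_{\ell,u})-\feature_\pattern(x_{\ell,\finf+1}))=-\sum_{t=\finf+2-a_\pattern}^{u-a_\pattern} h(x_t)$, which is non\nobreakdash-positive because \SumDecomposition{} together with the \PositiveProp{} applied to every minimum\nobreakdash-length extended $\pattern$\nobreakdash-pattern forces $h(x_t)\geq 0$ on the positions of interest.

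The remaining representative $\Tuple{\gpre,\gout,\gsuf}$ needed for the second statement is handled in the same spirit, the new ingredient being the \ExcludeOutInProp{}. Since $t_2=\gout$ and the window is strictly interior to $o$, the \ExcludeOutInProp{} applied to $o$'s signature forces $\finf<i$ and $\startf\geq j$: the two maximal sub\nobreakdash-patterns of $o$ do not straddle the window boundaries. Using \SumDecomposition{} on each of the three $\feature_\pattern$\nobreakdash-terms and exploiting this separation, the contribution of $o$ telescopes cleanly to $-\sum_{t=\finf+2-a_\pattern}^{\startf+b_\pattern-1} h(x_t)$, which is $\leq 0$ by the same $h\geq 0$ argument, and hence $\max(0,\cdot)=0=\featureTw$. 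I expect the main obstacle to be precisely this last technical step: checking that the combined \SumDecomposition{}$+$\PositiveProp{} hypothesis really does pin down the sign of the relevant partial sums of $h$, and ensuring that when $\Tuple{\gpre,\gout,\gsuf}$ is in play the \ExcludeOutInProp{} does yield the $\finf+1<\startf$ ordering (possibly via extra bookkeeping when $\ell=i$ or $j=u$); without it, a sub\nobreakdash-pattern straddling the window would produce a positive residual that the $\max$ could not absorb.
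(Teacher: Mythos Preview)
Your proof follows essentially the same approach as the paper: the same reduction via Remark~\ref{rem:symmetries} to the representatives $\Tuple{\gpre,\gfac,\gsuf}$, $\Tuple{\gpre,\gout,\gout}$, $\Tuple{\gout,\gout,\gout}$, and $\Tuple{\gpre,\gout,\gsuf}$, the same appeal to Theorem~\ref{theorem:E1_P1} for the first case, and the same use of the \ExcludeOutInProp{} to separate $w_\gpre$ and $w_\gsuf$ in the last case. The one stylistic difference is that where the paper simply asserts $\featurePre\leq\featureTotal$ (resp.\ $\featurePre+\featureSuf\leq\featureTotal$) as a consequence of \SumDecomposition{}$+$\Positive{}, you try to derive it via $h(x_t)\geq 0$ pointwise; your ``minimum\nobreakdash-length pattern'' argument for this is not fully general, but the paper does not spell out its inequality either, and both are sound for the only catalogue feature satisfying both hypotheses ($\Width$, where $h\equiv 1$).
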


\begin{proof}
\begin{sloppypar}
Because of Remark~\ref{rem:symmetries} we only consider the representatives
$\Tuple{\gpre,\gfac,\gsuf}$, $\Tuple{\gpre,\gout,\gout}$ and $\Tuple{\gout,\gout,\gout}$.
\end{sloppypar}

\begin{itemize}
\item[$\bullet$]\hspace{0,000001pt}[$\Tuple{\gpre,\gfac,\gsuf}$]~Since,
from Theorem~\ref{theorem:E1_P1}, 
Equation~\eqref{formula1} is valid for this representative
when $\feature$ has the \SumDecompositionProp{},
and since $\feature$ has the \PositiveProp{},
the right-hand side of Equation~\eqref{formula2}
is the maximum between zero and a positive value;
therefore Equation~\eqref{formula2} is also valid
for $\Tuple{\gpre,\gfac,\gsuf}$.
\item[$\bullet$]\hspace{0,000001pt}[$\Tuple{\gpre,\gout,\gout}$]~Since
$\feature$ has the \Positive{} and \SumDecomposition{} properties,  
$\featureTotal\geq 0$ and $\featurePre\leq\featureTotal$.
Due to the third component ``$\gout$''
of the representative $\featureSuf=0$.
When Equation~\eqref{formula1} is used, we obtain $\featureTw\leq 0$; 
but with Equation~\eqref{formula2} we get $\featureTw=0$, 
which is true due to the second component ``$\gout$'' of the representative.
Hence~\eqref{formula2} is valid.
\item[$\bullet$]\hspace{0,000001pt}[$\Tuple{\gout, \gout, \gout}$]~Since
$\feature$ has the \PositiveProp{}, $\featureTotal\geq 0$.
Due to the first and third ``$\gout$'' components of the representative,
$\featurePre =\featureSuf=0$. 
When Equation~\eqref{formula1} is used, we obtain $\featureTw\leq 0$;
but with Equation~\eqref{formula2} we get $\featureTw=0$, 
which is true due to the second component ``$\gout$'' of the representative. 
Hence, Equation~\eqref{formula2} is valid.
\item[$\bullet$]\hspace{0,000001pt}[$\Tuple{\gpre,\gout,\gsuf}$]~
\begin{itemize}
\item[--]
From the \ConvexityProp{}, the signature of the words $x_{\ell,j}$ and $x_{i,u}$
respectively contain at most one maximum word in $\Language{\pattern}$.
Because of $\gpre$ and $\gsuf$, the signature of the words $x_{\ell,j}$ and $x_{i,u}$
respectively contain at least one word in $\Language{\pattern}$.
Consequently, the signature of the words $x_{\ell,j}$ and $x_{i,u}$ contain
one single maximum word in $\Language{\pattern}$, respectively
denoted by $w_\gpre$ and $w_\gsuf$.
\item[--]
Because of the $\gout$ of $\Tuple{\gpre,\gout,\gsuf}$,
the signature of the word $x_{i,j}$ does not contain any subword that belongs to $\Language{\pattern}$.
In addition, since the pattern $\pattern$ has the \ExcludeOutInProp{} we have that
$w_\gpre$ ends before position $i$, and $w_\gsuf$ starts after position $j$,
i.e.~$w_\gpre$ and $w_\gsuf$ do not overlap.
Consequently, since in addition $\feature$ has the \Positive{} and the \SumDecomposition{} properties,
$\featureTotal\geq 0$ and $\featurePre+\featureSuf\leq\featureTotal$.
When Equation~\eqref{formula1} is used, we obtain $\featureTw\leq 0$; 
but with Equation~\eqref{formula2} we get $\featureTw=0$, 
which is true due to the second component ``$\gout$'' of the representative. 
Hence, Equation~\eqref{formula2} is valid.\qed
\end{itemize}
\end{itemize}
\end{proof}

\begin{theorem}
\label{theorem:E2_P2_P}
Consider a pattern $\pattern$ whose class has a non-empty intersection with the set of representatives 
$\mathcal{S}=\{\Tuple{\gpre,\gfac,\gsuf}$, 
$\Tuple{\gin,\gin,\gin}$, 
$\Tuple{\gpre,\gout,\gout}$, 
$\Tuple{\gout,\gout,\gsuf}$, 
$\Tuple{\gin,\gout,\gout}$, 
$\Tuple{\gout,\gout,\gin}$, 
$\Tuple{\gout,\gout,\gout}\}$.  
Equation~\eqref{formula2} can be used to obtain $\featureTw$ 
for a sequence $x_{1,n}$ wrt window $\tw$ whose type is in $\mathcal{S}$, 
assuming that feature $\feature$ has the \SameValue{} and the \Positive{} properties.
\end{theorem}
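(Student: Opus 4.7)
The plan is to mirror the structure of Theorem~\ref{theorem:E1_P2} and then add one new representative. By Remark~\ref{rem:symmetries}, I would only need to treat three representatives explicitly: $\Tuple{\gpre,\gfac,\gsuf}$, $\Tuple{\gpre,\gout,\gout}$, and $\Tuple{\gout,\gout,\gout}$; the other four elements of $\mathcal{S}$ collapse onto one of these by the symmetry and specialisation arguments of that remark.

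For $\Tuple{\gpre,\gfac,\gsuf}$ and $\Tuple{\gpre,\gout,\gout}$ I would simply piggyback on Theorem~\ref{theorem:E1_P2}: under the \SameValueProp{}, that theorem already certifies the validity of Equation~\eqref{formula1} for these two representatives. Reading off its proof, for $\Tuple{\gpre,\gfac,\gsuf}$ the inner expression $\featurePre+\featureSuf-\featureTotal$ of Equation~\eqref{formula2} evaluates to $\featureTotal$, and for $\Tuple{\gpre,\gout,\gout}$ it evaluates to $0$. The \PositiveProp{} of $\feature$ guarantees $\featureTotal\geq 0$, so both values are non-negative; hence the outer $\max(0,\cdot)$ in Equation~\eqref{formula2} returns exactly the same value, which matches the expected $\featureTw$.

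The only genuinely new case is $\Tuple{\gout,\gout,\gout}$, and it is precisely the case where Equation~\eqref{formula1} fails and where the $\max$ clause of Equation~\eqref{formula2} earns its keep. From the first and third ``$\gout$'' components of the representative, any pattern occurrence of this type contributes $\featurePre=0$ and $\featureSuf=0$, so the inner expression of Equation~\eqref{formula2} reduces to $-\featureTotal$, which is $\leq 0$ by the \PositiveProp{}. The $\max$ then clips it to $0$, matching the expected value (the second ``$\gout$'' component forces $\featureTw=0$, as no pattern occurrence lies inside the window). The only mild obstacle I foresee is the careful reading of the proof of Theorem~\ref{theorem:E1_P2} needed to justify that the inner expression evaluates to exactly $\featureTotal$ and $0$ for the two reused representatives---rather than merely to ``some non-negative number''---so that the $\max$ is guaranteed to behave as a no-op there.
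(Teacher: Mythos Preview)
Your proposal is correct and follows essentially the same approach as the paper: the same reduction via Remark~\ref{rem:symmetries} to the three representatives $\Tuple{\gpre,\gfac,\gsuf}$, $\Tuple{\gpre,\gout,\gout}$, $\Tuple{\gout,\gout,\gout}$, the same reuse of Theorem~\ref{theorem:E1_P2} for the first two (yielding inner values $\featureTotal$ and $0$, both non-negative so the $\max$ is a no-op), and the same argument for $\Tuple{\gout,\gout,\gout}$ where $\featurePre=\featureSuf=0$ and the \PositiveProp{} makes the inner value non-positive so the $\max$ clips it to $0$.
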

\begin{proof}
Because of Remark~\ref{rem:symmetries} we only consider the representatives   
$\Tuple{\gpre,\gfac,\gsuf}$, $\Tuple{\gpre,\gout,\gout}$ and $\Tuple{\gout,\gout,\gout}$.

\begin{sloppypar}
\begin{itemize}
\item[$\bullet$]\hspace{0,000001pt}[$\Tuple{\gpre,\gfac,\gsuf}$]~Since,
from Theorem~\ref{theorem:E1_P2}, 
Equation~\eqref{formula1} is valid for this representative
when $\feature$ has the \SameValueProp{},
and since $\feature$ has the \PositiveProp{},
Equation~\eqref{formula2} is the maximum between 
zero and a positive value; therefore Equation~\eqref{formula2} is also 
valid for $\Tuple{\gpre,\gfac,\gsuf}$.
\item[$\bullet$]\hspace{0,000001pt}[$\Tuple{\gpre,\gout,\gout}$]~Since
$\feature$ has the \SameValueProp{}, $\featurePre =\featureTotal$.
Due to the third component ``$\gout$'' of the
$\Tuple{\gpre,\gout,\gout}$ representative $\featureSuf=0$. 
Consequently, by Equation~\eqref{formula2}, we have 
$\featureTw=\max(0,\feature_\pattern(x_{1,j})+\feature_{\pattern^r}(x_{\seqlength,i})-$ $\feature_\pattern(x_{1,\seqlength}))=0$, 
which is true due to the second component ``$\gout$''
of the $\Tuple{\gpre,\gout,\gout}$ representative. 
Hence, Equation~\eqref{formula2} is valid.
\item[$\bullet$]\hspace{0,000001pt}[$\Tuple{\gout,\gout,\gout}$]~Since $\feature$ 
has the \PositiveProp{}, $\featureTotal\geq 0$.
Due to the first and third ``$\gout$'' components of
the $\Tuple{\gout,\gout,\gout}$ representative,
$\featurePre =\featureSuf=0$.
When Equation~\eqref{formula1} is used, we obtain $\featureTw\leq 0$;
but with Equation~\eqref{formula2} we get $\featureTw=0$,
which is true due to the second component ``$\gout$'' of
the $\Tuple{\gout,\gout,\gout}$ representative. 
Hence, Equation~\eqref{formula2} is valid.
\qed
\end{itemize}
\end{sloppypar}
\end{proof}

\begin{theorem}
\label{theorem:E3}
\begin{sloppypar}
Consider a pattern $\pattern$ whose class has a non-empty 
intersection with the set of representatives 
$\mathcal{S}=\{
\Tuple{\gpre,\gfac,\gsuf}$,
$\Tuple{\gin,\gin,\gin}$,
$\Tuple{\gpre,\gout,\gsuf}$,
$\Tuple{\gin,\gout,\gin}$,
$\Tuple{\gpre,\gout,\gout}$,
$\Tuple{\gin,\gout,\gout}$,
$\Tuple{\gout,\gout,\gsuf}$,
$\Tuple{\gout,\gout,\gin}$,
$\Tuple{\gout,\gout,\gout}\}$.
Equation~\eqref{formula3} can be used to obtain $\featureTw$ 
for a sequence $x_{1,n}$ wrt a window $\tw$ 
whose type is in $\mathcal{S}$ if
(a)~either both 
$\Tuple{\gpre,\gfac,\gsuf}$ and $\Tuple{\gin,\gin,\gin}$ are not representatives of the pattern $\pattern$,
(b)~or if one of the following conditions holds:
\begin{enumerate}[label=\roman*)]
\item $\feature$ has the \SumDecompositionProp{}.
\item $\feature$ has the \SameValueProp{}. 
\item the pair $\feature,\pattern$ has
      the \SinglePositionInflexionFree{} or the \SinglePositionInflexion{} properties.
\end{enumerate}
\end{sloppypar}
\end{theorem}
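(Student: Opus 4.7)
The plan is to observe that Equation~\eqref{formula3} is nothing but Equation~\eqref{formula1} guarded by a presence test on $\tw$: on windows containing no $\pattern$-pattern the equation returns $0$, while on windows that do contain one it coincides literally with~\eqref{formula1}. Since $\featureTw$ is by definition a sum of feature values over the extended $\pattern$-pattern occurrences that lie inside $\tw$, the value $0$ is trivially correct whenever no such occurrence exists. Hence I will reduce the whole proof to (i)~identifying which representatives of $\mathcal{S}$ force the presence of a $\pattern$-pattern in $\tw$, and (ii)~invoking the previously proved validity of Equation~\eqref{formula1} on those representatives.

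For step~(i) I will rely on Definition~\ref{def:word_status}: the second component of a representative records the type of the window's signature $s_i\dots s_{j-1}$ wrt $\pattern$, and this component equals $\gout$ exactly when that signature has no factor in $\Language{\pattern}$, that is, when no extended $\pattern$-pattern lies inside $\tw$. Scanning the nine triples of $\mathcal{S}$ shows that only $\Tuple{\gpre,\gfac,\gsuf}$ and $\Tuple{\gin,\gin,\gin}$ have a second component other than $\gout$; for the remaining seven triples the first branch of Equation~\eqref{formula3} fires and returns the correct value $0$, so no hypothesis on $\feature$ or $\pattern$ is required to handle them. This observation already disposes of Case~(a), which assumes that neither of the two offending representatives belongs to the class of $\pattern$, so that every window type of $\pattern$ in $\mathcal{S}$ falls in the first branch.

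For Case~(b) it only remains to justify the second branch of Equation~\eqref{formula3} on the representatives $\Tuple{\gpre,\gfac,\gsuf}$ and $\Tuple{\gin,\gin,\gin}$, where Equation~\eqref{formula3} is identical to Equation~\eqref{formula1}. The validity of Equation~\eqref{formula1} on exactly this pair of representatives has already been established: I will invoke Theorem~\ref{theorem:E1_P1} under hypothesis~(i), Theorem~\ref{theorem:E1_P2} under hypothesis~(ii), and either Theorem~\ref{theorem:E1_P3_IF} or Theorem~\ref{theorem:E1_P3} under hypothesis~(iii), according to whether the pair $\feature,\pattern$ enjoys the \SinglePositionInflexionFreeProp{} or the \SinglePositionInflexionProp{}. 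The only subtle point I will flag, and essentially the only thing worth double-checking, is that no \ExcludeOutInProp{} on $\pattern$ is needed here for the representative $\Tuple{\gpre,\gout,\gsuf}$, in contrast with Theorem~\ref{theorem:E2_P1_P} for Equation~\eqref{formula2}: the explicit conditional of Equation~\eqref{formula3} short-circuits precisely on the windows where the discrepancy $\featurePre+\featureSuf-\featureTotal$ that \ExcludeOutIn{} was invoked to bound could have made Equation~\eqref{formula1} fail, so that difficulty simply does not arise here and no genuine obstacle remains.
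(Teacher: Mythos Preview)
Your proof is correct and follows essentially the same approach as the paper: split into the case where the window contains a $\pattern$-pattern (handled by invoking Theorems~\ref{theorem:E1_P1}, \ref{theorem:E1_P2}, \ref{theorem:E1_P3_IF}, \ref{theorem:E1_P3} on the representatives $\Tuple{\gpre,\gfac,\gsuf}$ and $\Tuple{\gin,\gin,\gin}$) and the case where it does not (handled by the explicit conditional returning $0$). Your added remark explaining why the \ExcludeOutInProp{} is unnecessary for $\Tuple{\gpre,\gout,\gsuf}$, in contrast with Theorem~\ref{theorem:E2_P1_P}, is a nice clarification that the paper leaves implicit.
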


\begin{proof}
\noindent[CASE 1]
Consider the representatives that have an extended $\pattern$-pattern occurrence in $\tw$.
In this case the only two representatives are
$\Tuple{\gpre,\gfac,\gsuf}$ and $\Tuple{\gin,\gin,\gin}$. 
Since Equation~\eqref{formula1} is valid for these representatives when:
\begin{enumerate}[label=\emph{\roman*)}]
\item $\feature$ has the \SumDecompositionProp{} (see Theorem~\ref{theorem:E1_P1}),
\item $\feature$ has the \SameValueProp{} (see Theorem~\ref{theorem:E1_P2}),
\item the pair $\feature,\pattern$ has the \SinglePositionInflexionFreeProp{} (see Theorem~\ref{theorem:E1_P3_IF})
or the \SinglePositionInflexionProp{} (see Theorem~\ref{theorem:E1_P3}),
\end{enumerate}
\vspace{-7pt}
Equation~\eqref{formula3} is also valid.
\vspace{5pt}

\noindent[CASE 2]
Consider the representatives that do not have an extended $\pattern$-pattern occurrence in $\tw$.
When using Equation~\eqref{formula3},
because of the check
``if no $\sigma$\nobreakdash-pattern in $x_{i,j}$''
in Equation~\eqref{formula3},
the value of $\featureTw$ is zero.
Hence, Equation~\eqref{formula3} is valid.
\qed
\end{proof}

\vspace{-0.5cm}
\subsubsection{Synthesis}

The classification induced by theorems~\ref{theorem:E1_P1} to~\ref{theorem:E3}
is presented in Table~\ref{table:summary}: 
for each pattern class corresponding to the same set of representative triples
(see Figure~\ref{fig:factor_cartography}) we select one
pattern (see the columns of Table~\ref{table:summary}, e.g.~$\PlainPatternName$) and provide
for each feature property (see the rows of Table~\ref{table:summary}, e.g.~{\footnotesize{SV}}) and
for each feature/pattern property (see the cells of Table~\ref{table:summary}, e.g.~{\footnotesize{SPN}})
the theorem proving that an Equation is valid under such properties.
Note that any missing Equation is due to a counterexample given
in Appendix~\ref{sec:counterexamples},
and not to the fact that we are missing a theorem.
Coloured grey cells indicate a non-existing time-series constraint in the time-series catalogue~\cite{arafailova2016global}.

Equation~\eqref{formula3} can be used to compute the value of $\featureTw$, 
for all reversible and convex patterns without the \SingleLetterProp{} from~\cite{arafailova2016global}, except for $\ZigzagPatternName$ with the $\MaxFeature$
and $\MinFeature$ features (see the cells marked with ``none'' in Table \ref{table:summary}),
as $\ZigzagPatternName$ uses the representative triple $\Tuple{\gin,\gin,\gin}$
without having the $\SinglePositionInflexion$ or the $\SinglePositionInflexionFree$ properties.

\vspace{-1cm}
\noindent
\begin{table}[!h]
\resizebox{12.2cm}{!}
{
\centering
\begin{tikzpicture}[information text/.style={rounded corners,inner sep=1ex}]
\begin{scope}[xshift=0.9cm,yshift=4.5cm]
\draw[rounded corners=2pt,thick,blue,fill=blue!10] (0,0) rectangle (2.2,0.4) node[pos=.5] {\footnotesize\color{black}$\Tuple{\apre,\aout,\asuf}\hspace*{1pt}\Tuple{\apre,\afac,\asuf}$};
\draw[rounded corners=2pt,thick,brown,fill=brown!10] (4.75,0) rectangle (10.15,0.4) node[pos=.5] {\footnotesize\color{black}$\Tuple{\aout,\aout,\aout}\hspace*{1pt}\Tuple{\ain,\aout,\aout}\hspace*{1pt}\Tuple{\aout,\aout,\ain}\hspace*{1pt}\Tuple{\ain,\aout,\ain}\hspace*{1pt}\Tuple{\ain,\ain,\ain}$};
\draw[rounded corners=2pt,thick,red,fill=red!10] (1.2,-0.5) rectangle (5.9,-0.1) node[pos=.5] {\footnotesize\color{black}$\Tuple{\apre,\afac,\asuf}\hspace*{1pt}\Tuple{\apre,\aout,\aout}\hspace*{1pt}\Tuple{\aout,\aout,\asuf}\hspace*{1pt}\Tuple{\aout,\aout,\aout}$};
\draw[rounded corners=2pt,thick,cyan,fill=cyan!10] (9.3,-0.5) rectangle (10.15,-0.1) node[pos=.5] {\footnotesize\color{black}$\Tuple{\ain,\ain,\ain}$};
\draw[rounded corners=2pt,thick,orange,fill=orange!10] (1.2,-1) rectangle (4.7,-0.6) node[pos=.5] {\footnotesize\color{black}$\Tuple{\apre,\afac,\asuf}\hspace*{1pt}\Tuple{\apre,\aout,\aout}\hspace*{1pt}\Tuple{\aout,\aout,\asuf}$};
\draw[rounded corners=2pt,thick,violet,fill=violet!10] (4.8,-1) rectangle (8.3,-0.6) node[pos=.5] {\footnotesize\color{black}$\Tuple{\aout,\aout,\aout}\hspace*{1pt}\Tuple{\ain,\aout,\aout}\hspace*{1pt}\Tuple{\aout,\aout,\ain}$};
\draw[thin,blue]   (0.90,0)    -- (0.90,-2.85);
\draw[thin,orange] (3.10,-1)   -- (3.10,-2.85);
\draw[thin,red]    (4.75,-0.5) -- (4.75,-2.85);
\draw[thin,violet] (6.50,-1)   -- (6.50,-2.85);
\draw[thin,brown]  (8.50,0)    -- (8.50,-2.85);
\draw[thin,cyan]   (10.1,-0.5) -- (10.1,-2.85);
\fill[fill=gray!20] (1.95,-5.58) rectangle (7.3,-5);
\fill[fill=gray!20] (9.2,-5.58) rectangle (11.5,-5);
\draw[rounded corners=2pt,densely dashed,black!66] (-0.1,-1.1) rectangle (11.3,0.5);
\node[anchor=west] at (-0.1,0.67) {\footnotesize \color{black!66}\textsc{representative triples}};
\draw[rounded corners=2pt,densely dashed,black!66] (-1.98,-6.25) rectangle (-0.92,-3.2);
\node[anchor=west,rotate=90] at (-2.15,-6.4) {\footnotesize \color{black!66}\textsc{feature properties}};
\draw[rounded corners=2pt,thick,gray!30,fill=gray!10] (-1.4,-3.8) rectangle (-1.0,-3.25) node[pos=.5,rotate=90] {\scriptsize\color{black}SV};
\draw[rounded corners=2pt,thick,gray!30,fill=gray!10] (-1.4,-5) rectangle (-1.0,-3.9) node[pos=.5,rotate=90] {\scriptsize\color{black}SD};
\draw[rounded corners=2pt,thick,gray!30,fill=gray!10] (-1.4,-6.18) rectangle (-1.0,-5.08) node[pos=.5,rotate=90] {\scriptsize\color{black}SP};
\draw[rounded corners=2pt,thick,gray!30,fill=gray!10] (-1.9,-4.46) rectangle (-1.5,-3.25) node[pos=.5,rotate=90] {\scriptsize\color{black}P};
\draw[rounded corners=2pt,thick,gray!30,fill=gray!10] (1,-2.3) rectangle (1.5,-1.8) node[pos=.5] {\scriptsize\color{black}N,E};
\draw[rounded corners=2pt,thick,gray!30,fill=gray!10] (2.77,-2.3) rectangle (3.02,-1.8) node[pos=.5] {\scriptsize\color{black}O};
\draw[rounded corners=2pt,thick,gray!30,fill=gray!10] (4.42,-2.3) rectangle (4.67,-1.8) node[pos=.5] {\scriptsize\color{black}O};
\draw[rounded corners=2pt,thick,gray!30,fill=gray!10] (6.18,-2.3) rectangle (6.43,-1.8) node[pos=.5] {\scriptsize\color{black}O};
\draw[rounded corners=2pt,thick,gray!30,fill=gray!10] (10.18,-2.3) rectangle (10.78,-1.8) node[pos=.5] {\scriptsize\color{black}N,E};
\draw[rounded corners=2pt,densely dashed,black!66] (-0.1,-1.7) rectangle (11.3,-2.35);
\node[anchor=west] at (-0.1,-1.55) {\footnotesize \color{black!66}\textsc{pattern properties}};
\node[anchor=west] at (-0.8,-3.08) {\footnotesize $\feature\hspace*{3pt}\backslash\hspace*{3pt}\pattern$};
\node[anchor=west] at (0.6,-5.12) {\tiny \bf SPN};
\node[anchor=west] at (0.6,-5.68) {\tiny \bf SPN};
\node[anchor=west] at (2.15,-5.68) {\tiny \bf SPO};
\node[anchor=west] at (4.2,-5.68) {\tiny \bf SPO};
\node[anchor=west] at (5.93,-5.68) {\tiny \bf SPO};
\node[anchor=west] at (9.4,-5.68) {\tiny \bf SPN};
\end{scope}
\begin{scope}
\draw node[right,text width=12cm,information text]{
\begin{tabular}{lccccccccc}
& \hspace*{9pt}\colorbox{blue!10}{$\DecreasingSequencePatternName$}\hspace*{9pt}
& \hspace*{7pt}\colorbox{orange!10}{$\GorgePatternName$}\hspace*{7pt}
& \hspace*{7pt}\colorbox{red!10}{$\ValleyPatternName$}\hspace*{7pt}
& \hspace*{7pt}\colorbox{violet!10}{$\PlainPatternName$}\hspace*{7pt}
& \hspace*{7pt}\colorbox{brown!10}{$\ZigzagPatternName$}\hspace*{7pt}
& \hspace*{7pt}\colorbox{cyan!10}{$\SteadySequencePatternName$}\hspace*{7pt} \\ \toprule
$\One$ &
{\scriptsize\ref{theorem:E3}(\ref{formula3})} &
{\scriptsize\ref{theorem:E1_P2}(\ref{formula1}),\ref{theorem:E2_P2_P}(\ref{formula2}),\ref{theorem:E3}(\ref{formula3})} &
{\scriptsize\ref{theorem:E2_P2_P}(\ref{formula2}),\ref{theorem:E3}(\ref{formula3})} &
{\scriptsize\ref{theorem:E2_P2_P}(\ref{formula2}),\ref{theorem:E3}(\ref{formula3})} &
{\scriptsize\ref{theorem:E3}(\ref{formula3})} &
{\scriptsize\ref{theorem:E1_P2}(\ref{formula1}),\ref{theorem:E2_P2_P}(\ref{formula2}),\ref{theorem:E3}(\ref{formula3})}
\\ \midrule
$\Width$ &
{\scriptsize\ref{theorem:E2_P1_P}(\ref{formula2}),\ref{theorem:E3}(\ref{formula3})} &
{\scriptsize\ref{theorem:E2_P1_P}(\ref{formula2}),\ref{theorem:E3}(\ref{formula3})} &
{\scriptsize\ref{theorem:E2_P1_P}(\ref{formula2}),\ref{theorem:E3}(\ref{formula3})} &
{\scriptsize\ref{theorem:E2_P1_P}(\ref{formula2}),\ref{theorem:E3}(\ref{formula3})} &
{\scriptsize\ref{theorem:E3}(\ref{formula3})} &
{\scriptsize\ref{theorem:E1_P1}(\ref{formula1}),\ref{theorem:E2_P1_P}(\ref{formula2}),\ref{theorem:E3}(\ref{formula3})} &
\\ \midrule
$\Surf$ &
{\scriptsize\ref{theorem:E3}(\ref{formula3})} &
{\scriptsize\ref{theorem:E3}(\ref{formula3})} &
{\scriptsize\ref{theorem:E3}(\ref{formula3})} &
{\scriptsize\ref{theorem:E3}(\ref{formula3})} &
{\scriptsize\ref{theorem:E3}(\ref{formula3})} &
{\scriptsize\ref{theorem:E1_P1}(\ref{formula1}),\ref{theorem:E3}(\ref{formula3})} &\\ \midrule
$\MaxFeature$ &
{\scriptsize\ref{theorem:E3}(\ref{formula3})} &
&
&
&
{\scriptsize none}\\ \midrule
$\MinFeature$ &
{\scriptsize\ref{theorem:E3}(\ref{formula3})} &
{\scriptsize\ref{theorem:E1_P3}(\ref{formula1}),\ref{theorem:E3}(\ref{formula3})} &
{\scriptsize\ref{theorem:E3}(\ref{formula3})} &
{\scriptsize\ref{theorem:E3}(\ref{formula3})} &
{\scriptsize none}&
{\scriptsize\ref{theorem:E1_P3_IF}(\ref{formula1}),\ref{theorem:E3}(\ref{formula3})}
\\ \bottomrule
\end{tabular}
};
\end{scope}
\end{tikzpicture}
}
\caption{Indicates, for existing combinations of feature $\feature$ and pattern $\pattern$ from the time-series catalogue, which of the Equations~\eqref{formula1}, \eqref{formula2} and~\eqref{formula3} are valid, as well as the corresponding justifying theorem, where:
(\emph{i})~within a representative triple we use as a shortcut the first letter of each component;
(\emph{ii})~{\sc p}, {\sc sp}, {\sc sd} and {\sc sv} resp. indicate whether the feature $\feature$ has
the \Positive{},
the \SinglePosition{},
the \SumDecomposition{} or
the \SameValue{} property;
(\emph{iii})~{\sc n}, {\sc e}, and {\sc o} resp. indicate whether the pattern $\pattern$ has
the \InflexionFree{},
the \ExcludeOutIn{} or
the \OneInflexion{} property;
(\emph{iv})~{\sc spn}, {\sc spo} resp. indicate whether the pair 
$\feature, \pattern$ has the \SinglePositionInflexionFreeProp{} 
or the \SinglePositionInflexionProp{}.
 \label{table:summary}}
\end{table}

\subsection{Optimal Time Complexity Checkers}

Since checking each window of $\windowsize$ consecutive positions of a sequence
of size $\seqlength$ independently gives a time complexity of $O(\windowsize\cdot\seqlength)$,
we now introduce a theorem leading to an optimal time complexity.

\begin{theorem}\label{theo:checker}
The time complexity of evaluating Equations~(\ref{formula1}) and (\ref{formula2})
on a sequence $\sequence=\seq$ for all sliding windows of size $\windowsize$ is $\Theta(\seqlength)$.
Moreover, assuming one can check in constant time whether a sliding window of the sequence $\sequence$
contains or not a $\pattern$-pattern, the time complexity
of evaluating Equation~(\ref{formula3}) for all sliding windows of size $\windowsize$ of sequence $\sequence$
is also $\Theta(\seqlength)$.
\end{theorem}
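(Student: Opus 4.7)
The lower bound $\Omega(\seqlength)$ is immediate: any checker must at least read the $\seqlength$ variables of $\sequence$. For the upper bound, the plan is to observe that the right-hand sides of all three equations are built from only three scalar quantities per window: $\feature_\pattern(x_{1,j})$, $\feature_{\pattern^r}(x_{\seqlength,i})$, and the constant $\feature_\pattern(x_{1,\seqlength})$. If these can be stored in two arrays $P[1..\seqlength]$ and $S[1..\seqlength]$ that are precomputed in $\Theta(\seqlength)$ total time, then each of the $\seqlength-\windowsize+1$ windows is evaluated in $O(1)$, yielding the desired $\Theta(\seqlength)$ bound.

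To build $P[j]=\feature_\pattern(x_{1,j})$ for $j=1,2,\dots,\seqlength$, first compute the signature $\signature=s_1 s_2\dots s_{\seqlength-1}$ of $\sequence$ by a single left-to-right pass, at $O(1)$ per position. Then run a deterministic finite automaton recognising $\Language{\pattern}$ on $\signature$ while maintaining (i) an accumulator $A$ holding the sum of feature values of all already-completed maximal $\pattern$-pattern occurrences, and (ii) a small amount of state describing the partial feature being built for the currently open occurrence (if any). When the DFA, upon reading $s_{j-1}$, detects that a maximal occurrence has just closed, we finalise its feature value from the partial state and add it to $A$; we then set $P[j]$ to $A$ (plus any currently open contribution, for features with the \SameValue{} or \SumDecomposition{} properties, where the ``in-progress'' part is already determined). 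For the features of Table~\ref{tab:fgp} the partial state is constant-size: for $\One$ and $\Width$ it is an integer counter, for $\Surf$ a running sum, and for $\MaxFeature$ and $\MinFeature$ a running extremum over the open occurrence. Hence each step costs $O(1)$ and the whole scan costs $\Theta(\seqlength)$. The array $S[i]=\feature_{\pattern^r}(x_{\seqlength,i})$ is built symmetrically by scanning the reversed signature with a DFA for $\pattern^r$. The constant $\feature_\pattern(x_{1,\seqlength})$ is simply $P[\seqlength]$.

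With $P$, $S$, and $\feature_\pattern(x_{1,\seqlength})$ tabulated, evaluating Equation~(\ref{formula1}) for the window $[i,i+\windowsize-1]$ is a single addition-subtraction; Equation~(\ref{formula2}) adds one $\max$; Equation~(\ref{formula3}) uses the hypothesis of the theorem that the test ``no $\pattern$-pattern in $x_{i,j}$'' takes constant time, after which the same three table lookups apply. Iterating over the $\seqlength-\windowsize+1$ windows therefore contributes $\Theta(\seqlength)$ additional work, so the overall time complexity is $\Theta(\seqlength)$.

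The main obstacle lies in the second paragraph: arguing that the per-feature incremental update during the DFA scan is truly $O(1)$ and correctly produces $P[j]$ at every intermediate index, in particular at positions $j$ that lie strictly inside a currently open maximal occurrence. For features enjoying the \SumDecomposition{} or \SameValue{} properties this follows because the contribution of the open occurrence up to position~$j$ equals the partial aggregate stored; for \SinglePosition{} features the key is that $\MaxFeature$ and $\MinFeature$ support constant-time extension, so the contribution of the open occurrence at position~$j$ is obtained from the running extremum maintained since the occurrence began. Handling these cases uniformly is the crux; once this is established, the remainder of the argument is the bookkeeping outlined above. \qed
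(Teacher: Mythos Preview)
Your proposal is correct and follows the same approach as the paper: precompute the prefix values $\feature_\pattern(x_{1,j})$ and the suffix values $\feature_{\pattern^r}(x_{\seqlength,i})$ by a single forward and a single backward automaton pass, then evaluate each window in $O(1)$. The only difference is one of economy: where you spend two paragraphs sketching how a DFA with constant-size auxiliary state can be augmented to expose the intermediate aggregates $P[j]$ (and flag the ``open occurrence'' subtlety as the main obstacle), the paper simply invokes the register-automaton construction of~\cite{Beldiceanu:synthesis}, which by design exposes all intermediate register values, and thus dispatches that step in one sentence. Your per-feature case analysis is essentially a rediscovery of what that register automaton does; it is not wrong, but it is unnecessary once the cited construction is available.
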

\begin{proof}
Evaluating~(\ref{formula1}), (\ref{formula2}) and~(\ref{formula3})
for all sliding windows $[i,j]$ (with $i\in[1,\seqlength-\windowsize+1]$ and $j=i+\windowsize-1$)
requires evaluating
$\feature_\pattern(x_{1,i+\windowsize-1})$,
$\feature_{\pattern}(x_{i,\seqlength})$ and
$\feature_\pattern(x_{1,\seqlength})$.
\begin{itemize}
\item
First, note that within Equation~(\ref{formula3})
all the tests ``$\text{if no }\sigma\text{-pattern in}~x_{i,j}$'' on the different sliding windows
(with $i\in[1,\seqlength-\windowsize+1]$ and $j=i+\windowsize-1$)
can be done in $O(n)$ because of our assumption.
\item
Second, evaluating $\feature_\pattern(x_{1,i+\windowsize-1})$
for all $i\in[1,\seqlength-\windowsize+1]$
as well as $\feature_\pattern(x_{1,\seqlength})$
can be done in $O(\seqlength)$ by using a register automaton~\cite{Beldiceanu:synthesis} for
$\constraint{sum\_}f\constraint{\_}\pattern(\result,$ $x_1 x_2\dots x_\seqlength)$,
which exposes all its intermediate register values~\cite{Sicstus19}.
\item
Third, since the pattern $\pattern$ is reversible and since the feature $\feature$ is commutative,
$\feature_{\pattern}(x_{i,\seqlength})=\feature_{\pattern^r}(x_{\seqlength,i})$.
Evaluating $\feature_{\pattern^r}(x_{\seqlength,i})$
for all $i\in[1,\seqlength-\windowsize+1]$ can also be done in $O(\seqlength)$ by using
a register automaton for
$\constraint{sum\_}f\constraint{\_}\pattern^r(\result,$ $x_\seqlength x_{\seqlength-1}\dots x_1)$, which exposes all its intermediate register values.
\end{itemize}
Therefore, the time complexity of evaluating Equations~(\ref{formula1}), (\ref{formula2})
and~(\ref{formula3}) is~$O(\seqlength)$.
Since each variable of $x_{1,n}$ needs to be scanned at least once to identify pattern occurrences,
this time complexity is optimum.\qed
\end{proof}

\subsubsection{Pattern Properties for Checking in Linear Time the Occurrence of Pattern in Sliding Windows}

We now introduce some additional pattern properties to check in time $O(\seqlength)$
whether or not the different sliding windows of size $\windowsize$
of a sequence $\sequence=\seq$ contain a pattern occurrence.
As these properties cover all reversible patterns of the time-series catalogue,
one can also use Equation~(\ref{formula3}) for such patterns for the entries
of Table~\ref{table:summary} mentioning~(\ref{formula3}).

\begin{definition}
\label{def:letter_regexp}
A pattern $\pattern$ has the \LetterProp{} wrt a letter $e$ if
$e$ is a word in $\Language{\pattern}$, and if
any word of $\Language{\pattern}$ contains at least one occurrence of $e$,
i.e.~if $\Language{\pattern}\cap\{e\}\neq\emptyset$ and
if $\Language{\pattern}\cap(\Sigma\setminus e)^*=\emptyset$.
\end{definition}

\begin{definition}
\label{def:suffix_unavoidable_regexp}
A pattern $\pattern$ has the \SuffixUnavoidableProp{} wrt a letter $e\in\{\reg{<},\reg{=},\reg{>}\}$ if
all words in $\Language{\pattern}$ contain at least one occurrence of $e$, and if
each suffix starting with the letter $e$ of any word of $\Language{\pattern}$
belongs also to $\Language{\pattern}$,
i.e.~if $\Language{\pattern}\cap(\Sigma\setminus e)^*=\emptyset$ and
if $\shuffle(\Language{\pattern},s)\land\Sigma^*s\,e\,\Sigma^*\land\Sigma^*s\,(\Sigma^*\setminus\Language{\pattern})=\emptyset$.
\end{definition}

\begin{definition}
\label{def:incompressible_regexp}
A pattern $\pattern$ has the \IncompressibleProp{} if 
all \emph{proper factors} of any word in $\Language{\pattern}$ do not belong to $\Language{\pattern}$,
i.e.~if $\Sigma^+\Language{\pattern}\Sigma^*\cap\Language{\pattern}=\emptyset$ and
if $\Sigma^*\Language{\pattern}\Sigma^+\cap\Language{\pattern}=\emptyset$.
\end{definition}

\begin{definition}
\label{def:factor_regexp}
A pattern $\pattern$ has the \FactorProp{} if
for any word $\word$ in $\Language{\pattern}$
all factors of $\word$, whose length is greater than or equal to
the smallest length $\omega_\pattern$ of a word in $\Language{\pattern}$, belong also
to $\Language{\pattern}$,
i.e.~if $\shuffle(\shuffle(\Language{\pattern},s),s)\land\Sigma^*s\Sigma^*\Sigma^{\omega_\pattern}\Sigma^*s\Sigma^*\land\Sigma^*s(\Sigma^*\setminus\Language{\pattern})s\Sigma^*=\emptyset$.
\end{definition}

\begin{sloppypar}
\begin{example}[pattern properties, continuation of Example~\ref{ex:pattern_properties}]
\begin{itemize}
\item
Eight out of the $19$ reversible patterns of~\cite{arafailova2016global} have the \LetterProp{}.
For instance, the patterns $\DecreasingPatternName$, $\DecreasingSequencePatternName$ and
$\StrictlyDecreasingSequencePatternName$ all have the \LetterProp{} wrt $\{\reg{>}\}$
since (\emph{i})~the word $\reg{>}$ is in $\Language{\DecreasingPatternName}$,
in $\Language{\DecreasingSequencePatternName}$ and in $\Language{\StrictlyDecreasingSequencePatternName}$,
and (\emph{ii})~any word in $\Language{\DecreasingPatternName}$,
in $\Language{\DecreasingSequencePatternName}$ or
in $\Language{\StrictlyDecreasingSequencePatternName}$ contains at least one occurrence of $\reg{>}$.
\item
$16$ out of the $19$ reversible patterns of~\cite{arafailova2016global} have
the $\SuffixUnavoidable{}$ property.
For instance, the pattern $\PeakPatternName$ has the $\SuffixUnavoidable{}$
property wrt the letter $\reg{<}$,
since (\emph{i})~any occurrence of peak contains at least one occurrence of $\reg{<}$, and
since (\emph{ii})~any suffix, starting with a $\reg{<}$, of a word of $\Language{\PeakPatternName}$
is also a peak.
\item
Six out of the $19$ reversible patterns of~\cite{arafailova2016global} have the \IncompressibleProp{}. 
The pattern $\DecreasingTerracePatternName$ has the \IncompressibleProp{} because, 
if an occurrence of the letter \reg{>} is removed from any word in $\Language{\DecreasingTerracePatternName}$,
the corresponding proper factor is not in $\Language{\DecreasingTerracePatternName}$.
\item
Seven out of the $19$ reversible patterns of~\cite{arafailova2016global} have the \FactorProp{}.
For instance, the pattern $\ZigzagPatternName$ has the $\Factor{}$ property because
any factor of length greater than or equal to $\omega_\ZigzagPatternName=3$ of a zigzag is also a zigzag.
\end{itemize}
\end{example}
\end{sloppypar}

For each pattern property described in Definitions~\ref{def:letter_regexp} to \ref{def:factor_regexp}
we now show how to check in $O(\seqlength)$ which sliding windows are empty or not.
\begin{itemize}
\item
Consider a pattern $\pattern$ that has the \LetterProp{} wrt a letter $e$.
First compute in one scan the number of occurrences $\mathit{nocc}[k]$ of $e$
in $x_{1,k}$ for all $k\in[1,\seqlength]$;
second, for each sliding window $[i,j]$,
check in constant time that $\mathit{nocc}[i]=\mathit{nocc}[j]$.
\item
Consider a pattern $\pattern$ that has the \SuffixUnavoidableProp{} wrt a letter $e$.
First compute in one scan the number of occurrences $\mathit{nocc1}[k]$ of $e$
in $x_{1,k}$ for all $k\in[1,\seqlength]$;
second compute in one scan the number of maximal occurrences $\mathit{nocc2}[k]$ of pattern $\pattern$
in $x_{1,k}$ for all $k\in[1,\seqlength]$;
third, for each sliding window $[i,j]$,
check in constant time that $\mathit{nocc1}[i]=\mathit{nocc1}[j]\lor\mathit{nocc2}[i]=\mathit{nocc2}[j]$.
\item
Consider a pattern $\pattern$ that has the \Incompressible{} or the \Factor{} property.
First compute for each $k=1,2,\dots,\seqlength$ the end $\mathit{end}[k]$
of the next pattern occurrence
(which will be set to $\seqlength+1$ if no pattern occurrence ends after $k$,
e.g.~$\mathit{end}[\seqlength]=\seqlength+1$).
Second compute for each $k=\seqlength,\seqlength-1,\dots,1$ the start $\mathit{start}[k]$
of the previous pattern occurrence (which will be set to $0$ if no pattern occurrence starts before $k$,
e.g.~$\mathit{start}[1]=0$).
Third, depending on whether the pattern has the \Incompressible{} or the \Factor{} property,
do the following check in constant time for each sliding window $[i,j]$:
\begin{itemize}
\item\hspace*{0.01pt}[\Incompressible{}]
{\footnotesize$\texttt{return}~\mathit{end}[i]>j\lor\mathit{start}[j]<i$}
\item\hspace*{0.01pt}[\Factor{}]
{\footnotesize
\hspace*{39pt}$\mathit{endi} = \mathit{end}[i]$, $\mathit{startj} = \mathit{start}[j]$\\
\hspace*{83pt}$\texttt{if}~\mathit{endi}>\seqlength \lor \mathit{startj}<1~\texttt{then return true}$\\
\hspace*{83pt}$\texttt{if}~\mathit{endi}-i\hspace*{6pt}\geq\omega_\pattern~\texttt{then}~i'\hspace*{1.5pt}=\hspace*{1pt}i~\texttt{else}~i'\hspace*{1pt}=\mathit{endi}$\\
\hspace*{83pt}$\texttt{if}~j-\mathit{startj}\geq\omega_\pattern~\texttt{then}~j'=j~\texttt{else}~j'=\mathit{startj}$\\
\hspace*{83pt}$\mathit{endi}'= \mathit{end}[i']$, $\mathit{startj}' = \mathit{start}[j']$\\
\hspace*{83pt}$\texttt{if}~\mathit{endi}'>\seqlength \lor \mathit{startj}'<1~\texttt{then return true}$\\
\hspace*{83pt}$\texttt{return}~\min(j',\mathit{endi}')-\max(i',\mathit{start}[\min(j',\mathit{endi}')])<\omega_\pattern$
}
\end{itemize}

Computing the end (resp.~start) of the next (resp.~previous) pattern occurrence is done
by using a register automaton derived from the transducer~\cite{Beldiceanu:synthesis}
which recognises pattern occurrences.\footnote{How to generate a transducer that
recognises all maximal pattern occurrences was described in~\cite{RodriguezFlenerPearson17}.}
Figure~\ref{fig:automata_next} give the register automaton
associated with the $\PlainPatternName$ and the $\ZigzagPatternName$ patterns.
In (A),~the dotted transition marks the end of a plain.
In (B),~the dashed (resp. dotted) transitions indicate that we are inside a zigzag
(resp.~that a zigzag is ending). Depending whether we were in a zigzag or not we set
$\mathit{end}[\seqlength-1]$ to $\seqlength$ or to $\seqlength+1$.

\begin{example}[Running automata that compute the end of the next pattern occurrence]
Table~\ref{table:example_running_end_plain_automata}
(resp. Table~\ref{table:example_running_end_zigzag_automata}) shows an example of execution
of the register automaton given in Part~(A) (resp.~(B)) of Figure~\ref{fig:automata_next}.
\end{example}

\subsection{Optimal Space Complexity Reformulation}

Rather than stating a time-series constraint on each window of size $\windowsize$, which
would result in an $O(\windowsize\cdot\seqlength)$ space complexity, we now show how to reformulate
the \constraint{slide\_sum\_}$\feature$\constraint{\_}$\pattern(\windowsize,\low,\up,\seq)$
constraint as a conjunction of constraints with a space complexity of $\Theta(\seqlength)$.
This reformulation was extended to the patterns of Table~\ref{tab:fgp} for Equation~(\ref{formula3})
to reformulate condition ``$\text{if no }\sigma\text{-pattern in}~x_{i,j}$'',
but is not described here for space reasons.

\begin{theorem}
For those time-series constraints for which Equations~(\ref{formula1}) or~(\ref{formula2}) holds,
the constraint
\constraint{slide\_sum\_}$\feature$\constraint{\_}$\pattern(\windowsize,\low,\up,\seq)$
can be reformulated with a space complexity of $\Theta(\seqlength)$.
\end{theorem}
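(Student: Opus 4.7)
The plan is to turn the two computed equations directly into a constraint model whose auxiliary variables and constraints both grow linearly in $\seqlength$. The key observation is that the right-hand sides of Equations~(\ref{formula1}) and~(\ref{formula2}) only involve three quantities for a given window $[i,j]$ with $j=i+\windowsize-1$: the prefix value $P_j=\feature_\pattern(x_{1,j})$, the suffix value $S_i=\feature_{\pattern^r}(x_{\seqlength,i})$, and the total $T=\feature_\pattern(x_{1,\seqlength})$. If we can materialise $P_1,\dots,P_\seqlength$ and $S_1,\dots,S_\seqlength$ as $2\seqlength$ auxiliary integer variables linked to $\sequence$ through $O(\seqlength)$ constraints, then each window contribution $\result_i$ can be written using a single arithmetic constraint, and $\low,\up$ are obtained via two aggregate constraints over the $\result_i$.

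Concretely, I would first introduce a fresh integer variable $P_j$ for every $j\in[1,\seqlength]$ and post, over $x_1 x_2\dots x_\seqlength$, a register automaton for $\constraint{sum\_}f\constraint{\_}\pattern$ that exposes, at position $j$, the value of its accumulator as $P_j$; this follows the construction recalled in the proof of Theorem~\ref{theo:checker} and uses the intermediate-register mechanism of~\cite{Sicstus19}. Symmetrically, I would introduce $S_i$ for every $i\in[1,\seqlength]$ and post a register automaton for $\constraint{sum\_}f\constraint{\_}\pattern^r$ read on the reversed sequence $x_\seqlength x_{\seqlength-1}\dots x_1$, exposing $S_i$ at position $\seqlength-i+1$; reversibility of $\pattern$ and commutativity of $\feature$ guarantee that $S_i=\feature_{\pattern^r}(x_{\seqlength,i})$. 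Setting $T=P_\seqlength$ costs one equality. Each of these two automaton constraints has a constant number of states and registers, and exposing its intermediate register at every position adds $\seqlength$ auxiliary variables and $O(\seqlength)$ linking constraints.

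Next, for every $i\in[1,\seqlength-\windowsize+1]$ I would post a single ternary arithmetic constraint expressing $\result_i$ according to whichever of~(\ref{formula1}) or~(\ref{formula2}) is valid for the pair $\feature,\pattern$: either $\result_i=P_{i+\windowsize-1}+S_i-T$ or $\result_i=\max(0,P_{i+\windowsize-1}+S_i-T)$. Finally, I would link the $\result_i$ to $\low$ and $\up$ through the two aggregation constraints of Definition~\ref{def:slide_time_series_ctr}, namely a \constraint{minimum} and a \constraint{maximum} constraint, each of size $\seqlength-\windowsize+1$. Adding up the contributions gives $2\seqlength$ auxiliary variables for $P$ and $S$, $\seqlength-\windowsize+1$ contribution variables, and $O(\seqlength)$ posted constraints plus the two automata, hence $\Theta(\seqlength)$ space overall, matching the stated bound.

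The main obstacle I expect is the bookkeeping around the register automata: one must argue that the machinery used in Theorem~\ref{theo:checker} for prefix and suffix sums lifts to a \emph{declarative} model in which intermediate registers are exposed as CP variables with $O(\seqlength)$ transition constraints, rather than merely being readable during a single forward sweep of a procedural checker. Once this is in place, validity of the reformulation follows directly from the validity of Equations~(\ref{formula1}) or~(\ref{formula2}) on the targeted pattern/feature pairs, and the space accounting is routine.
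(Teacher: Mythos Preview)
Your proposal is correct and follows essentially the same approach as the paper: expose the prefix values $P_j$ and suffix values $S_i$ as intermediate registers of two automaton constraints (one on $\sequence$, one on its reverse via $\pattern^r$), combine them per window through a single arithmetic constraint mirroring Equation~(\ref{formula1}) or~(\ref{formula2}), and aggregate with a $\min$ and a $\max$. The paper writes exactly this conjunction, with your $P_j$, $S_i$, $T$ appearing as $\overrightarrow{r_j}$, $\overleftarrow{r_i}$, $r$; your anticipated ``obstacle'' about lifting the checker's register sweep to a declarative model is precisely what the paper handles by invoking the three-argument form of $\constraint{sum\_}\feature\constraint{\_}\pattern$ that exposes all intermediate register values.
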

\begin{proof}
For Equation~(\ref{formula1}), it can be reformulated as the conjunction
\begin{equation}
\left\{
\begin{array}{ll}
\constraint{sum\_}\feature\constraint{\_}\pattern\left(r,~\seq,\hspace*{14pt}\overrightarrow{r_1} \overrightarrow{r_2}\dots \overrightarrow{r_\seqlength}\right)~\land\\[2pt]
\constraint{sum\_}\feature\constraint{\_}\pattern\left(r,~\seqrev,~\overleftarrow{r_1} \overleftarrow{r_2}\dots \overleftarrow{r_\seqlength}\right)~\land\\[2pt]
\forall i\in[1,\seqlength-\windowsize+1]: r_{i,j}=\overrightarrow{r_j}+\overleftarrow{r_i}-r \text{~(with~}j=i+\windowsize-1\text{)}~\land\\[2pt]
\low=\min(r_{1,\windowsize} r_{2,\windowsize+1}\dots r_{\seqlength-\windowsize+1,\seqlength})~\land\\[2pt]
\up\hspace*{4pt}=\max(r_{1,\windowsize} r_{2,\windowsize+1}\dots r_{\seqlength-\windowsize+1,\seqlength})
\end{array}\label{reformulation}
\right.
\end{equation}
where $\overleftarrow{r_i}$ (resp.~$\overrightarrow{r_j}$) is the exposed register value corresponding
to the first argument of $\constraint{sum\_}f\constraint{\_}\pattern(\overrightarrow{r_j},$ $x_1 x_2\dots x_j)$,
(resp.~$\constraint{sum\_}f\constraint{\_}\pattern^r(\overleftarrow{r_i},$ $x_\seqlength x_{\seqlength-1}\dots x_i)$).
For Equation~(\ref{formula2}), we replace in~(\ref{reformulation}) the term
$r_{i,j}=\overrightarrow{r_j}+\overleftarrow{r_i}-r$ by the term
$r_{i,j}=\max(0,\overrightarrow{r_j}+\overleftarrow{r_i}-r)$.
\qed
\end{proof}

\begin{figure}[!h]
\centering
\begin{tikzpicture}
\begin{scope}[->,>=stealth',shorten >=1pt,auto,node distance=14mm,semithick,yshift=0.8cm]
\node[initial,initial where=left,initial text=,initial distance=2.7mm,accepting,state,minimum size=5pt] (s) {\scriptsize$s$};
\node[accepting,state,minimum size=5pt] (r)[right of=s] {\scriptsize$r$};
\path
 (s) edge [loop above] node{\scriptsize $<$} (s)
 (s) edge [loop below] node{\scriptsize $=$} (s)
 (s) edge [bend left]  node {\scriptsize $>$} (r)
 (r) edge [loop above] node{\scriptsize $>$} (r)
 (r) edge [loop below] node{\scriptsize $=$} (r)
 (r) edge [densely dotted,bend left]  node {\scriptsize $<$} (s);
\end{scope}
\begin{scope}[->,>=stealth',shorten >=1pt,auto,node distance=12mm,semithick,xshift=-0.33cm,yshift=1.3cm]
\node (a) at (-0.2,0.1) {\footnotesize (A)};
\fill[rounded corners=2pt,gray!20] (-0.6,-3.9) rectangle (2.25,-1.4);
\node (tr) at (0.3,-1.6) {\bf \scriptsize transitions:};
\node (o) at (0.3,-3.7) {\scriptsize $\circ\in\{<,=,>\}$};
\draw[->] (-0.5,-2.5) -- (2.2,-2.5) node[pos=.5] {\scriptsize $\begin{array}{c}x_k\circ x_{k+1}\\ \mathit{end}[k]=\mathit{end}[k+1]\end{array}$};
\draw[densely dotted,->] (-0.5,-3.4) -- (2.2,-3.4) node[pos=.5] {\scriptsize $\begin{array}{c}x_k \circ x_{k+1}\\ \mathit{end}[k]=k+1\end{array}$};
\end{scope}
\begin{scope}[->,>=stealth',shorten >=1pt,auto,node distance=12mm,semithick,xshift=2.7cm]
\node[initial,initial where=left,initial text=,initial distance=3mm,accepting,state,minimum size=5pt] (s) {\scriptsize$s$};
\node[accepting,state,minimum size=5pt] (a)[above right of=s] {\scriptsize$a$};
\node[accepting,state,minimum size=5pt] (b)[right of=a]       {\scriptsize$b$};
\node[accepting,state,minimum size=5pt] (c)[right of=b]       {\scriptsize$c$};
\node[accepting,state,minimum size=5pt] (d)[below right of=s] {\scriptsize$d$};
\node[accepting,state,minimum size=5pt] (e)[right of=d]       {\scriptsize$e$};
\node[accepting,state,minimum size=5pt] (f)[right of=e]       {\scriptsize$f$};
\node (cc) at ($(c)+(0.9,0)$) {$s$};
\node (ff) at ($(f)+(0.9,0)$) {$s$};
\path
 (s) edge [in=140,out=110,loop]      node[above]                        {\scriptsize $=$} (s)
 (s) edge [bend right]               node[left]                         {\scriptsize $>$} (d)
 (s) edge [bend left]                node[left]                         {\scriptsize $<$} (a)
 (a) edge                            node[above]                        {\scriptsize $>$} (b)
 (a) edge                            node[above=-2.5pt,sloped,pos=0.45] {\scriptsize $=$} (s)
 (a) edge [in=140,out=110,loop]      node[above]                        {\scriptsize $<$} (a)
 (b) edge                            node[below=-2pt,sloped,pos=0.2]    {\scriptsize $>$} (d)
 (b) edge [bend angle=8,bend left]   node[below,sloped,pos=0.6]         {\scriptsize $=$} (s)
 (b) edge [densely dashed]           node[above]                        {\scriptsize $<$} (c)
 (c) edge [densely dashed,bend angle=20,bend left] node                 {\scriptsize $>$} (f)
 (c) edge [densely dotted]           node[above]                        {\scriptsize $=$} (cc)
 (c) edge [densely dotted,bend angle=46,bend right] node[above]         {\scriptsize $<$} (a)
 (d) edge [in=230,out=200,loop]      node[below]                        {\scriptsize $>$} (d)
 (d) edge                            node[below=-2pt,sloped,pos=0.45]   {\scriptsize $=$} (s)
 (d) edge                            node[below]                        {\scriptsize $<$} (e)
 (e) edge [densely dashed]           node[below]                        {\scriptsize $>$} (f)
 (e) edge [bend angle=8,bend right]  node[above,sloped,pos=0.6]         {\scriptsize $=$} (s)
 (e) edge                            node[above=-2pt,sloped,pos=0.2]    {\scriptsize $<$} (a)
 (f) edge [densely dotted,bend angle=46,bend left]  node[below]         {\scriptsize $>$} (d)
 (f) edge [densely dotted]           node[below]                        {\scriptsize $=$} (ff)
 (f) edge [densely dashed,bend angle=20,bend left] node                 {\scriptsize $<$} (c);
\end{scope}
\begin{scope}[->,>=stealth',shorten >=1pt,auto,node distance=12mm,semithick,xshift=-0.33cm,xshift=8cm,yshift=2.2cm]
\node (b) at (-5.3,-0.8) {\footnotesize (B)};
\fill[rounded corners=2pt,gray!20] (-0.6,-4.8) rectangle (3.3,-1.4);
\node (tr) at (0.3,-1.6) {\bf \scriptsize transitions:};
\node (res) at (1.3,-0.8) {\scriptsize $\mathit{end}[\seqlength-1]=\seqlength+1-\mathit{in}[\seqlength-1]$};
\node (o) at (0.4,-4.6) {\scriptsize $\circ\in\{<,=,>\}$};
\draw[->] (-0.5,-2.5) -- (3.2,-2.5) node[pos=.5] {\scriptsize $\begin{array}{c}x_{k} \circ x_{k+1}\\ \mathit{end}[k-1]=\mathit{end}[k], \mathit{in}[k]=0 \end{array}$};
\draw[densely dashed,->] (-0.5,-3.4) -- (3.2,-3.4) node[pos=.5] {\scriptsize $\begin{array}{c}x_k \circ x_{k+1}\\ \mathit{end}[k-1]=\mathit{end}[k], \mathit{in}[k]=1 \end{array}$};
\draw[densely dotted,->] (-0.5,-4.3) -- (3.2,-4.3) node[pos=.5] {\scriptsize $\begin{array}{c}x_k \circ x_{k+1}\\ \mathit{end}[k-1]=k, \mathit{in}[k]=0 \end{array}$};
\end{scope}
\draw[densely dashed] (2,-2.6) -- (2,1.8);
\end{tikzpicture}
\caption{\label{fig:automata_next}
Register automata computing the end of the next pattern maximal occurrence for
(A)~the $\PlainPatternName$ and (B)~the $\ZigzagPatternName$ patterns}
\end{figure}
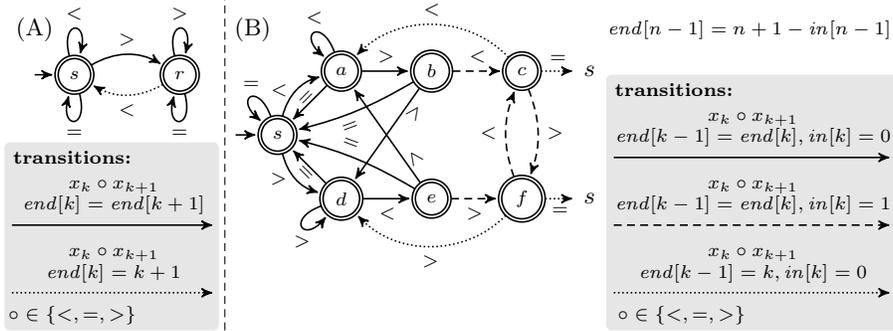
\end{itemize}

{
\begin{table}[!h]
\begin{subtable}{.5\linewidth}
\centering
\begin{tabular}{lrrrrrrrrrrr}
\toprule
$x_k$               & $0$ & $1$ & $0$ & $1$ & $0$ & $0$ & $1$ & $0$ &  $1$ &  $2$ &  $0$ \\
$s_k$               & $<$ & $>$ & $<$ & $>$ & $=$ & $<$ & $>$ & $<$ &  $<$ &  $>$ &  $<$ \\
$x_{k+1}$           & $1$ & $0$ & $1$ & $0$ & $0$ & $1$ & $0$ & $1$ &  $2$ &  $0$ &  $1$ \\
\midrule
$k+1$               & $2$ & $3$ & $4$ & $5$ & $6$ & $7$ & $8$ &  $9$ & $10$ & $11$ & $12$ \\
$\mathit{end}[k]$   & $4$ & $4$ & $4$ & $7$ & $7$ & $7$ & $9$ &  $9$ & $12$ & $12$ & $12$ \\
$\mathit{end}[k+1]$ & $4$ & $4$ & $7$ & $7$ & $7$ & $9$ & $9$ & $12$ & $12$ & $12$ & $13$ \\
\bottomrule
\end{tabular}
\caption*{(A1)}
\end{subtable}
\begin{subtable}{.5\linewidth}
\centering
\begin{tabular}{lrrrrrrrrrrr}
\toprule
$x_k$               &  $1$ &  $0$ & $2$ & $1$ & $0$ & $1$ & $0$ & $0$ & $1$ & $0$ & $1$ \\
$s_k$               &  $>$ &  $<$ & $>$ & $>$ & $<$ & $>$ & $=$ & $<$ & $>$ & $<$ & $>$ \\
$x_{k+1}$           &  $0$ &  $2$ & $1$ & $0$ & $1$ & $0$ & $0$ & $1$ & $0$ & $1$ & $0$ \\
\midrule
$k+1$               & $11$ & $10$ & $9$ & $8$ & $7$ & $6$ & $5$ & $4$ & $3$ & $2$ & $1$ \\
$\mathit{end}[k]$   & $10$ & $10$ & $7$ & $7$ & $7$ & $4$ & $4$ & $4$ & $2$ & $2$ & $0$ \\
$\mathit{end}[k+1]$ & $10$ &  $7$ & $7$ & $7$ & $4$ & $4$ & $4$ & $2$ & $2$ & $0$ & $0$ \\
\bottomrule
\end{tabular}
\caption*{(A2)}
\end{subtable}
\caption{\label{table:example_running_end_plain_automata}Running the register automaton of Figure~\ref{fig:automata_next} that computes the end of the next plain on (A1)~the sequence $x=0 1 0 1 0 0 1 0 1  2  0  1$ and (A2)~on its reverse}
\end{table}
}

{
\begin{table}[!h]
\begin{subtable}{.5\linewidth}
\centering
\begin{tabular}{lrrrrrrrrrrr}
\toprule
$x_k$               & $0$ & $1$ & $0$ & $1$ & $0$ & $0$ & $1$ & $0$ &  $1$ &  $2$ &  $0$ \\
$s_k$               & $<$ & $>$ & $<$ & $>$ & $=$ & $<$ & $>$ & $<$ &  $<$ &  $>$ &  $<$ \\
$x_{k+1}$           & $1$ & $0$ & $1$ & $0$ & $0$ & $1$ & $0$ & $1$ &  $2$ &  $0$ &  $1$ \\
\midrule
$k$                 & $1$ & $2$ & $3$ & $4$ & $5$ & $6$ & $7$ & $8$ &  $9$ & $10$ & $11$ \\
$\mathit{end}[k-1]$ & $5$ & $5$ & $5$ & $5$ & $5$ & $9$ & $9$ & $9$ &  $9$ & $12$ & $12$ \\
$\mathit{end}[k]$   & $5$ & $5$ & $5$ & $5$ & $9$ & $9$ & $9$ & $9$ & $12$ & $12$ & $12$ \\
$\mathit{in}[k]$    & $0$ & $0$ & $1$ & $1$ & $0$ & $0$ & $0$ & $1$ &  $0$ &  $0$ &  $1$ \\
\bottomrule
\end{tabular}
\caption*{(B1)}
\end{subtable}
\begin{subtable}{.5\linewidth}
\centering
\begin{tabular}{lrrrrrrrrrrr}
\toprule
$x_k$               &  $1$ &  $0$ &  $2$ & $1$ & $0$ & $1$ & $0$ & $0$ & $1$ & $0$ & $1$ \\
$s_k$               &  $>$ &  $<$ &  $>$ & $>$ & $<$ & $>$ & $=$ & $<$ & $>$ & $<$ & $>$ \\
$x_{k+1}$           &  $0$ &  $2$ &  $1$ & $0$ & $1$ & $0$ & $0$ & $1$ & $0$ & $1$ & $0$ \\
\midrule
$k$                 & $12$ & $11$ & $10$ & $9$ & $8$ & $7$ & $6$ & $5$ & $4$ & $3$ & $2$ \\
$\mathit{end}[k-1]$ &  $9$ &  $9$ &  $9$ & $9$ & $6$ & $6$ & $6$ & $1$ & $1$ & $1$ & $1$ \\
$\mathit{end}[k]$   &  $9$ &  $9$ &  $9$ & $6$ & $6$ & $6$ & $1$ & $1$ & $1$ & $1$ & $1$ \\
$\mathit{in}[k]$    &  $0$ &  $0$ &  $1$ & $0$ & $0$ & $1$ & $0$ & $0$ & $0$ & $1$ & $1$ \\
\bottomrule
\end{tabular}
\caption*{(B2)}
\end{subtable}
\caption{\label{table:example_running_end_zigzag_automata}Running the register automaton of Figure~\ref{fig:automata_next} that computes the end of the next zigzag on (B1)~the sequence $x=0 1 0 1 0 0 1 0 1  2  0  1$ and (B2)~on its reverse}
\end{table}
}

\section{Conclusion}

Based on a detailed analysis of feature and pattern properties
of time\nobreakdash-series constraints of the time-series catalogue that use the $\SumAggr$ aggregator,
we came up with a $\Theta(\seqlength)$ time complexity checker,
and a $\Theta(\seqlength)$ space complexity reformulation for such constraints.
It is an open question how to generalise our results to
other aggregators such as $\min$ or $\max$.
Unlike the sum aggregator,
the equality $\agg(a,x)=b$ where $a,b$ are fixed integers and $x$ is a variable does not uniquely determine $x$ when $\agg\in\{\min,\max\}$.\\

\noindent{\footnotesize\textbf{Acknowledgment} We thank Pierre Flener for some feedback on an early version of this paper, and Colin de la Higuera for discussions on regular expressions, on the properties of their languages and on operators such as $\shuffle$.}

\bibliographystyle{splncs03}

\bibliography{paper}

\begin{thebibliography}{10}
\providecommand{\url}[1]{\texttt{#1}}
\providecommand{\urlprefix}{URL }

\bibitem{AlurFismanRaghothaman16}
Alur, R., Fisman, D., Raghothaman, M.: Regular programming for quantitative
  properties of data streams. In: Thiemann, P. (ed.) Programming Languages and
  Systems - 25th European Symposium on Programming, {ESOP} 2016, Held as Part
  of the European Joint Conferences on Theory and Practice of Software, {ETAPS}
  2016, Eindhoven, The Netherlands, April 2-8, 2016, Proceedings. Lecture Notes
  in Computer Science, vol. 9632, pp. 15--40. Springer (2016)

\bibitem{Catalog16}
Arafailova, E., Beldiceanu, N., Douence, R., Carlsson, M., Flener, P.,
  Rodr{\'{\i}}guez, M.A.F., Pearson, J., Simonis, H.: Global constraint
  catalog, volume ii, time-series constraints. CoRR  abs/1609.08925 (2016),
  \url{http://arxiv.org/abs/1609.08925}

\bibitem{arafailova2016global}
Arafailova, E., Beldiceanu, N., Douence, R., Carlsson, M., Flener, P.,
  Rodr{\'\i}guez, M.A.F., Pearson, J., Simonis, H.: Global constraint catalog,
  volume {II}, time-series constraints. arXiv preprint arXiv:1609.08925  (2016)

\bibitem{BeldiceanuContejean94}
Beldiceanu, N., Contejean, E.: {I}ntroducing {G}lobal {C}onstraints in {CHIP}.
  Mathl. Comput. Modelling  20(12),  97--123 (1994)

\bibitem{Smooth}
Beldiceanu, N., Carlsson, M.: Revisiting the cardinality operator and
  introducing the cardinality-path constraint family. In: Codognet, P. (ed.)
  ICLP~2001. LNCS, vol. 2237, pp. 59--73. Springer (2001)

\bibitem{Beldiceanu:synthesis}
Beldiceanu, N., Carlsson, M., Douence, R., Simonis, H.: Using finite
  transducers for describing and synthesising structural time-series
  constraints. Constraints  21(1),  22--40 (January 2016), journal fast track
  of CP~2015: summary on p.~723 of LNCS~9255, Springer, 2015

\bibitem{Beldiceanu:automata}
Beldiceanu, N., Carlsson, M., Petit, T.: Deriving filtering algorithms from
  constraint checkers. In: Wallace, M. (ed.) CP~2004. LNCS, vol. 3258, pp.
  107--122. Springer (2004)

\bibitem{SLIDE}
Bessi{\`e}re, C., Hebrard, E., Hnich, B., Kiziltan, Z., Walsh, T.: {SLIDE}: {A}
  useful special case of the {CARDPATH} constraint. In: Ghallab, M., et~al.
  (eds.) ECAI~2008. pp. 475--479. IOS Press (2008)

\bibitem{Sicstus19}
Carlsson, M., al.: {SICStus} {Prolog} {U}ser's {M}anual. RISE SICS AB, 4.5.1
  edn. (April 2019)

\bibitem{Hopcroft:automata}
Hopcroft, J.E., Motwani, R., Ullman, J.D.: Introduction to Automata Theory,
  Languages, and Computation. Addison-Wesley, 3rd edn. (2007)

\bibitem{LallouetLawLeeSiu11}
Lallouet, A., Law, Y.C., Lee, J.H., Siu, C.F.K.: Constraint programming on
  infinite data streams. In: Walsh, T. (ed.) {IJCAI} 2011, Proceedings of the
  22nd International Joint Conference on Artificial Intelligence, Barcelona,
  Catalonia, Spain, July 16-22, 2011. pp. 597--604. {IJCAI/AAAI} (2011)

\bibitem{LeeLeeZhong18}
Lee, J.C.H., Lee, J.H.M., Zhong, A.Z.: Augmenting stream constraint programming
  with eventuality conditions. In: Hooker, J.N. (ed.) Principles and Practice
  of Constraint Programming - 24th International Conference, {CP} 2018, Lille,
  France, August 27-31, 2018, Proceedings. Lecture Notes in Computer Science,
  vol. 11008, pp. 242--258. Springer (2018)

\bibitem{MaherNarodytskaQuimperWalsh08}
Maher, M.J., Narodytska, N., Quimper, C.G., Walsh, T.: {F}low-{B}ased
  {P}ropagators for the \emph{sequence} and {R}elated {G}lobal {C}onstraints.
  In: Stuckey, P.J. (ed.) Principles and Practice of Constraint Programming
  (CP'2008). LNCS, vol. 5202, pp. 159--174. Springer\nobreakdash-Verlag (2008)

\bibitem{Pesant:seqs}
Pesant, G.: A regular language membership constraint for finite sequences of
  variables. In: Wallace, M. (ed.) CP~2004. LNCS, vol. 3258, pp. 482--495.
  Springer (2004)

\bibitem{PicardCantinBouchardQuimperSweeney16}
Picard{-}Cantin, {\'{E}}., Bouchard, M., Quimper, C., Sweeney, J.: {L}earning
  {P}arameters for the {S}equence {C}onstraint from {S}olutions. In: Rueher, M.
  (ed.) Principles and Practice of Constraint Programming (CP'2016). LNCS, vol.
  9892, pp. 405--420. Springer\nobreakdash-Verlag (2016)

\bibitem{ReginPuget97}
R\'egin, J.C., Puget, J.F.: {A} {F}iltering {A}lgorithm for {G}lobal
  {S}equencing {C}onstraints. In: Smolka, G. (ed.) Principles and Practice of
  Constraint Programming (CP'97). LNCS, vol. 1330, pp. 32--46.
  Springer\nobreakdash-Verlag (1997)

\bibitem{RodriguezFlenerPearson17}
Rodr{\'\i}guez, M.A.F., Flener, P., Pearson, J.: Automatic generation of
  descriptions of time-series constraints. In: 29th {IEEE} International
  Conference on Tools with Artificial Intelligence, {ICTAI} 2017, Boston, MA,
  USA, November 6-8, 2017. pp. 102--109. {IEEE} Computer Society (2017)

\bibitem{Vandrager17}
Vaandrager, F.: Model learning. Communications of the ACM  60(2),  86--95
  (February 2017)

\end{thebibliography}

\clearpage
\appendix

\section{List of Feasible Types with Corresponding Witnesses}
\label{sec:witness_feasible_types}

\begin{multicols}{2}
\noindent Table~\ref{table:feasible_types_and_witnesses} provides for each of the $61$ feasible type
$\langle t_1,t_2,t_3\rangle$ that occurs in the map shown by Figure~\ref{fig:triples_map},
a regular expression for which the language defined by Theorem~\ref{theorem:representative_language}
is not empty. For instance, the type $\langle\csuf,\cfac,\cpre\rangle$ can be obtained 
from the following regular expression $\reg{<<<=<<<|<<=<|<=<<|=}$ as illustrated by the figure below.

\begin{tikzpicture}[scale=1]
\node[anchor=west] (dummy)  at (-0.8, 0) {~};
\node[anchor=west] (word)  at (1.6, 0.03) {\scriptsize word   in $1..k$};
\node[anchor=west] (suf)  at (1.6, -0.33) {\scriptsize suffix in $1..j$};
\node[anchor=west] (fac)  at (1.6, -0.66) {\scriptsize factor in $i..j$};
\node[anchor=west] (pre)  at (1.6, -0.99) {\scriptsize prefix in $i..k$};
\node[anchor=west] (w)  at (0, 0.00) {\tiny${\color{black}<<<=<<<}$};
\node[anchor=west] (p2) at (0.2,-0.33) {\tiny{\color{black}$<<=<$}};
\node[anchor=west] (w2) at (0.58,-0.66) {\tiny{\color{black}$=$}};
\node[anchor=west] (s1) at (0.39,-0.99) {\tiny{\color{black}$<=<<$}};
\draw[densely dotted] (0.12,0.1) -- (0.12,-1.1);
\draw[densely dotted] (0.52,0.1) -- (0.52,-1.1);
\draw[densely dotted] (1.07,0.1) -- (1.07,-1.1);
\draw[densely dotted] (1.45,0.1) -- (1.45,-1.1);
\node[anchor=north] (1) at (0.12,-1.1) {\tiny$1$};
\node[anchor=north] (i) at (0.52,-1.1) {\tiny$i$};
\node[anchor=north] (j) at (1.07,-1.1) {\tiny$j$};
\node[anchor=north] (k) at (1.45,-1.1) {\tiny$k$};
\end{tikzpicture}
\end{multicols}

{
\setlength{\tabcolsep}{4pt}
\begin{table}[!h]
\centering
\noindent\makebox[0.98\textwidth]{
\begin{tabular}{ll|ll}\toprule
\footnotesize Triple  & \footnotesize Witness & \footnotesize Triple  & \footnotesize Witness \\\midrule

$\langle\cfac,\cfac,\cfac\rangle$ & \scriptsize$<<=<<|=$ &
$\langle\cout,\cout,\cpre\rangle$ & \scriptsize$<<=<|<=$ \\

$\langle\cfac,\cfac,\cin \rangle$ & \scriptsize$<=<<|=$ &
$\langle\cout,\cout,\csuf\rangle$ & \scriptsize$<<=|=$  \\

$\langle\cfac,\cfac,\cpre\rangle$ & \scriptsize$<<=<<<|<=<<|=<<|=$ &
$\langle\cpre,\cfac,\cfac\rangle$ & \scriptsize$<=<==|<$           \\

$\langle\cfac,\cfac,\csuf\rangle$ & \scriptsize$<<=<=|=$      &
$\langle\cpre,\cfac,\cin \rangle$ & \scriptsize$<=<=<|=<=<|<$ \\

$\langle\cfac,\cout,\cfac\rangle$ & \scriptsize$<=<=<|=$           &
$\langle\cpre,\cfac,\cpre\rangle$ & \scriptsize$<<=<<<|<=<<|<<=|=$ \\

$\langle\cfac,\cout,\cin \rangle$ & \scriptsize$<<=<=|<=$ &
$\langle\cpre,\cfac,\csuf\rangle$ & \scriptsize$<=<=<|<$  \\

$\langle\cfac,\cout,\cout\rangle$ & \scriptsize$<=<<|=$ &
$\langle\cpre,\cout,\cfac\rangle$ & \scriptsize$<=<=|<$ \\

$\langle\cfac,\cout,\cpre\rangle$ & \scriptsize$<=<<>|<<|=$ &
$\langle\cpre,\cout,\cin \rangle$ & \scriptsize$<=<=|<=$    \\

$\langle\cfac,\cout,\csuf\rangle$ & \scriptsize$<=<=|=$ &
$\langle\cpre,\cout,\cout\rangle$ & \scriptsize$<==|<$  \\

$\langle\cfac,\cpre,\cin \rangle$ & \scriptsize$<=<=|=$   &
$\langle\cpre,\cout,\cpre\rangle$ & \scriptsize$<=<=<|<=$ \\

$\langle\cfac,\cpre,\cpre\rangle$ & \scriptsize$<=<<|=$ &
$\langle\cpre,\cout,\csuf\rangle$ & \scriptsize$<=<|<$  \\

$\langle\cin, \cfac,\cfac\rangle$ & \scriptsize$<<=<|=$ &
$\langle\cpre,\cpre,\cin \rangle$ & \scriptsize$<=<|<$  \\

$\langle\cin, \cfac,\cin \rangle$ & \scriptsize$<<=<<|<<=<|<=<<|=$ &
$\langle\cpre,\cpre,\cpre\rangle$ & \scriptsize$<<==|<$ \\

$\langle\cin, \cfac,\cpre\rangle$ & \scriptsize$<<=<<<|<<=<|<=<<|=$ &
$\langle\csuf,\cfac,\cfac\rangle$ & \scriptsize$<<<=<<|<<=<|<<=|=$  \\

$\langle\cin, \cfac,\csuf\rangle$ & \scriptsize$<<=<=|<<=<|=$       &
$\langle\csuf,\cfac,\cin \rangle$ & \scriptsize$<<<=<<|<<=<|<=<<|=$ \\

$\langle\cin, \cin, \cin \rangle$ & \scriptsize$<<|<$                &
$\langle\csuf,\cfac,\cpre\rangle$ & \scriptsize$<<<=<<<|<<=<|<=<<|=$ \\

$\langle\cin, \cin, \cpre\rangle$ & \scriptsize$<<=|<$             &
$\langle\csuf,\cfac,\csuf\rangle$ & \scriptsize$<<<=<<|<<=<|=<<|=$ \\

$\langle\cin, \cout,\cfac\rangle$ & \scriptsize$<=<=<|<=$ &
$\langle\csuf,\cin, \cin \rangle$ & \scriptsize$<==|=$    \\

$\langle\cin, \cout,\cin \rangle$ & \scriptsize$<<<|<<$ &
$\langle\csuf,\cin, \cpre\rangle$ & \scriptsize$<=<|=$  \\

$\langle\cin, \cout,\cout\rangle$ & \scriptsize$<=|<$       &
$\langle\csuf,\cout,\cfac\rangle$ & \scriptsize$<<=><|<=|>$ \\

$\langle\cin, \cout,\cpre\rangle$ & \scriptsize$<<<=|<<$ &
$\langle\csuf,\cout,\cin \rangle$ & \scriptsize$<===|==$ \\

$\langle\cin, \cout,\csuf\rangle$ & \scriptsize$<=<=|<=$ &
$\langle\csuf,\cout,\cout\rangle$ & \scriptsize$<<=<|<=$ \\

$\langle\cin, \cpre,\cin \rangle$ & \scriptsize$<<=<|<<=|<$ &
$\langle\csuf,\cout,\cpre\rangle$ & \scriptsize$<===<|==$   \\

$\langle\cin, \cpre,\cpre\rangle$ & \scriptsize$<<=|<$    &
$\langle\csuf,\cout,\csuf\rangle$ & \scriptsize$<<=<=|<=$ \\

$\langle\cin, \csuf,\cfac\rangle$ & \scriptsize$<=<=|<$      &
$\langle\csuf,\cpre,\cin \rangle$ & \scriptsize$<<=<=|<=<|=$ \\

$\langle\cin, \csuf,\cin \rangle$ & \scriptsize$<=<<|=<<|<$  &
$\langle\csuf,\cpre,\cpre\rangle$ & \scriptsize$<<=<<|<=<|=$ \\

$\langle\cin, \csuf,\cpre\rangle$ & \scriptsize$<=<<=|=<<|<$ &
$\langle\csuf,\csuf,\cfac\rangle$ & \scriptsize$<<=<|=$      \\

$\langle\cin, \csuf,\csuf\rangle$ & \scriptsize$<<=|=$ &
$\langle\csuf,\csuf,\cin \rangle$ & \scriptsize$<=<|=$ \\

$\langle\cout,\cout,\cfac\rangle$ & \scriptsize$<<=<|=$      &
$\langle\csuf,\csuf,\cpre\rangle$ & \scriptsize$<<=<<|<=<|=$ \\

$\langle\cout,\cout,\cin \rangle$ & \scriptsize$<=|=$   &
$\langle\csuf,\csuf,\csuf\rangle$ & \scriptsize$<<==|=$ \\

$\langle\cout,\cout,\cout\rangle$ & \scriptsize$<<<<|<<<$ \\

\bottomrule
\end{tabular}
}
\caption{\label{table:feasible_types_and_witnesses}{\normalsize List of feasible types and associated regular expressions witnesses}
}
\end{table}
}

\clearpage
\section{Counterexamples for Equations \eqref{formula1}, \eqref{formula2} and~\eqref{formula3}}
\label{sec:counterexamples}

For each time-series constraint of the time-series constraint catalogue
this appendix provides small time series corresponding to counterexamples of the validity
of Equations \eqref{formula1}, \eqref{formula2} and~\eqref{formula3}
for all equations missing in Table~\ref{table:summary}.
For instance, for \constraint{nb\_decreasing\_sequence} and
Equation~\eqref{formula1}, we get the following counterexample:
consider the three windows of size $2$ wrt the sequence $\langle1,0,0,-1\rangle$;
using Equation~\eqref{formula1} returns $\langle1,{\color{red}1},1\rangle$
rather than the expected values $\langle1,0,1\rangle$,
i.e.~on the second subsequence ``$0,0$'', \eqref{formula1} returns $1+1-1={\color{red}1}$
rather than the expected value $0$;
Value $0$ reflects the fact that subsequence ``$0,0$'' does not contain
any decreasing sequence.\\ \\ \\

{\tiny

}

\end{document}